\definecolor{gray}{rgb}{0.5,0.5,0.5}
\newcommand{\eps}{\varepsilon}
\def\ceil#1{\lceil #1 \rceil}
\newcommand{\no}[1]{}
\newcommand{\arxiv}[1]{#1}
\newcommand{\conf}[1]{}
\newcommand{\todo}[1]{} 
\newtheorem{definition}{Definition}
\newtheorem{theorem}{Theorem}
\newtheorem{lemma}{Lemma}
\newenvironment{proof}{\trivlist\item[]\emph{Proof}:}%
{\unskip\nobreak\hskip 1em plus 1fil\nobreak$\Box$
\parfillskip=0pt%
\endtrivlist}
\newcommand{\cS}{{\cal S}}
\newcommand{\cT}{{\cal T}}
\newcommand{\cTA}{{\cal TA}}
\newcommand{\cA}{{\cal A}}
\newcommand{\cX}{{\cal X}}
\newcommand{\cY}{{\cal Y}}
\newcommand{\cV}{{\cal V}}
\newcommand{\cQ}{{\cal Q}}
\newcommand{\Suf}{\mathit{Suf}}
\newcommand{\sort}{\mathit{Sort}}
\newcommand{\oT}{{\overline T}}
\newcommand{\num}{\mathrm{num}}
\newcommand{\pocc}{\mathrm{pocc}}
\newcommand{\occ}{\mathrm{occ}}
\newcommand{\rank}{\mathrm{rank}}
\newcommand{\shrank}{\mathrm{shrank}}
\newcommand{\nmargin}{\mathrm{nmargin}}
\newcommand{\pmargin}{\mathrm{pmargin}}
\newcommand{\pred}{\mathrm{pred}}
\newcommand{\bits}{\mathrm{bits}}
\newcommand{\pos}{\mathrm{pos}}
\newcommand{\bitval}{\mathrm{bitval}}
\newcommand{\newsucc}{\mathrm{succ}}
\newcommand{\up}{\mathit{UP}}
\newcommand{\cI}{{\cal I}}
\newcommand{\bT}{\mathbb{T}}
\newcommand{\longver}[1]{}
\begin{document}

\title{Text Indexing and Searching in Sublinear Time}
\author{
 J. Ian Munro\thanks{Cheriton School of Computer Science, University of Waterloo. Email {\tt imunro@uwaterloo.ca, yakov.nekrich@googlemail.com}.}
 \and
 Gonzalo Navarro\thanks{IMFD --- Millennium Institute for Foundational Research
on Data, Department of Computer Science, University of Chile. Email {\tt
gnavarro@dcc.uchile.cl}. Funded by IMFD, Mideplan, Chile.}
\and 
 Yakov Nekrich$^*$
 }
\date{}

\maketitle

\begin{abstract}
We introduce the first index that can be built in $o(n)$ time for a text of
length $n$, and can also be queried in $o(q)$ time for a pattern of length $q$. 
On an alphabet of size $\sigma$, our index uses $O(n\sqrt{\log n\log\sigma})$
bits, is built in $O(n((\log\log n)^2+\sqrt{\log\sigma})/\sqrt{\log_\sigma n})$
deterministic time, and computes the number $\occ$ of occurrences of the 
pattern in time $O(q/\log_\sigma n+\log n)$.
Each such occurrence can then be found in $O(\sqrt{\log n\log\sigma})$ time.
By slightly increasing the space and construction time, to
$O(n(\sqrt{\log n\log\sigma}+ \log\sigma\log^\eps n))$ and 
$O(n\log^{3/2}\sigma/\log^{1/2-\eps} n)$, respectively, for any constant
$0<\eps<1/2$, we can find the $\occ$ pattern occurrences in time
$O(q/\log_\sigma n + \sqrt{\log_\sigma n}\log\log n + \occ)$.
We build on a novel text sampling based on difference covers, which enjoys 
properties that allow us efficiently computing longest common prefixes in
constant time. 
\arxiv{We extend our results to the secondary memory model as well,
where we give the first construction in $o(\sort(n))$ I/Os of a data structure
with suffix array functionality; this data structure supports pattern matching queries with optimal or nearly-optimal cost.} 
\end{abstract}

\thispagestyle{empty}


\newpage
\setcounter{page}{1}
\section{Introduction}
\label{sec:intro}

We address the problem of indexing a text $T[0..n-1]$, over alphabet 
$[0..\sigma-1]$, in {\em sublinear} time on a RAM machine of $w=\Theta(\log n)$ 
bits. This is not possible when we build a classical index (e.g., a suffix
tree \cite{Weiner73} or a suffix array \cite{MM93}) that requires 
$\Theta(n\log n)$ bits, since just writing the output takes time 
$\Theta(n)$. It is also impossible when $\log\sigma = \Theta(\log n)$ and thus 
just reading the $n\log\sigma$ bits of the input text takes time $\Theta(n)$.
On smaller alphabets (which arise frequently in practice, for example on DNA,
protein, and letter sequences), the sublinear-time indexing becomes possible
when the text comes packed in words of $\log_\sigma n$ characters and we
build a {\em compressed} index that uses $o(n\log n)$ bits.
For example, there exist various indexes that use $O(n\log\sigma)$ bits 
\cite{NM06} (which is asymptotically the best worst-case size we can expect 
for an index on $T$) and could be built, in principle, in time 
$O(n/\log_\sigma n)$. Still, only linear-time indexing in compressed space 
had been achieved \cite{BCGNNalgor13,BelazzouguiN15,MNN17,MNNisaac17} until 
the very recent result of Kempa and Kociumaka \cite{KK19}.

When the alphabet is small, one may also aim at RAM-optimal pattern search,
that is, count the number of occurrences of a (packed) string $Q[0..q-1]$ in $T$
in time $O(q/\log_\sigma n)$. There exist some classical indexes using 
$O(n\log n)$ bits and counting in time $O(q/\log_\sigma n+\mathrm{polylog}(n))$
\cite{NN17,BGS17}, as well as compressed ones \cite{MNNisaac17extended}.

In this paper we introduce the first index that can be {\em built and queried
in sublinear time}. Our index, as explained, is compressed. It uses 
$O(n\sqrt{\log n \log\sigma})$ bits, which is a geometric mean between
the classical and the best compressed size. It can be built in 
$O(n((\log\log n)^2+\sqrt{\log\sigma})/\sqrt{\log_\sigma n})$ deterministic 
time. This is $o(n)$ 
for small alphabets, $\log\sigma = o(\sqrt{\log n})$, which includes 
the important practical case
where $\sigma=O(\textrm{polylog}\,n)$. Our index also supports counting queries
in $o(q)$ time. To be precise, it counts in optimal time plus an additive 
logarithmic penalty, $O(q/\log_\sigma n + \log n)$. After counting the 
occurrences of $Q$, any such occurrence can be reported in 
$O(\sqrt{\log n\log\sigma})$ time. 

A slightly larger and slower-to-build variant of our index uses
$O(n(\sqrt{\log n \log\sigma}+\log\sigma\log^\eps n))$ bits for an
arbitrarily small constant $0<\eps<1/2$ and is built in time
$O(n\log^{3/2}\sigma/\log^{1/2-\eps} n)$. This index can report the $\occ$
pattern occurrences in time
$O(q/\log_\sigma n + \sqrt{\log_\sigma n}\log\log n + \occ)$.

\begin{table}[t]
\begin{center}
\small
\begin{tabular}{l|c|c|c}
Source & Construction time & Space (bits) & Query time (counting) \\
\hline
\textcolor{gray}{Classical \cite{Wei73,McC76,Ukk95,FFM00}}
	& \textcolor{gray}{$O(n)$}
	& \textcolor{gray}{$O(n\log n)$}
	& \textcolor{gray}{$O(q\log\sigma)$} \\
\textcolor{gray}{Cole et al.~\cite{CKL15}}
	& \textcolor{gray}{$O(n)$}
	& \textcolor{gray}{$O(n\log n)$}
	& \textcolor{gray}{$O(q+\log\sigma)$} \\
\textcolor{gray}{Fischer \& Gawrychowski \cite{FG13}}$\!\!$
	& \textcolor{gray}{$O(n)$}
	& \textcolor{gray}{$O(n\log n)$}
	& \textcolor{gray}{$O(q+\log\log\sigma)$} \\
Bille et al.~\cite{BGS17} 
   	& $O(n)$
	& $O(n\log n)$ 
	& $\!\!O(q/\log_\sigma n + \log q + \log\log\sigma)$ \\
\hline
\textcolor{gray}{Classical $+$ perfect hashing}
	& \textcolor{gray}{$O(n)$ randomized}
	& \textcolor{gray}{$O(n\log n)$}
	& \textcolor{gray}{$O(q)$} \\
Navarro \& Nekrich \cite{NN17}
	& $O(n)$ randomized 
	& $O(n\log n)$
	& $O(q/\log_\sigma n + \log_\sigma^\eps n)$ \\
\hline
\textcolor{gray}{Barbay et al.~\cite{BCGNNalgor13}}
	& \textcolor{gray}{$O(n)$}
	& \textcolor{gray}{$O(n\log\sigma)$}
	& \textcolor{gray}{$O(q\log\log\sigma)$} \\
\textcolor{gray}{Belazzougui \& Navarro \cite{BelazzouguiN15}}
	& \textcolor{gray}{$O(n)$} 
	& \textcolor{gray}{$O(n\log\sigma)$}
	& \textcolor{gray}{$O(q(1+\log_w \sigma))$} \\
\textcolor{gray}{Munro et al.~\cite{MNN17,MNN17extended}}
	& \textcolor{gray}{$O(n)$} 
	& \textcolor{gray}{$O(n\log\sigma)$}
	& \textcolor{gray}{$O(q+\log\log\sigma)$} \\
Munro et al.~\cite{MNNisaac17}
	& $O(n)$ 
	& $O(n\log\sigma)$
	& $O(q+\log\log_w \sigma)$ \\
Munro et al.~\cite{MNNisaac17extended}
	& $O(n)$ 
	& $O(n\log\sigma)$
	& $\!\!O(q/\log_\sigma n+ \log n(\log\log n)^2)$ \\
\hline
\textcolor{gray}{Belazzougui \& Navarro \cite{BelazzouguiN15}}
	& \textcolor{gray}{$O(n)$ randomized}
	& \textcolor{gray}{$O(n\log\sigma)$}
	& \textcolor{gray}{$O(q(1+\log\log_w \sigma))$} \\
Belazzougui \& Navarro \cite{BelazzouguiN14}
	& $O(n)$ randomized 
	& $O(n\log\sigma)$
	& $O(q)$ \\
\hline
Kempa and Kociumaka \cite{KK19}
	& $O(n\log\sigma/\sqrt{\log n})$
	& $O(n\log\sigma)$
	& $O(q(1+\log_w\sigma))$ \\
\hline
\textbf{Ours}
	& $\!\!O\left(n\,\frac{(\log\log
n)^2+\sqrt{\log\sigma}}{\sqrt{\log_\sigma n}}\right)\!\!$
	& $\!\!O(n\sqrt{\log n \log\sigma})\!\!$
	& $O(q/\log_\sigma n + \log n)$ \\
\end{tabular}
\end{center}
\caption{Previous and our results for index construction on a text
of length $n$ and a search pattern of length $q$, over an alphabet of size
$\sigma$, on a RAM machine of $w$ bits, for any constant $\eps>0$. Grayed 
rows are superseded by a more recent result in all aspects we consider. Note
that $O(n)$-time randomized construction can be replaced by $O(n(\log\log n)^2)$
deterministic constructions \cite{Ruz08}.}
\label{tab:compar}
\end{table}

As a comparison (see Table~\ref{tab:compar}), the other indexes that count
in time $O(q/\log_\sigma n + \textrm{polylog}(n))$ use either more space 
($O(n\log n)$ bits) and/or construction time ($O(n)$)
\cite{BGS17,NN17,MNNisaac17extended}.
The indexes using less space, on the other hand, use as little as
$O(n\log \sigma)$ bits but are slower to build and/or to query
\cite{MNN17,MNN17extended,MNNisaac17,MNNisaac17extended,BelazzouguiN14,KK19}.
A recent construction \cite{KK19} is the only one able
to build in sublinear time ($O(n\log\sigma/\sqrt{\log n})$ time)
and to use compressed space ($O(n\log\sigma)$ bits, less than ours), but
it is still unable to search in $o(q)$ time.
Those compressed indexes can then deliver each occurrence in
$O(\log^\eps n)$ time, or even in $O(1)$ time if a structure of 
$O(n\log^{1-\eps}\sigma\log^\eps n)$ further bits is added, though there is
no sublinear-time construction for those extra structures either
\cite{Rao02,GNF14}.

Our technique is reminiscent to the Geometric BWT \cite{Chien2015}, where a
text is sampled regularly, so that the sampled positions can be indexed with
a suffix tree in sublinear space. In exchange, all the possible alignments of the 
pattern and the samples have to be checked in a two-dimensional range search 
data structure. To speed up the search, we use instead an irregular sampling 
that is dictated by a {\em difference cover}, which guarantees that for any two
positions there is a small value that makes them sampled if we shift both by
the value. This enables us to compute in constant time the longest common prefix
of any two text positions with only the sampled suffix tree. With this 
information we can efficiently find the locus of each alignment from the 
previous one.
Difference covers were used in the past for suffix tree construction 
\cite{KSB06,GK17}, but never for improving query times.


\section{Preliminaries and Difference Covers}
\label{sec:diffcov}

We denote by $|S|$ the number of symbols in a sequence $S$ or the number of elements in a set $S$. For two strings $X$ and $Y$, $LCP(X,Y)$ denotes the longest common prefix of $X$ and $Y$. For a string $X$ and a set of strings $\cS$, $LCP(X,\cS)=\max_{Y \in \cS} LCP(X,Y)$, where we compare lengths to take the maximum. We assume that the concepts associated with
suffix trees \cite{Weiner73} are known. We describe in more detail the
concept of a difference cover.

\begin{definition}
Let $D=\{\,a_0,a_1,\ldots, a_{t-1}\,\}$ be a set of $t$ integers satisfying $0\le a_i\le s$. We say that $D$ is a \emph{difference cover modulo $s$} of size $t$, denoted $DC(s,t)$, if for every  $d$ satisfying $1\le d\le s-1$ there is a pair $0\le i,j\le t-1$ such that $d=(a_i-a_j)\!\!\mod s$. 
\end{definition}

Colburn and Ling~\cite{ColbournL00} describe a difference cover $DC(\Theta(r^2),\Theta(r))$ for any positive integer $r$ that is based on a result of Wichmann~\cite{Wichmann63}. Consider a sequence $b_1,\ldots, b_{6r+3}$ where 
$b_i=1$ for $1\le i\le r$, $b_{r+1}=r+1$,  $b_i=2r+1$ for $r+2\le i \le 2r+1$, 
$b_i=4r+3$ for $2r+2\le i \le 4r+2$, $b_i=2r+2$ for $4r+3\le i \le 5r+ 3$, 
and $b_i=1$ for $5r+4\le i \le 6r+3$. We set  $a_0=0$ and $a_i=a_{i-1}+b_i$ for  $i=1,\ldots, 6r+3$.
\begin{lemma}\cite{ColbournL00}
\label{lemma:dc1}
  The set $A=\{\,a_0,\ldots,a_{6r+3}\,\}$ is a difference cover $DC(12r^2+ 18r+ 6, 6r+4)$.
\end{lemma}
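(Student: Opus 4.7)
The plan has two parts. First, I would verify the arithmetic: summing the increments $b_1, b_2, \ldots, b_{6r+3}$ in the six blocks of the construction yields
\[ a_{6r+3} = r + (r+1) + r(2r+1) + (2r+1)(4r+3) + (r+1)(2r+2) + r = 12r^2 + 18r + 6 = s, \]
which confirms $A \subseteq [0,s]$, and since the $a_i$ are strictly increasing, $|A|=6r+4$. Along the way I would record closed forms for $a_i$ within each of the six blocks, since these will be needed to compute differences below.

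Second, I need to show that every $d \in \{1,\ldots,s-1\}$ arises as a difference modulo $s$. Since $a_j - a_i \equiv s - (a_i - a_j)\pmod{s}$, it suffices to produce, for every $d$ with $1 \le d \le s/2$, indices $j \le i$ with $a_i - a_j = d$. Any such difference is a contiguous partial sum $b_{j+1} + b_{j+2} + \cdots + b_i$ of the increment sequence, so the task reduces to showing that the Wichmann pattern
\[ 1^r,\; (r+1),\; (2r+1)^r,\; (4r+3)^{2r+1},\; (2r+2)^{r+1},\; 1^r \]
has every integer in $[1,s/2]$ as some contiguous subsum.

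I would then proceed by a case analysis on the interval containing $d$. Small differences $d \le r$ come from the leading $1^r$ block; differences in $[r+1, 2r+1]$ use the solitary $r+1$ together with neighboring $1$'s; differences up to about $r(2r+1)$ are built by concatenating copies of $2r+1$ and filling the remainder with $1$'s drawn from either end; larger differences combine a prefix of the $(4r+3)^{2r+1}$ or $(2r+2)^{r+1}$ blocks with corrections from the adjacent blocks. The main obstacle is the bookkeeping in the middle range: a target $d$ of order $r^2$ must be decomposed as $\alpha(4r+3)+\beta(2r+1)+\gamma(2r+2)+\delta$ with $\alpha,\beta,\gamma$ bounded by the corresponding block multiplicities and $|\delta|\le r$, and the block lengths of the Wichmann sequence are precisely tuned so that the admissible ranges overlap and cover every target. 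Since this is a classical verification carried out in detail in \cite{Wichmann63,ColbournL00}, I would conclude by appealing to those references for the complete case analysis.
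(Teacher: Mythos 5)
Your proposal is correct and matches the paper, which states this lemma purely as a citation to Colbourn--Ling (and Wichmann) without giving any proof of its own: your arithmetic check that the increments sum to $r+(r+1)+r(2r+1)+(2r+1)(4r+3)+(r+1)(2r+2)+r=12r^2+18r+6$ is right, as is the reduction of the covering claim to contiguous subsums of the increment sequence via the symmetry $d\leftrightarrow s-d$. Deferring the remaining case analysis to the cited references is exactly what the paper does.
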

The following well-known property of difference covers will be extensively used in our data structure. 
\begin{lemma}
\label{lemma:dc2}
  Let $D=\{\,a_0,\ldots,a_{t-1}\,\}$  be a difference cover modulo $s$. Then 
for any $1\le i,j \le s$  there exists a non-negative integer  $h(i,j)< s$ such that 
$(i+h(i,j)) \!\!\mod s \in D$ and $(j+h(i,j)) \!\!\mod s \in D$. If $D$ is known, we can compute $h(i,j)$ for all $i,\, j$ in time $O(s^2)$.  
\end{lemma}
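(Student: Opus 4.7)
The plan is to argue existence directly from the definition of a difference cover, then give a simple $O(s^2)$-time algorithm based on a small precomputed lookup table.

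For the existence of $h(i,j)$, I would rewrite the two required conditions. If $i+h\equiv a_\alpha\pmod s$ and $j+h\equiv a_\beta\pmod s$ for some $a_\alpha,a_\beta\in D$, then subtracting gives $a_\beta-a_\alpha\equiv j-i\pmod s$. So I would set $d=(j-i)\bmod s$ and distinguish two cases. If $d\ne 0$, Definition~1 guarantees indices $\alpha,\beta$ with $a_\beta-a_\alpha\equiv d\pmod s$. If $d=0$, I can take $\alpha=\beta$ arbitrarily (e.g., $a_\alpha=a_0$). In either case I would set $h=h(i,j)=(a_\alpha-i)\bmod s\in[0,s-1]$, and verify that $i+h\equiv a_\alpha\pmod s$ and $j+h\equiv (j-i)+a_\alpha\equiv(a_\beta-a_\alpha)+a_\alpha\equiv a_\beta\pmod s$, both in $D$ as required. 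This also confirms $h<s$.

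For the algorithmic claim, I would precompute a table $T[0..s-1]$ where $T[d]$ stores one witness pair $(\alpha,\beta)$ with $a_\beta-a_\alpha\equiv d\pmod s$ (and $T[0]=(0,0)$). The table is filled by scanning every ordered pair $(a_\alpha,a_\beta)\in D\times D$ and, if $T[(a_\beta-a_\alpha)\bmod s]$ is still empty, writing the witness there. This costs $O(t^2)$ time and $O(s)$ space, and since $t\le s$, the cost is absorbed by the $O(s^2)$ bound. Then, for every pair $1\le i,j\le s$, I compute $d=(j-i)\bmod s$, look up $(\alpha,\beta)=T[d]$, and output $h(i,j)=(a_\alpha-i)\bmod s$ in constant time, for $O(s^2)$ overall.

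There is no genuine obstacle here: the only subtlety is being careful with the direction of the difference when translating "$a_\alpha-a_\beta\equiv d$" (Definition~1) into the condition "$a_\beta-a_\alpha\equiv j-i$" that the construction of $h$ actually needs, and handling the $d=0$ case (covering $i\equiv j\pmod s$) separately since Definition~1 only asserts the property for $1\le d\le s-1$.
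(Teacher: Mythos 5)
Your proof is correct and follows essentially the same route as the paper's: pick a witness pair $(a_\alpha,a_\beta)$ with $a_\beta-a_\alpha\equiv(j-i)\pmod s$ and set $h(i,j)=(a_\alpha-i)\bmod s$, precomputing one witness per residue to get the $O(s^2)$ bound. You are slightly more careful than the paper in two minor respects---explicitly handling the $d=0$ case (which the definition of a difference cover does not cover) and spelling out the $O(t^2)$ table-filling step---but these are refinements of the same argument, not a different approach.
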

\begin{proof}
  For any $x$ there exists a pair $f_x\in D$, $e_x\in D$ satisfying $e_x-f_x=x\!\!\mod s$ by definition of a difference cover. Let $d[x]=f_x$. Then both $d[x]$ and $(d[x]+x)\!\!\mod s$ are in $D$. 
Let $h(i,j)=(d[(j-i)\!\!\mod s]-i)\!\!\mod s$. Then $i+h(i,j) \!\!\mod s=d[(j-i)\!\!\mod s]\in D$ and $j+h(i,j) \!\!\mod s =d[(j-i)\!\!\mod s]+ (j-i)\!\!\mod s\in D$. 
\end{proof}

\section{Data Structure}
\label{sec:datastruc}

We divide the text $T[0..n-1]$ into blocks of $\Theta(\log_{\sigma} n)$ consecutive symbols. To be precise, every block consists of $s=12r^2+18r +6$ symbols where  $r=\Theta(\sqrt{\log_{\sigma }n})$ for an appropriately chosen small constant. 
To avoid tedious details, we assume that $\sqrt{\log_{\sigma} n}$ is an integer and that the text length is divisible by $s$.  We select $6r+4$ positions in every block that correspond to 
the difference cover of Lemma~\ref{lemma:dc1}. That is, all positions $si+a_j$ for $i=0,1,\ldots,(n/s)-1$ and $0\le j\le 6r+3$ are selected. The total number of selected positions is $O(n/r)$. 
The set $\cS'$ consists of all suffixes starting at selected positions. 
A substring  between two consecutive selected symbols $T[si+a_j..si+a_{j+1}-1]$ is called a \emph{sub-block}; note that a sub-block may be as long as $4r+3$. Our data structure consists of the following three components. 
\begin{enumerate}
\item 
  The suffix tree $\cT'$ for suffixes starting at selected positions, which uses $O((n/r)\log n)=O(n\sqrt{\log n\log\sigma})$ bits. Thus $\cT'$ is a compacted trie for the suffixes in  $\cS'$. Suffixes are represented as strings of meta-symbols where every meta-symbol corresponds to a substring of $\log_{\sigma} n$ consecutive symbols. Deterministic dictionaries are used at the nodes to descend 
by the meta-symbols in constant time. Predecessor structures are also used
at the nodes, to descend when less than a metasymbol of the pattern is left.
Given a pattern $Q$, we can identify all selected suffixes starting with $Q$ in $O(|Q|/\log_{\sigma} n)$ time, plus an $O(\log\log n)$ additive term coming
from the predecessor operations at the deepest node.
\item
  A data structure on a set $\cQ$ of points. Each point of $\cQ$ corresponds to a pair $(ind_i, rev_i)$ for $i=1,\ldots, (n/s)-1$ where $ind_i$ is the index of the $i$-th selected suffix of $T$ in the lexicographically sorted set $\cS'$ and $rev_i$ is an integer that corresponds to the reverse sub-block preceding that $i$-th selected suffix in $T$.  Our data structure supports two-dimensional range counting and reporting queries on $\cQ$, exploiting the fact that the $y$-coordinates are in the range $[0..\sigma^{4r+3}]$.
\item
	A data structure for \emph{suffix jump} queries on $\cT'$. Given a string $Q[0..q-1]$, its locus
node $u$, and a positive integer $i\le 4r+3$, a (suffix) $i$-jump query returns the locus of $Q[i..q-1]$, or it says that $Q[i..q-1]$ does not prefix any string in $\cS'$.
 The suffix jump structure has essentially the same functionality as the suffix links, but we do not store suffix links explicitly in order to save space and improve the construction time.
\end{enumerate}

As described, $\cT'$ is a trie over an alphabet of meta-symbols corresponding to strings of length $\log_{\sigma}n$. Therefore, a string $Q$ does not necessarily have a single locus node. We will then denote its locus as $u[l..r]$, where $u \in \cT'$ is the deepest node whose string label is a prefix of $Q$ and $[l..r]$ is the maximal interval such that the string labels of the children $u_l,\ldots,u_r$ of $u$ are prefixed by $Q$. 

Using our structure, we can find all the occurrences in $T$ of a pattern $Q[0..q-1]$ whenever $q> 4r+3$. Occurrences of $Q$ are classified according to their positions relative to selected symbols. An occurrence $T[f..f+q-1]$ of $Q$ is an $i$-occurrence if $T[f+i]$ (corresponding to the $i$-th symbol of $Q$) is the leftmost selected symbol in $T[f..f+q-1]$.

First, we identify all $0$-occurrences by looking for $Q$ in $\cT'$:  We
traverse the path corresponding to $Q$ in $\cT'$ to find $Q_0=LCP(Q,\cS')$,
the longest prefix of $Q$ that is in $\cT'$, with locus $u_0[l_0..r_0]$.
Let $q_0 =|Q_0|$; if $q_0=q$, then $u_0[l_0..r_0]$ is the locus of $Q$ and we
count or report all its $0$-occurrences as the positions of suffixes in the subtrees of $u_0[l_0..r_0]$.\footnote{For fast counting, each node may also store the cumulative sum of its preceding siblings.} If $q_0<q$, there are no $0$-occurrences of $Q$. 

Next, we compute a $1$-jump from $u_0$ to find the locus of
$Q_0[1..]=Q[1..q_0-1]$ in $\cT'$. If the locus does not exist, then there are
no 1-occurrences of $Q$. If it exists, we traverse the path in $\cT'$ starting
from that locus. Let $Q_1=Q[1..q_1-1] = LCP(Q[1..q-1],\cS')$ be the longest prefix of $Q[1..q-1]$ found in $\cT'$, with locus
$u_1[l_1..r_1]$. If $q_1<q-1$, then again there are no 1-occurrences of $Q$. If $q_1=q-1$, then $u_1[l_1..r_1]$ is the locus of $Q[1..q-1]$. In this case, every $1$-occurrence of $Q$ corresponds to an occurrence of $Q_1$ in $T$ that is preceded by
$Q[0]$. We can identify them by answering a two-dimensional range query $ [ind_1,ind_2]\times [rev_1,rev_2]$ where $ind_1$ ($ind_2$) is the leftmost (rightmost) leaf in the subtrees of $u_1[l_1..r_1]$ and $rev_1$ ($rev_2$) is the smallest (largest) integer value of any reverse sub-block that starts with $Q[0]$. 

In order to avoid technical complications we maintain four data structures for range reporting queries. Points representing 
(suffix, sub-block) pairs are classified according to the sub-block size.  If
a selected suffix $T[f..]$ is preceded by another selected suffix $T[f-1]$,
then we do not need to store a point for this suffix. In all other cases the
size of the sub-block that precedes a suffix is either $r$ or $2r$ or $4r+2$
or $2r+1$. We assign the point representing the (sub-block,suffix) pair to one
of the four data structures. When we want to count or report $i$-occurrences of $Q$, we
query up to four data structures, that is, all those storing sub-blocks
of length $\ge i$.

We proceed and consider $i$-occurrences for $i=2,\ldots, 4r+3$ using the same
method. Suppose that we have already considered the possible $j$-occurrences
of $Q$ for $j=0,\ldots,i-1$. Let $t$ be such that $q_t=\max(q_0,\ldots,
q_{i-1})$, considering only the loci that exist in $\cT'$. We compute the
$(i-t)$-jump from $u_t$, where $u_t[l_t..r_t]$ is the locus of
$Q_t=Q[t..q_t-1]=LCP(Q[t..q-1],\cS')$. If $Q[i..q_t-1]$ is found in $\cT'$,
with locus $u[l..r]$, but $q_t<q-1$, we traverse from $u$ downwards to
complete the path for $Q[i..q-1]$. We then find the locus $u_i[l_i..r_i]$ of $Q[i..q_i-1]=LCP(Q[i..q-1],\cS')$. If $q_i=q$, then $Q[i..q-1]$ is found, so we count or report all $i$-occurrences by answering a two-dimensional query as described above.  

\paragraph{Analysis.}
The total time is $O(q/\log_{\sigma}n+r(1+t_q+t_s))$, where $t_q$ and $t_s$ are
the times to answer a range query and to compute a suffix jump, respectively.
All the downward steps in the suffix tree amortize to $O(q/\log_\sigma n + r)$:
we advance $q_t$ by $\log_\sigma n$ in each downward step, but before
taking each suffix jump we can reduce $q_t$ by less than $\log_\sigma n$ because
we take it from $u_t$, whereas the actual locus with string depth $q_t$ is 
$u_t[l_t..r_t]$. As said, the suffix tree (point 1) uses 
$O(n\sqrt{\log n\log\sigma})$ bits.

The data structure of point 2 is a wavelet tree \cite{Chazelle88,GGV03,Navjda13}
built on $t=O(n/r)$ points. Its height is the logarithm of the $y$-coordinate
range, $h=\log(\sigma^{4r+3}) = O(r\log\sigma) = O(\sqrt{\log n\log\sigma})$,
and it uses $O(t\cdot h) = O(n\log\sigma)$ bits. 
Such structure answers range counting queries in time $t_q = O(h) = 
O(\sqrt{\log n\log\sigma})$, thus $r \cdot t_q = O(\log n)$, and reports each
point in the range in time $O(h) = O(\sqrt{\log n\log\sigma})$. 

In Sections~\ref{sec:jumps} and \ref{sec:shortjump} we show how to
implement all the $r$ suffix jumps (point 3) in time $r \cdot t_s = 
O(q/\log_\sigma n + r\log\log n)$ time, with a structure that uses
 $O(n\sqrt{\log n\log\sigma})$ bits of space.

Section~\ref{sec:construction} shows that the construction time
of the structures of point 1 is $O(n(\log\log n)^2/\sqrt{\log_\sigma n})$ and
of point 3 is $O(n/\sqrt{\log_\sigma n})$. The wavelet tree of point 2
can be built in time 
$O(t \cdot h / \sqrt{\log t}) = O(n\log\sigma/\sqrt{\log n})$ \cite{MNV16,BabenkoGKS15}. 

Finally, since a pattern shorter than $4r+4$ may not cross a sub-block boundary and thus we could miss some occurrences, Section~\ref{sec:small} describes a special index for those small patterns. Its space and construction
time is also within those of point 3.  This yields our first result.

\begin{theorem}
  \label{theor:ram0}
Given a text $T$ of length $n$ over an alphabet of size $\sigma$, we can build
an index using $O(n\sqrt{\log n \log\sigma})$ bits in deterministic
time $O(n((\log\log n)^2+\sqrt{\log\sigma}) / \sqrt{\log_\sigma n})$, so that it can count the number of occurrences of a pattern of length $q$ in time 
$O(q/\log_\sigma n + \log n)$, and report each such occurrence in time $O(\sqrt{\log n \log\sigma})$.
\end{theorem}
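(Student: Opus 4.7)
The plan is to assemble the three components described in Section~\ref{sec:datastruc} (the sampled suffix tree $\cT'$, the two-dimensional wavelet tree on $\cQ$, and the suffix jump structure), together with the separate small-pattern index promised in Section~\ref{sec:small}, and verify that the claimed bounds follow directly from the analysis already sketched. Concretely, I would first account the space. The sampled tree $\cT'$ has $O(n/r)$ leaves, each storing pointers of $O(\log n)$ bits, giving $O(n\sqrt{\log n\log\sigma})$ bits since $r=\Theta(\sqrt{\log_\sigma n})$. The wavelet tree on $\cQ$ has height $h=O(r\log\sigma)=O(\sqrt{\log n\log\sigma})$ over $O(n/r)$ points, for a total of $O(n\log\sigma)$ bits, which is subsumed. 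The suffix jump structure will be shown in Sections~\ref{sec:jumps}--\ref{sec:shortjump} to use $O(n\sqrt{\log n\log\sigma})$ bits, and the small-pattern index of Section~\ref{sec:small} will fit within the same budget. Summing yields the claimed $O(n\sqrt{\log n\log\sigma})$ bits.

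For construction, I would sum the three component costs cited at the end of Section~\ref{sec:datastruc}: $O(n(\log\log n)^2/\sqrt{\log_\sigma n})$ for $\cT'$, $O(n\log\sigma/\sqrt{\log n})=O(n\sqrt{\log\sigma}/\sqrt{\log_\sigma n})$ for the wavelet tree, and $O(n/\sqrt{\log_\sigma n})$ for the suffix jumps (and also the small-pattern index). The result is $O(n((\log\log n)^2+\sqrt{\log\sigma})/\sqrt{\log_\sigma n})$.

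For counting, I would invoke the amortized running time $O(q/\log_{\sigma}n+r(1+t_q+t_s))$ derived in the analysis paragraph, noting that the sampled suffix tree descent over all $i=0,\ldots,4r+3$ telescopes because each suffix jump can lose at most $\log_\sigma n-1$ characters of progress. Plugging in $t_q=O(h)=O(\sqrt{\log n\log\sigma})$ gives $r\cdot t_q=O(\log n)$; plugging in the bound $r\cdot t_s=O(q/\log_\sigma n+r\log\log n)$ from Sections~\ref{sec:jumps}--\ref{sec:shortjump} contributes a further $O(q/\log_\sigma n)+O(\sqrt{\log_\sigma n}\log\log n)$, which is absorbed into $O(q/\log_\sigma n+\log n)$. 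Short patterns ($q\le 4r+3$) are handled by the separate Section~\ref{sec:small} index within the same time. Finally, each occurrence is reported by the wavelet tree in $O(h)=O(\sqrt{\log n\log\sigma})$ time.

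The main point requiring care is not a deep argument but the bookkeeping of the amortized descent: I would check that at every suffix jump the drop in matched depth $q_t-q_i$ is bounded by the size of a metasymbol, so that across the at most $4r+3$ phases the total downward traversal in $\cT'$ remains $O(q/\log_\sigma n+r)$ and the additive $\log n$ term absorbs both the $r\cdot t_q$ penalty and the $r\log\log n$ penalty coming from predecessor searches at the deepest nodes. Once this is verified, the theorem follows by simply collecting the contributions component by component.
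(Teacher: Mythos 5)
Your proposal is correct and follows essentially the same route as the paper: it assembles the same three components plus the small-pattern index, uses the same space and construction accounting, and invokes the same amortization of the downward traversal (losing less than one metasymbol per suffix jump) together with the bounds $r\cdot t_q=O(\log n)$ and $r\cdot t_s=O(q/\log_\sigma n+r\log\log n)$ to obtain the query time. The bookkeeping you flag (the telescoping descent and the absorption of the $r\log\log n$ term into $O(\log n)$) is exactly the content of the paper's Analysis paragraph, so nothing is missing.
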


We can improve the time of reporting occurrences by slightly increasing the construction time. 
Appendix~\ref{sec:rangerep} shows how to construct a 
range reporting data structure (point 2) that, after $t_q=O(\log\log n)$ time, 
can report each occurrence in constant time. The space of this structure is 
$O(n\log\sigma\log^\eps n)$ bits and its construction time is 
$O(n\log^{3/2}\sigma / \log^{1/2-\eps} n)$, for any constant $0<\eps<1/2$. This yields our second result, which is of interest for reporting the pattern occurrences.

\begin{theorem}
  \label{theor:ram1}
Given a text $T$ of length $n$ over an alphabet of size $\sigma$, we can build
an index using $O(n(\sqrt{\log n \log\sigma}+\log\sigma\log^\eps n))$ bits in 
time $O(n\log^{3/2}\sigma / \log^{1/2-\eps} n)$, for any constant $0<\eps<1/2$,
so that it can report the $\occ$ occurrences of a pattern of length $q$ in time 
$O(q/\log_\sigma n + \sqrt{\log_\sigma n}\log\log n + \occ)$.
\end{theorem}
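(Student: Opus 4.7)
The plan is to reuse the entire construction of Theorem~\ref{theor:ram0} but replace the wavelet tree of point~2 with the faster range reporting data structure promised by Appendix~\ref{sec:rangerep}. That alternative structure is built on the same point set $\cQ$ of $O(n/r)$ points, uses $O(n\log\sigma\log^\eps n)$ bits, is constructed in $O(n\log^{3/2}\sigma/\log^{1/2-\eps}n)$ time, and answers a two-dimensional range reporting query with a setup cost of $t_q=O(\log\log n)$ followed by $O(1)$ time per reported point. Points~1 and~3 (the suffix tree $\cT'$ and the suffix-jump structure) and the small-pattern index of Section~\ref{sec:small} are inherited unchanged; note that we still need four copies, one per sub-block length, but this does not alter the asymptotic space or construction time.

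For the query time, I would retrace the algorithm of Section~\ref{sec:datastruc}. The downward traversal in $\cT'$ together with the $O(r)=O(\sqrt{\log_\sigma n})$ suffix jumps already costs $O(q/\log_\sigma n + r\log\log n)$. The algorithm then issues at most $O(r)$ range queries, one per $i$-occurrence class, contributing $O(r\cdot t_q)=O(\sqrt{\log_\sigma n}\log\log n)$ extra time. Finally, the $\occ$ pattern occurrences are reported in $O(1)$ time each. Summing gives the claimed bound $O(q/\log_\sigma n + \sqrt{\log_\sigma n}\log\log n + \occ)$.

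For space and construction, I would simply add the contributions of the three points. The $O(n\sqrt{\log n\log\sigma})$ bits of points~1 and~3 combined with the $O(n\log\sigma\log^\eps n)$ bits of the new range structure yield the stated $O(n(\sqrt{\log n\log\sigma}+\log\sigma\log^\eps n))$ bits. For construction, a short comparison shows that the $O(n\log^{3/2}\sigma/\log^{1/2-\eps}n)$ cost of the new range reporting structure dominates both the $O(n(\log\log n)^2/\sqrt{\log_\sigma n})$ cost of building $\cT'$ and the $O(n/\sqrt{\log_\sigma n})$ cost of the suffix-jump structure (for $\sigma\ge 2$ and sufficiently large $n$), so it governs the overall bound.

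The main obstacle is not in this statement itself but in Appendix~\ref{sec:rangerep}: producing a range reporting structure over $O(n/r)$ points with $y$-coordinates in $[0\mathinner{..}\sigma^{4r+3}]$ that simultaneously achieves the claimed $O(\log\log n)$ setup, $O(1)$ per-point reporting, $O(n\log\sigma\log^\eps n)$ bits, and sublinear construction time in $O(n\log^{3/2}\sigma/\log^{1/2-\eps}n)$. Once that black box is available, Theorem~\ref{theor:ram1} follows by direct substitution into the framework of Theorem~\ref{theor:ram0} and the bookkeeping outlined above.
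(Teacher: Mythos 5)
Your proposal matches the paper's own argument: Theorem~\ref{theor:ram1} is obtained exactly by substituting the range reporting structure of Appendix~\ref{sec:rangerep} (Theorem~\ref{thm:ramrep}) for the wavelet tree in point~2 of the construction behind Theorem~\ref{theor:ram0}, and your bookkeeping of the query time ($O(q/\log_\sigma n + r\log\log n)$ for traversal and jumps, $O(r\cdot t_q)$ for the range queries, $O(1)$ per reported occurrence), space, and the dominance of the $O(n\log^{3/2}\sigma/\log^{1/2-\eps}n)$ construction cost is the same as the paper's. You are also right that the only substantive content lies in Appendix~\ref{sec:rangerep}, which both you and the paper invoke as a separate result.
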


\section{Suffix Jumps}
\label{sec:jumps}

Now we show how suffix jumps can be implemented. The solution described in
this section takes $O(\log n)$ time per jump and is used when $|Q|\ge \log^3 n$.
This already provides us with an optimal solution because, in this case, the 
time of the $r$ suffix jumps, $O(\log n \sqrt{\log_\sigma n})$, is
subsumed by the time $O(q/\log_\sigma n)$ to traverse the pattern. 
In the next section we describe an appropriate method for short patterns.

Given a substring $Q_t[0..q_t-1]$ of the original query $Q$, with known locus
node $u_t$, we need to find the locus of $Q_t[i..]$ or determine that it does 
not exist. 
Note that our answer is of the form $v[l..r]$. Once the node $v$ is known, we can compute $[l..r]$ in $O(\log\log n)$ time with predecessor queries, which will not impact in the time to compute a suffix jump. Therefore, we focus in how to
compute $v$.
Our solution is based on using the properties of difference covers.
We start by showing how to efficiently compute certain $LCP$ values.


\begin{lemma}
  \label{lemma:lcp}
Let $S_1=T[f_1..]$ and $S_2=T[f_2..]$ be two suffixes from the set $\cS'$.
We can compute $|LCP(T[f_1+i..],S_2)|$ for any $1\le i \le 4r+3$ in $O(1)$ time.
\end{lemma}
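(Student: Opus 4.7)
The plan is to reduce the computation of $|LCP(T[f_1+i..],S_2)|$ to an LCP query between two \emph{selected} suffixes, which we already know how to answer via $\cT'$. The reduction uses the difference cover to find a small shift $h$ such that, after advancing both positions by $h$, we land on selected positions in both suffixes.

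First I would invoke Lemma~\ref{lemma:dc2} with $D$ being the difference cover used to pick the selected positions. Set $i'=(f_1+i)\!\!\mod s$ and $j'=f_2\!\!\mod s$, and retrieve in $O(1)$ the value $h=h(i',j')<s$ from the precomputed table (which has $O(s^2)=O(\log_\sigma n)$ entries). By Lemma~\ref{lemma:dc2}, both $(f_1+i+h)\!\!\mod s$ and $(f_2+h)\!\!\mod s$ belong to $D$; combined with the fact that $f_1$ and $f_2$ themselves start at selected positions, this means $T[f_1+i+h..]$ and $T[f_2+h..]$ are both suffixes in $\cS'$.

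Next I would split the LCP computation into two parts. For the first $h$ characters, I compare $T[f_1+i..f_1+i+h-1]$ with $T[f_2..f_2+h-1]$. Since $h<s=O(\log_\sigma n)$, these two strings occupy $h\log\sigma=O(\log n)$ bits and hence fit in $O(1)$ machine words; reading them from the packed text, XOR-ing, and locating the first differing character takes constant time via standard word-RAM tricks. If a mismatch is found at position $k<h$, the answer is $k$; otherwise $|LCP(T[f_1+i..],S_2)|=h+|LCP(T[f_1+i+h..],T[f_2+h..])|$, and we have reduced to an LCP between two suffixes in $\cS'$.

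The main subtlety, and the step that needs the most care, is computing an LCP between two suffixes of $\cS'$ in $O(1)$ time, because $\cT'$ is a compacted trie over meta-symbols of length $\log_\sigma n$, so the LCA of the two leaves only yields the shared prefix measured in whole meta-symbols. With $\cT'$ augmented by a constant-time LCA structure, the LCA $u$ of the two leaves gives a character depth $d\cdot\log_\sigma n$, where $d$ is the meta-symbol depth of $u$; the two suffixes then disagree on their $(d{+}1)$-th meta-symbols, which again fit in one word each. Fetching these two meta-symbols from the packed text, XOR-ing them, and locating the first differing character in $O(1)$ yields the remaining shared characters, and adding everything up gives the exact character-level LCP in constant time, which completes the proof.
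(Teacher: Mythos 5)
Your proof is correct and follows essentially the same route as the paper's: use the difference-cover shift to land both positions on selected suffixes, handle the first $h<s$ characters by a packed word comparison, and finish with a constant-time LCA query in $\cT'$ using stored pointers from sampled positions to leaves. You are in fact slightly more careful than the paper on two points---you choose the shift as $h((f_1+i)\bmod s,\, f_2\bmod s)$ where the paper writes $h(0,i)$, and you explicitly refine the LCA answer from meta-symbol granularity to character granularity with one extra word comparison---both of which are valid and fill in details the paper glosses over.
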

\begin{proof}
  Let $\delta=h(0,i)$ where the function $h(j,i)$ is as defined in Lemma~\ref{lemma:dc2}. By definition, both $T[f_1+i+\delta..]$ and $T[f_2+\delta..]$ are in $\cS'$. We can find their corresponding leaf in $\cT'$ by storing a pointer to
$\cT'$ for each sampled text position, which requires other
$O(n\sqrt{\log n\log\sigma})$ bits. Hence we can compute 
$\ell=|LCP(T[f_1+i+\delta..], T[f_2+\delta..])|$ in $O(1)$ time with a lowest
common ancestor query \cite{BFCPSS05}. 
Since $\delta\le s = O(\log_\sigma n)$,  the
substrings $T[f_1+i..f_1+i+\delta-1]$ and $T[f_2..f_2+\delta-1]$ fit into
$O(1)$ words. Hence, we can compute the length $\ell'$ of their  longest common prefix in $O(1)$ time. If $\ell'< \delta$, then $|LCP(T[f_1+i..],S_2)|=\ell'$. Otherwise 
$|LCP(T[f_1+i..],S_2)|=\delta+\ell$. 
\end{proof}

We compute the locus  of $Q_t[i..]$  by applying Lemma~\ref{lemma:lcp} $O(\log n)$
times; note that we know the text position $f_1$ of an occurrence of $Q_t$
because we know its locus $u_t$ in $\cT'$; therefore $Q_t[i..]=T[f_1+i..]$. 
By binary search among the sampled suffixes (i.e., leaves of $\cT'$),
we identify in $O(\log n)$ time the suffix $S_m$ that maximizes
$|LCP(Q_t[i..],S_m)|$, because this measure decreases monotonically in both
directions from $S_m$. At each step of the binary search we compute
$l=|LCP(Q_t[i..],S)|$ for some suffix $S\in\cS'$ using Lemma~\ref{lemma:lcp} and compare their $(l+1)$th symbols to decide the direction of the binary search. 
Next we find, again with binary search, the smallest and largest suffixes
$S_1,S_2 \in \cS'$ such that $|LCP(S_1,S_m)|=|LCP(S_2,S_m)|=|LCP(Q_t[i..],S_m)|$;
note $S_1 \le S_m \le S_2$.  

Finally let $v$ be the lowest common  ancestor of the leaves that hold $S_1$
and $S_2$ in $\cT'$. It then
holds that $LCP(Q_t[i..],\cS')=LCP(Q_t[i..],S_m)$, and $v$ is its locus node. Further,
if $|LCP(S_m,Q_t[i..])|=q_t-i=|Q_t[i..]|$, then $v$ is also the locus of $Q_t[i..]$; otherwise $Q_t[i..]$ prefixes no string in $\cS'$.

\begin{lemma}
  \label{lemma:longjump}
Suppose that we know $Q_t[0..q_t-1]$ and its locus in $\cT'$. We can then 
compute $LCP(Q_t[i..q_t-1],\cS')$ and its locus in $\cT'$ in $O(\log n)$ time, for
any $0\le i\le 4r+3$.
\end{lemma}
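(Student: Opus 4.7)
The plan is to use Lemma~\ref{lemma:lcp} as an $O(1)$-time oracle for LCPs against sampled suffixes, and then drive a standard binary search with it. Since the locus $u_t$ is known, any leaf in its subtree provides a text position $f_1$ where $Q_t$ occurs, so that $Q_t[i..q_t-1]=T[f_1+i..f_1+q_t-1]$ and Lemma~\ref{lemma:lcp} applies: for any $S=T[f_2..]\in\cS'$ we can compute $|LCP(T[f_1+i..],S)|$ in constant time.

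First, I would binary search among the leaves of $\cT'$ (sorted lexicographically) to find the suffix $S_m\in\cS'$ that maximizes $|LCP(Q_t[i..],S_m)|$. The key observation is unimodality: as we walk along the sorted leaves, the LCP with a fixed query string is monotone non-increasing on each side of the peak. So at each step I compute $\ell=|LCP(Q_t[i..],S)|$ for the probed leaf $S$ via Lemma~\ref{lemma:lcp} and then compare $Q_t[i+\ell]$ with $S[\ell]$ (each a constant-size fetch from $T$) to decide which half contains $S_m$. This yields $S_m$ in $O(\log n)$ time.

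Second, with two more binary searches of the same flavor, I would find the smallest $S_1$ and largest $S_2$ in $\cS'$ such that $|LCP(S_1,S_m)|=|LCP(S_2,S_m)|=|LCP(Q_t[i..],S_m)|$; here each probe also costs $O(1)$ because LCPs between sampled suffixes reduce to LCA queries in $\cT'$. The lowest common ancestor $v$ of the leaves for $S_1$ and $S_2$, obtained in $O(1)$ time~\cite{BFCPSS05}, is the locus node of $LCP(Q_t[i..],\cS')$ in $\cT'$. To match the lemma statement I truncate to length $q_t-i$: if $|LCP(Q_t[i..],S_m)|\ge q_t-i$ the locus of $Q_t[i..q_t-1]$ is $v$ itself (or its ancestor at the right string depth), and otherwise $Q_t[i..q_t-1]$ is not a prefix of any string in $\cS'$, in which case $v$ still gives the locus of the longest prefix that is.

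The one step that requires care is the unimodality claim that legitimizes the binary search; this is standard but relies on the fact that adjacent leaves in the sorted order have LCPs equal to the string depths of their LCAs, so a three-way comparison (shorter, equal, longer prefix than $Q_t[i..]$) at each probe correctly narrows the interval. Returning only the node $v$ is enough here because the surrounding algorithm (Section~\ref{sec:datastruc}) computes the child range $[l..r]$ of $v$ by a separate $O(\log\log n)$ predecessor query, which is absorbed within the $O(\log n)$ budget of a suffix jump.
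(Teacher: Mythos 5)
Your proposal matches the paper's proof essentially step for step: obtain a text occurrence $f_1$ of $Q_t$ from its known locus, binary search over the sorted leaves of $\cT'$ using Lemma~\ref{lemma:lcp} as a constant-time LCP oracle (exploiting unimodality and a symbol comparison at position $l+1$ to steer the search), then locate $S_1$ and $S_2$ by two more binary searches and take the LCA of their leaves as the locus node. The argument is correct and is the same route the paper takes.
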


%

\section{Suffix Jumps for Short Patterns}
\label{sec:shortjump}

In this section we show how $r$ suffix jumps can be computed in 
$O(|Q|/\log_\sigma n + r \log \log n)$ time when $|Q|\le \log^3 n$. 
Our basic idea is to construct a set $\cX_0$ of selected substrings with
length up to $\log^3 n$. These are sampled at polylogarithmic-sized intervals
from the sorted set $\cS'$. We also create a superset $\cX\supset \cX_0$ that
contains all the substrings that could be obtained by trimming the first $i\le 4r+3$ symbols from strings in $\cX_0$.
Using lexicographic naming and special dictionaries on $\cX$, we pre-compute
answers to all suffix jump queries for strings from $\cX_0$. We start by
reading the query string $Q$ and  trying to match $Q$, $Q[1..]$, $Q[2..]$ in
$\cX_0$. That is, for every $Q[i..q-1]$ we find $LCP(Q[i..q-1],\cX_0)$ and its
locus in $\cT'$. With this information we can finish the computation of a suffix jump in 
$O(\log\log n)$ time, because the information on $LCP$s in $\cX_0$ will narrow
down the search in $\cT'$ to a polylogarithmic sized interval, on which we can
use the binary search of Section~\ref{sec:jumps}.  


\paragraph{Data Structure.}
Let $\cS''$ be the set obtained by sorting suffixes in $\cS'$ and selecting  every $\log^{10} n$-th suffix. We denote by $\cX$ the set of  all substrings $T[i+f_1..i+f_2]$ such that the suffix $T[i..]$ is in the set $\cS''$ and $0\le f_1\le f_2\le \log^3 n$. We denote by $\cX_0$ the set of substrings $T[i..i+f]$ such that the suffix $T[i..]$ is in the set $\cS''$ and $0\le f\le \log^3 n$. 
Thus $\cX_0$ contains all prefixes  of length up to $\log^3 n$ for  all suffixes from $\cS''$ and $\cX$ contains all strings that could be obtained by suffix jumps from strings in $\cX_0$. 


We assign unique integer names to all substrings in $\cX$: we sort $\cX$ and then traverse the sorted list assigning a unique integer $\num(S)$ to each substring $S \in \cX$. Our goal is to store pre-computed solutions to suffix jump queries. To this end, we keep three dictionary data structures: 
\begin{itemize}
\item 
Dictionary $D_0$ contains the names $\num(S)$ for all $S\in \cX_0$, as well as their loci in $\cT'$.  
\item 
Dictionary $D$ contains the names $\num(S)$ for all substrings $S\in \cX$. For every entry $x\in D$, with $x=\num(S)$, we store (1) the length $\ell(S)$ of the string $S$, (2) the length $\ell(S')$ and the name $\num(S')$ where $S'$ is the longest prefix of $S$ satisfying $S'\in \cX_0$, 
(3) for each $j$, $1 \le j \le 4r+3$ , the name $\num(S[j..])$ of the string obtained by trimming the first $j$ leading symbols of $S$ if $S[j..]$ is in $\cX$. 
\item 
Dictionary $D_p$ contains all pairs $(x,\alpha)$, where $x$ is an integer and $\alpha$ is a string, such that the length of $\alpha$ is at most $\log_{\sigma} n$, $x=\num(S)$ for some $S\in\cX$, and the concatenation $S\cdot\alpha$ is also in $\cX$. $D_p$ can be viewed as a (non-compressed) trie on $\cX$. 
\end{itemize}

Using $D_p$, we can navigate among the strings in $\cX$: if we know $\num(S)$ for some $S\in \cX$, we can look up the concatenation $S\alpha$ in $\cX$ for any string $\alpha$ of length at most $\log_{\sigma} n$. The dictionary $D$ enables us to compute suffix jumps between strings in $\cX$: if we know $\num(S[0..])$ for some $S\in \cX$, we can look up $\num(S[i..])$ in $O(1)$ time. 

The set $\cS''$ contains $O(\frac{n}{r\log^{10}n})$ suffixes. The set $\cX$
contains $O(\log^6n)$ substrings for every suffix in $\cS''$. The space usage
of dictionary $D$ is $O(n/\log^4 n)$ words, dominated by item (3). 
The space of $D_p$ is $O(n\log_{\sigma}n/(r\log^4 n))$ words, given by the 
number of strings in $\cX$ times $\log_{\sigma} n$. This dominates the total 
space of our data structure, $O(n\sqrt{\log_{\sigma} n}/\log^4n)$. 


\paragraph{Suffix Jumps.} 
Using  the dictionary $D$, we can compute suffix jumps within $\cX_0$. 
\begin{lemma}
\label{lemma:jumpx0}
  For any string $Q$ with $4r+3<|Q|\le \log^3 n$, we can find the strings 
$P_i=LCP(Q[i..],\cX_0)$, their lengths $p_i$ and their loci in $\cT'$, for all $1\le
i\le 4r+3$, in time $O(|Q|/\log_{\sigma}n+r\log\log_{\sigma}n)$. 
\end{lemma}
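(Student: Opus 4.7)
I would combine a single traversal of $Q$ through the dictionary $D_p$ with $O(1)$-time shift lookups in $D$ and an $O(1)$-time projection to $\cX_0$.

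The first step is to identify $R := LCP(Q,\cX)$ together with its name $\num(R)$. I would traverse $Q$ through $D_p$: starting from the empty-string node, repeatedly look up the pair $(\num(S),\alpha)$ in $D_p$ with $S$ the current string in $\cX$ and $\alpha$ the next $\log_\sigma n$ characters of $Q$. Each look-up is constant time by hashing, and we advance by a full meta-symbol whenever possible. When no full meta-symbol extends, an $O(\log\log_\sigma n)$ binary search over prefix lengths of the remaining meta-symbol finds the longest extension still in $\cX$. This produces $R$, $|R|$ and $\num(R)$ in $O(|Q|/\log_\sigma n+\log\log_\sigma n)$ time.

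For each $i\in\{1,\ldots,4r+3\}$, I would look up $\num(R[i..])$ via field~(3) of the entry for $\num(R)$ in $D$, in $O(1)$ time per shift. Starting from $\num(R[i..])$ in $D_p$, I extend using the characters of $Q$ past position $|R|$, again through hashed meta-symbol look-ups with a final $O(\log\log_\sigma n)$ tail search, to obtain $R_i := LCP(Q[i..],\cX)$ and its name. Applying field~(2) of $D$ to $\num(R_i)$ yields $\num(P_i)$ and the length $p_i$, where $P_i$ is the longest prefix of $R_i$ lying in $\cX_0$. Since $\cX_0\subseteq\cX$, any prefix of $Q[i..]$ in $\cX_0$ has length at most $|R_i|$ and is therefore a prefix of $R_i$; hence $P_i=LCP(Q[i..],\cX_0)$, as required. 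A single look-up of $\num(P_i)$ in $D_0$ then returns the locus of $P_i$ in $\cT'$ in $O(1)$ time.

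The main obstacle is bounding the extension work across the $r$ shifts, the only step that is not trivially $O(1)$ or $O(\log\log_\sigma n)$. In principle $R_i$ could reach well past $R[i..]$, and a naive argument gives only $O(r\cdot|Q|/\log_\sigma n)$. The hard part will be to argue, by amortization exploiting the sparse structure of $\cX$ (built on a $\log^{10}n$-th sample $\cS''$ of $\cS'$), that the aggregate full-meta-symbol extensions for all shifts remain within $O(|Q|/\log_\sigma n)$, while each shift pays at most one $O(\log\log_\sigma n)$ tail search. Summed with the initial traversal, the total cost becomes $O(|Q|/\log_\sigma n + r\log\log_\sigma n)$, matching the claim.
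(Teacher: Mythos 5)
Your data-structure usage and the correctness argument (that the longest $\cX_0$-prefix of $R_i=LCP(Q[i..],\cX)$ equals $P_i$, retrieved via field (2) of $D$ and then $D_0$) match the paper's. The gap is the one you flag in your final paragraph, and it is not just a missing detail: the algorithm as you specify it cannot be amortized to the claimed bound. You start every shift-$i$ extension from $R[i..]$, where $R=R_0=LCP(Q,\cX)$ is fixed. But $R_0$ can be very short while $R_1$ is almost all of $Q$ (e.g., $Q[1..]$ is a prefix of a window of a suffix in $\cS''$ while $Q$ itself matches nothing long in $\cX$), and since $\cX$ is closed under removing a leading symbol, $|R_{i+1}|\ge |R_i|-1$, so every subsequent $R_i$ has length about $|Q|-i$. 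Each of the $4r+3$ shifts then descends from length $|R_0|-i$ up to length about $|Q|-i$, costing $\Theta(|Q|/\log_\sigma n)$ apiece and $\Theta(r\,|Q|/\log_\sigma n)$ in total --- exactly the naive bound you hoped to beat. No amortization based on the sparsity of $\cS''$ can rescue this, because the counterexample is independent of the sampling rate; the sparsity of $\cS''$ is used elsewhere (to bound subtree sizes in Lemma~\ref{lemma:jumpshort}), not here.

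The fix is the one the paper imports from Section~\ref{sec:datastruc}: do not restart each shift from $R_0$, but take the jump from the deepest match found so far. Concretely, let $t$ maximize $|R_t|+t$ over $t<i$; field (3) of $D$ applied to $\num(R_t)$ with offset $i-t\le 4r+3$ yields $\num(R_t[i-t..])$, which is a prefix of $R_i$, and the descent from there has length $\bigl((|R_i|+i)-(|R_t|+t)\bigr)/\log_\sigma n+O(1)$. Since $|R_i|+i$ is non-decreasing in $i$ and bounded by $|Q|$, these descents telescope to $O(|Q|/\log_\sigma n+r)$, and one $O(\log\log_\sigma n)$ tail search per shift gives the claimed total. (The paper phrases the chained object as $P_t\in\cX_0$ rather than $R_t\in\cX$, applying field (2) before descending, but the telescoping argument is the same.)
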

\begin{proof}
  We find  $P_0=LCP(Q[0..q-1],\cX_0)$ in
$O(|P_0|/\log_{\sigma}n+\log\log_{\sigma}n)$ time: suppose that $Q[0..x]$
occurs in $\cX_0$. We can check whether $Q[0..x+\log_{\sigma}n]$ also occurs
in $\cX_0$ using the dictionary $D_p$. If this is the case, we increment $x$
by $\log_{\sigma}n$.  Otherwise we find with binary search, in
$O(\log\log_{\sigma}n)$ time,  the largest $f\le \log_{\sigma}n$ such that
$Q[0..x+f]$ occurs in $\cX_0$. Then $P_0=Q[0..x+f] \in \cX_0$, and its locus
in $\cT'$ is found in $D_0$.

When $P_0$, of length $p_0=|P_0|$, and its name $\num(P_0)$ are known, we 
find $P_1 = LCP(Q[1..],\cX_0)$: first we look up $v=\num(P_0[1..])$ 
in component (3) 
of $D$, then we look up in component (2) of $D$ the longest 
prefix of the string with name $v$ that is in $\cX_0$. This is the 1-jump of
$P_0$ in $\cX_0$; now we descend as much as possible from there using $D_p$,
as done to find $P_0$ from the root. We finally obtain $\num(P_1)$; 
its length $p_1$ and locus in $\cT'$ are found in $D$
(component (1)) and $D_0$, respectively. 

We proceed in the same way as in Section~\ref{sec:datastruc} and find
$LCP(Q[i..],\cX_0)$ for $i=2$, $\ldots$, $4r+3$. The traversals in $D_p$ 
amortize analogously to $O(|Q|/\log_\sigma n+r)$, and we have
$O(r\log\log_\sigma)$ further time to complete the $r$ traversals.
\end{proof}

When all $LCP(Q[i..],\cX_0)$ and their loci in $\cT'$ are known, we can compute suffix jumps in $\cS'$. 

\begin{lemma}
\label{lemma:jumpshort}
 Suppose that we know $P_i=LCP(Q[i..q-1],\cX_0)$ and its locus in $\cT'$ for all
$0 \le i\le 4r+3$. Assume we also know $Q_t[0..q_t-1] \in \cS'$ and its
locus node $u_t \in \cT'$, where $Q_t=Q[t..t+q_t-1]$. Then, given 
$j \le 4r+3$, we can compute $LCP(Q_t[j..],\cS')$ and its locus in $\cT'$, in 
$O(\log \log n)$ time.
\end{lemma}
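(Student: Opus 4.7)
The plan is to exploit the locus of $P_{t+j}$ in $\cT'$ (given by hypothesis) to narrow the search for the locus of $W=Q_t[j..]$ to a region of size at most $\log^{10}n$ leaves, and then finish with the binary-search routine of Section~\ref{sec:jumps} invoked with Lemma~\ref{lemma:lcp}. I split on the comparison of $|W|=q_t-j$ with $|P_{t+j}|$. Write $v[l'..r']$ for the locus of $P_{t+j}$ in $\cT'$; note $v$ and $[l'..r']$ are given together with $P_{t+j}$.

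Case $|W|\le|P_{t+j}|$: Here $W$ is a prefix of $P_{t+j}$, which is itself a prefix of some suffix in $\cS''\subseteq\cS'$ (because $P_{t+j}$ is a prefix of a string in $\cX_0$, and $\cX_0$ consists of prefixes of $\cS''$); hence $LCP(W,\cS')=W$. The locus of $W$ sits on the root-to-$v$ path in $\cT'$: if $|W|$ is at least the string-label length of $v$, the locus is $v[l..r]$ for some $[l..r]\supseteq[l'..r']$; otherwise it is $u[l..r]$ for a proper ancestor $u$ of $v$. The ancestor is found in $O(\log\log n)$ time by a weighted level-ancestor query on $\cT'$ (standard on compacted tries), and the interval $[l..r]$ is then obtained by a single predecessor query on the sorted children of the relevant node, using the predecessor structures already maintained on $\cT'$.

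Case $|W|>|P_{t+j}|$: Here $P_{t+j}$ is a strict prefix of $W$, and since $|W|\le\log^3 n$ we have $LCP(W,\cX_0)=LCP(W,\cS'')=|P_{t+j}|$. A predecessor query on the children $u_{l'},\ldots,u_{r'}$ of $v$ identifies the sub-interval $[l..r]\subseteq[l'..r']$ of children whose first meta-symbol extends $P_{t+j}$ by the character $c=W[|P_{t+j}|]$. If $[l..r]$ is empty, no suffix of $\cS'$ extends the match past $P_{t+j}$, so $LCP(W,\cS')=P_{t+j}$ with locus $v[l'..r']$. Otherwise, any leaf below $v[l..r]$ must be unsampled, for a sampled leaf there would witness a common prefix of $W$ and $\cS''$ of length $|P_{t+j}|+1$, contradicting the maximality of $P_{t+j}$; hence the leaf interval under $v[l..r]$ lies entirely inside one ``gap'' between consecutive elements of $\cS''$ and so has size at most $\log^{10}n-1$. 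On this polylogarithmic interval I run the binary-search procedure of Section~\ref{sec:jumps}: find a leaf $S_m$ maximizing $|LCP(W,S_m)|$, bracket the matching range $[S_1..S_2]$, and recover the locus as the LCA of $S_1$ and $S_2$ (refined by one predecessor query on its children when the match ends mid-edge). Each of the $O(\log\log n)$ binary-search steps performs one $O(1)$-time call to Lemma~\ref{lemma:lcp}.

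The main obstacle is the small-interval claim in the second case, that is, the bound $\log^{10}n-1$ on the number of leaves under the sub-interval $[l..r]$ returned by the predecessor query. The argument above reduces it to the identity $LCP(W,\cS'')=|P_{t+j}|$, which in turn holds only because $\cX_0$ contains every prefix of length up to $\log^3 n$ of every suffix in $\cS''$, and because $|W|\le\log^3 n$ (this is exactly where the short-pattern regime is used). Once this invariant is in place, the remaining ingredients—weighted level-ancestor queries on $\cT'$, predecessor queries on sorted children, and LCA queries on leaves—each run in $O(\log\log n)$ time on $\cT'$, yielding the claimed overall bound.
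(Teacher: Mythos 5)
Your proof is correct and follows essentially the same route as the paper's: use the precomputed locus of $P_{t+j}$ to confine the search to a subtree of $\cT'$ containing no leaf of $\cS''$ (hence $O(\log^{10}n)$ leaves) and finish with the $O(\log\log n)$-step binary search of Section~\ref{sec:jumps} based on Lemma~\ref{lemma:lcp}. Your Case 1 ($|W|\le|P_{t+j}|$) makes explicit, via a weighted level-ancestor query, a degenerate situation the paper's proof passes over silently; note that such a structure is not listed among the internal-memory components (though it can be built within the stated bounds), and the overall algorithm could equally just return the deeper locus of $P_{t+j}$ and continue its downward traversal from there.
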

\begin{proof}
Let $v[l..r]$ be the locus of $LCP(Q_t[j..],\cX_0)=LCP(Q[t+j..],\cX_0)$ in 
$\cT'$ and let
$\ell=|LCP(Q_t[j..],\cX_0)|$. If $\ell=q-j$, then $v[l..r]$ is the locus of
$Q_t[j..]$ in $\cT'$. Otherwise let $v_1$ denote the child of $v$   that is
labeled with  $Q[t+j+\ell..t+j+\ell+\log_{\sigma}n-1]$. If $v_1$ does not
exist, then $v$ is the locus of $LCP(Q_t[j..],\cS')$. If $v_1$ exists, then
the locus of $LCP(Q_t[j..],\cS')$ is in the subtree $\cT_{v_1}$ of $\cT'$ 
rooted at $v_1$.   By definition, $\cT_{v_1}$ does not contain suffixes from $\cX_0$. Hence $\cT_{v_1}$ has $O(\log^{10}n)$ leaves. 
We then find $LCP(Q_t[j..],\cS')$ among suffixes in $\cT_{v_1}$ using the
binary search method described in Section~\ref{sec:jumps}. We find $S_1$,
$S_m$, and $S_2$ in time $O(\log \log^{10}n) = O(\log\log n)$.
The locus of $LCP(Q_t[j..],\cS')$ is then the lowest common ancestor of the
leaves that hold $S_1$ and $S_2$.
%
%
\end{proof}

\begin{lemma}
  \label{lemma:shortloci}
Suppose that $|Q|\le \log^3 n$. Then we  can find all the existing loci of
$Q[i..]$ in $\cT'$, for $0\le i < 4r+3$, in time $O(|Q|/\log_{\sigma}
n+r\log\log n)$.
\end{lemma}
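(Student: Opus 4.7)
The plan is to combine the two preceding lemmas. First, I would invoke Lemma~\ref{lemma:jumpx0} to compute, in $O(|Q|/\log_{\sigma} n + r\log\log_\sigma n)$ time, the strings $P_i = LCP(Q[i..],\cX_0)$ together with their loci in $\cT'$ for every $0 \le i \le 4r+3$. This pre-computation supplies exactly the precondition required by Lemma~\ref{lemma:jumpshort}, so each individual suffix jump in $\cS'$ can subsequently be completed in $O(\log\log n)$ time.

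Second, I would process $i = 0,1,\ldots,4r+2$ in order, following the same strategy as in Section~\ref{sec:datastruc}. I maintain an index $t<i$ such that, among all indices already processed whose locus in $\cT'$ exists, $u_t[l_t..r_t]$ is the locus of deepest string $Q_t=Q[t..q_t-1]=LCP(Q[t..],\cS')$. For the new index $i$, I apply Lemma~\ref{lemma:jumpshort} with $j=i-t$ to obtain the locus of $LCP(Q_t[i-t..],\cS')$ in $O(\log\log n)$ time. If that locus exists and has string depth $q_t-(i-t)<q-i$, I continue the descent in $\cT'$ by reading further meta-symbols of $Q[i..]$ through the deterministic dictionaries at the nodes, and finish with one predecessor search at the deepest node (contributing $O(\log\log n)$ time once per index) to obtain the locus $u_i[l_i..r_i]$ of $Q[i..q_i-1]$. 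After updating $t$ if $q_i>q_t$, I move to $i+1$.

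For the time analysis, the $r$ suffix jumps contribute $O(r\log\log n)$ in total, and the $r$ predecessor searches at loci contribute another $O(r\log\log n)$. The downward meta-symbol traversals amortize exactly as in Section~\ref{sec:datastruc}: each such step advances the matched depth by $\log_\sigma n$, while each suffix jump loses at most $\log_\sigma n$ of matched depth (since the jump starts from $u_t$ rather than from inside its edge to $u_t[l_t..r_t]$), so the total number of downward meta-symbol steps is $O(|Q|/\log_\sigma n + r)$. Summing with the $O(|Q|/\log_\sigma n + r\log\log_\sigma n)$ cost of Lemma~\ref{lemma:jumpx0} gives the stated bound $O(|Q|/\log_\sigma n + r\log\log n)$.

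The main subtlety I expect is justifying the amortization in the presence of the predecessor queries: each iteration already pays $O(\log\log n)$ for a jump, and I need to argue that only one predecessor query happens per iteration (namely, when we reach the child node where less than a meta-symbol of $Q$ remains), so these queries sum to $O(r\log\log n)$ as well, rather than blowing up the downward-traversal cost. Everything else is bookkeeping of the pointers $t,q_t,u_t$ exactly as in Section~\ref{sec:datastruc}.
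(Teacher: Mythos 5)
Your proposal is correct and follows exactly the route the paper intends: the paper states Lemma~\ref{lemma:shortloci} without an explicit proof precisely because it is the combination of Lemma~\ref{lemma:jumpx0} (precomputing all $LCP(Q[i..],\cX_0)$ and their loci) with Lemma~\ref{lemma:jumpshort} applied once per shift, interleaved with the descent-and-amortization scheme of Section~\ref{sec:datastruc}. Your accounting of the costs ($O(|Q|/\log_\sigma n + r\log\log_\sigma n)$ for the precomputation, $O(r\log\log n)$ for the jumps and predecessor searches, and the amortized $O(|Q|/\log_\sigma n + r)$ for the downward steps) matches the paper's analysis.
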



\section{Construction}
\label{sec:construction}

\paragraph{Sampled suffix tree.}
Consider the set of strings $P_{i,j}=T[si+a_j..s(i+1)+a_j-1]$, that is, all substrings of length $s$ that start at selected positions (any $P_{i,j}$ shifted by $a_j$). We view each $P_{i,j}$ as a string of length $s/\log_{\sigma}n$ over an alphabet of size $n$. In other words, we view $P_{i,j}$ as a sequence of meta-symbols, such that each meta-symbol represents $\log_{\sigma}n$ symbols and fits into one word. 

We assign lexicographic names to the strings and construct texts $T_j= P_{0,j} P_{1,j} \cdots$ for $j=0$, $\ldots$, $4r+3$. That is, the $i$-th symbol in $T_j$ is the lexicographic name of the string $P_{i,j} = T[a_j+is..a_j+(i+1)s-1]$. 
Let $\oT$ denote the concatenation of texts $T_j$, $\oT=T_1\$T_2\$\ldots T_{4r+3}\$ $, where \$ stands for a lexicographic name smaller than all those used. 

\no{
Since $\oT$ consists of $O(n/r)$ symbols, we construct a suffix array, a suffix tree, and an inverse suffix array $SAI$  for sampled suffixes in $O(\sort(n/r))$ I/Os~\cite{KSB06,FFM00}.  We can also construct the string B-tree on suffixes within the same time bounds~\cite{FFM00}.  To support suffix jumps, we also  need a data structure that answers weighted level ancestor queries.  A construction algorithm  for this data structure is described in Section~\ref{sec:wlacons}. This algorithm runs in $O(\sort(n/r))$ I/Os. 

Now we show how external data structures for suffix jumps are generated. Recall that  $\cY$ is the set of strings   $T[i+f_1..i+f_2]$, $0\le f_1\le 4r+3$ and $0\le f_2\le B\log^3 n$, such that $T[i..]\in \cS''$. $\cY$ contains $O(n/\log^4n)$ strings of total length $O(n/\log n)$. We can generate all strings from $\cY$ in $O(n/(B\log n))$ I/Os: first, we identify the starting positions of strings in $\cS''$; since the suffix array is already constructed we can find the indexes $i$, such that $T[i..]\in \cS'$ in $O(n/(Br))$ I/Os. Then we traverse the text $T$ and generate strings  $T[i+f_1..i+f_2]$ in $O(n/(B\log n))$ I/Os.  When strings are generated we insert all strings from $\cY_0$ into $\cT_0$ and all strings from $\cY$ into $\cT_Y$. Next, we mark strings in $\cT_0$. For every  node $u$ of $\cT_Y$ we generate a string corresponding to $u$. We store these strings in a set $L_Y$; for each string a pointer to the corresponding node $u\in \cT_Y$ is also stored. We also collect strings corresponding to nodes of $\cY_0$, sort them, and store them in the list $L'$. For each string in $L'$  we also store the corresponding node of $\cT_0$. For every $j$, $0\le j\le 4r+3$, we proceed as follows. We remove $j$ leading symbols from every string in $L_Y$ and sort the resulting strings. Let $L_j$ be the list of lexicographically sorted strings from $L_Y$ with $j$ first symbols removed. We simultaneously traverse $L'$ with $L_j$: if some string $s$ occurs in both lists, we add $s$ to the list $M_j$. For each string in $M_j$, we also keep the index of the corresponding node in $\cT_0$ and the string $s^t$ of removed symbols. Then we sort the list $M_j$ according to the in-order of nodes in $\cT_0$. Now we know with what strings nodes of $\cT_0$ must be marked. Finally we traverse the tree $\cT_0$ and the list $M_j$. When we visit a node $u$ of  $\cT_0$ that corresponds to a string $s$, we find $s$ in the list $M_j$ and mark $u$ with $s^t$. When we performed this procedure for all $j$, $0\le j\le 4r+3$, all nodes of $\cT_0$ are marked correctly. Finally we construct the marked ancestor data structure in $O(\sort(n/r))$ I/Os; see Sections~\ref{sec:markanc} and~\ref{sec:wlacons}.  Data structures for sets $\cX$ and $\cX_0$ can be created in the same way.  The total construction cost for data structures supporting suffix jumps is bounded by the cost of sorting strings in $\cY$ (resp. the cost of sorting strings in $\cX$). Hence data structures for suffix jumps can be constructed in $O(\sort(n/r))$ I/Os.

In order to construct data structures of Lemma~\ref{lemma:lcpB1} and Lemma~\ref{lemma:lcpB2} we need to know shifted ranks of some suffixes, i.e., we must know the rank of $T[f+h(i,j)..]$ for every $T[f..]$ in $\cV$. This information can be also obtained by careful  sorting and extracting data from the inverse suffix array. We traverse the suffix array of sampled suffixes and mark every $(s\log^3 n)$-th suffix. Next we sort marked suffixes by their starting positions in the text. Let $L_m$ denote the list of marked suffixes.  We traverse the inverse suffix array $SAI$; for every position $SAI[x]$ that corresponds to a marked suffix $T[f..]$, we add the values of $SAI[x+1]$, $\ldots$, $SAI[x+s]$ to the entry for $T[f..]$. Next we again sort the entries of $L_m$ (with added information) by their ranks.  Since we now know the shifted ranks of marked suffixes, we can construct data structures for sets $\cV$. The construction cost is dominated by the cost of sorting strings $P_{ij}$ and the cost of constructing the suffix array for $\oT$. Hence our data structure can be constructed in $O((1/r)\sort(n)+ (r/\log_{\sigma}n)\sort(n))$ I/Os. 

\paragraph{Internal Memory.}  
Now we show how the internal-memory data structure can be constructed. First, we obtain a concatenated text $\oT$, defined above. 
}

Since $\oT$ consists of $O(n/r)$ meta-symbols and each meta-symbol fits in a $(\log n)$-bit word, we can sort all meta-symbols in $O(n/r)$ time. Thus we can generate $\oT$ and construct its suffix tree $\cT'$ in $O(n/r)$ time \cite{FFM00}. Further, we need 
$O((n/r)(\log\log n)^2) = O(n(\log\log n)^2/\sqrt{\log_\sigma n})$ time to
build the perfect hashes and predecessor data structures with the children of
each node \cite{Ruz08,BBV10}.

\paragraph{Suffix jumps.}
The lowest common ancestor structure \cite{BFCPSS05} and the pointers from
sampled text positions to leaves in $\cT'$, needed in
Section~\ref{sec:jumps}, are built in time
$O(|\cT'|) = O(n/\sqrt{\log_\sigma n})$.

The sets of substrings and dictionaries $D$, $D_0$, and $D_p$ described in Section~\ref{sec:shortjump} can be constructed  as follows. Let $m=O(n/r)$ be the number of selected suffixes in $\cS'$. The number of suffixes in $\cS''$ is $O(m/\log^{10}n)$. The number of substrings associated with each suffix in $\cS''$ is $O(\log^6n)$ and their total length is $O(\log^9n)$. The total number of strings in $\cX_0$ is $O(m/\log^{7} n)$ and their total length is $O(\frac{m}{\log^{10}n}\cdot \log^6 n)=O(m/\log^4 n)$. The number of strings in $\cX$ is $k=O((m/\log^{10} n)\cdot \log^6n)=O(m/\log^4 n)$  and their total length is $t=O((m/\log^{10} n)\cdot \log^9n)=O(m/\log n)$. 
We can sort all the strings in $\cX$ and compute their lexicographic names in $O(t)=O(m/\log n)$ time using RadixSort. 

Next, we construct the dictionary $D_0$ that contains names $\num(S)$ of all $S\in \cX_0$. For every $x=\num(S)$ in $D_0$ we keep a pointer to the string $S$ and its locus in $\cT'$. 
$D_0$ is a perfect hash table on the keys $\num(S)$, so it can be constructed in $O(|\cX_0|(\log\log n)^2)$ time~\cite{Ruz08}. 

When we generate strings of  $\cX$, we also record the information about suffix jumps, so that we have the pointers to all relevant suffix jumps for each string $S\in\cX$. Using the pointers to the strings $S$ and the dictionary $D_0$, we can determine the longest prefix $S' \in \cX_0$ of $S$ by binary search on $\ell(S')$, in $O(\log\log n)$ time. We then have all the information associated with elements of $D$ (items (1)--(3)). 
 The  dictionary $D$ itself is a perfect hash table on the $k$ keys, which can be constructed  in $O(k(\log\log n)^2)=o(m)$ time~\cite{Ruz08}. 

Finally, we construct the  dictionary $D_p$ by inserting all strings in $\cX$ into a trie data structure; for every node of this trie we store the name $\num(S)$ of the corresponding string $S$. Along a depth-first trie traversal we collect, for each node representing name $y$, its ancestors $x$ up to distance 
$\log_\sigma n$ and the strings $\alpha$ separating $x$ from $y$. All the pairs
$(x,\alpha) \rightarrow y$ are then stored in $D_p$. Since $\cX$ is 
prefix-closed, the trie contains $O(k)$ nodes, and we include
$O(k \log_\sigma n)$ pairs in $D_p$. Since $D_p$ is also a perfect hash table,
it can be built in time $O(k \log_\sigma n (\log\log n)^2)=o(m)$.

Hence the total time needed to construct the data structures for suffix jumps is $O(m)=O(n/r)=O(n/\sqrt{\log_\sigma n})$. 

\paragraph{Range searches.}
As said, the wavelet tree can be built in time $O(n\log\sigma/\sqrt{\log n})$ 
\cite{MNV16,BabenkoGKS15}. Appendix~\ref{sec:rangerep} shows that the time to
build the data structure for faster reporting is
$O(n\log^{3/2}\sigma/\log^{1/2-\eps}n)$, for any constant $0<\eps<1/2$.

\section{Index for Small Patterns}

\label{sec:small}

The data structure for small query strings consists of two tables. Let $p=4r+3$.
The value of the parameter $r=\Theta(\sqrt{\log_{\sigma }n})$ is selected so 
that $2p\le (1/2)\log_{\sigma}n$. We regard the text as an array $A[0..n/p]$ of
length-$2p$ (overlapping) strings, $A[i]=T[ip..ip+2p-1]$. A table $Tbl$ is built whose
entries correspond to all strings of length $2p$: $Tbl[\alpha]$ contains all
the positions $i$ where $A[i]=\alpha$. Further, we build tables $Tbl_j$, for
$1 \le j \le p$, containing all the possible length-$j$ strings.
Each entry $Tbl_j[\beta]$, with $|\beta|=j$, contains the list of
length-$2p$ strings $\alpha$ such that $Tbl[\alpha]$ is not empty and $\beta$
is a substring of $\alpha$ beginning in its first $p$ positions(i.e., 
$\beta=\alpha[i..i+j-1]$ for some $0\le i< p$). 

Table $Tbl$ has $\sigma^{2p}=O(\sqrt{n})$ entries, and overall contains
$n/p = O(n/r)$ pointers to $A$, thus its total space is 
$O(n\sqrt{\log n\log\sigma})$ bits.
Tables $Tbl_j$ add up to $O(\sigma^p) = O(n^{1/4})$ cells. Since each distinct
string $\alpha$ of length $2p$ produces $O(r^2)$ distinct substrings, there
can be only $O(\sigma^{2p} r^2) = O(\sqrt{n} \log_\sigma n)$ pointers in all
the tables $Tbl_j$, for a total space of $o(n)$ bits.

To report occurrences of a query string $Q[0..q-1]$, we examine the list 
$Tbl_q[Q]$. For each string $\alpha$ in $Tbl_q[Q]$, we visit the entry 
$Tbl[\alpha]$ and report all the positions of $Tbl[\alpha]$ in $A$ (with their 
offset).

To build $Tbl$, we can traverse $A$ and add each $i$ to the list of
$Tbl[A[i]]$, all in $O(n/r)$ time. We then we visit the slots of $Tbl$. For 
every $\alpha$ such that $Tbl[\alpha]$ is not empty, we consider all the
sub-strings $\beta$ of $\alpha$ and add $\alpha$ to $Tbl_{|\beta|}[\beta]$,
recording also the corresponding offset of $\beta$ in $\alpha$ (we may add
the same $\alpha$ several times with different offsets). The time of this
step is, as seen for the space, $O(\sigma^{2p} r^2) = O(\sqrt{n} \log_\sigma
n) = o(n/r)$.

To support counting we also record in $Tbl_q[Q]$ the number of occurrences in 
$T$ of each string $Q$.

\begin{lemma}
  \label{lemma:smallram}
There exists a data structure that uses $O(n\sqrt{\log n \log\sigma})$ bits
and reports all $\occ$ occurrences of a query string $Q$ in $O(\occ)$ time if
$|Q|\le 4r+3$, with $r=\Theta(\sqrt{\log_\sigma n})$. The data structure also
computes $\occ$ in $O(1)$ time and can be built in time $O(n/\sqrt{\log_\sigma
n})$. 
\end{lemma}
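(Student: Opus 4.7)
My plan is to verify, for the two-table data structure already described in this section (the length-$2p$ block table $Tbl$ together with the length-at-most-$p$ substring tables $Tbl_j$), the three claims of the lemma: the space bound, the $O(\occ)$ reporting time with $O(1)$ counting, and the construction time $O(n/\sqrt{\log_\sigma n})$. The key parameter is the choice $2p\le (1/2)\log_\sigma n$, which makes both $Tbl$ and each block $A[i]$ fit into sublinear space and a constant number of machine words, respectively.

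First I would bound the space. Since $\sigma^{2p}=O(\sqrt{n})$, the header of $Tbl$ is negligible; its payload consists of one pointer per entry of the partition array $A$, so $O(n/p)\cdot O(\log n)$ bits, which matches $O(n\sqrt{\log n\log\sigma})$ bits. For the tables $Tbl_j$, the $O(\sigma^p)=O(n^{1/4})$ headers are likewise negligible, and the payload has $O(\sigma^{2p}r^2)=O(\sqrt{n}\log_\sigma n)$ pointers, contributing $o(n)$ bits.

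Next I would analyse the reporting algorithm. A query $Tbl_q[Q]$ returns a list of pairs $(\alpha,\textit{offset})$ indicating that $Q$ occurs at position $\textit{offset}\in[0,p-1]$ within the length-$2p$ string $\alpha$, and that $Tbl[\alpha]\neq\emptyset$. We enumerate $Tbl[\alpha]$ and report $ip+\textit{offset}$ for each $i\in Tbl[\alpha]$. Every occurrence of $Q$ at text position $x$ is captured exactly once, namely by $i=\lfloor x/p\rfloor$, $\alpha=A[i]$, $\textit{offset}=x-ip$, since $|Q|\le p$ guarantees $Q$ is entirely contained in the length-$2p$ block $A[\lfloor x/p\rfloor]$. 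Conversely, because each pair in $Tbl_q[Q]$ comes from an $\alpha$ with a non-empty $Tbl[\alpha]$, it contributes at least one occurrence, so the amortised charge per reported occurrence is $O(1)$ and the total reporting time is $O(\occ)$. For $O(1)$-time counting I would, during construction, accumulate in each $Tbl_j[\beta]$ cell the aggregate $\sum_{(\alpha,\cdot)\in Tbl_j[\beta]}|Tbl[\alpha]|$, which equals $\occ$ and is retrieved with a single lookup.

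Finally I would verify the construction time. Scanning $A$ and populating $Tbl$ costs $O(n/p)=O(n/r)$, since each $A[i]$ fits in $O(1)$ words and so can be used directly as a table index. Sweeping over the $O(\sqrt{n})$ slots of $Tbl$ and enumerating the $O(r^2)$ substrings of length at most $p$ starting in the first $p$ positions of each non-empty $\alpha$ takes $O(\sigma^{2p}r^2)=O(\sqrt{n}\log_\sigma n)=o(n/r)$ time, and the counting aggregates can be updated on the fly. The total is therefore $O(n/r)=O(n/\sqrt{\log_\sigma n})$. The main subtle point that I would double-check is the reporting bound: I need the invariant that $Tbl_q[Q]$ only stores entries arising from non-empty $Tbl[\alpha]$ lists, because otherwise the cost would degrade to $O(\occ+|Tbl_q[Q]|)$; the construction described establishes exactly this invariant, so the charging argument goes through.
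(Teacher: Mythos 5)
Your proposal is correct and follows essentially the same route as the paper: the same two-table scheme with the same space accounting, the same charging argument for $O(\occ)$ reporting based on the invariant that $Tbl_q[Q]$ only references non-empty $Tbl[\alpha]$ lists, and the same construction by one scan of $A$ followed by a sweep over the $O(\sqrt{n})$ non-empty slots of $Tbl$. Your explicit verification that each occurrence is captured exactly once via $i=\lfloor x/p\rfloor$ and the precomputed aggregate for $O(1)$ counting are just slightly more detailed versions of what the paper states.
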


\section{Conclusion}

We have described the first text index that can be built and queried in 
sublinear time. For example, on a text of length $n$ and an alphabet of 
poly-logarithmic size, the index is built in $O(n/\log^{1/2-\eps} n)$ time, 
for any constant
$0<\eps<1/2$, on a RAM machine of $\Theta(\log n)$ bits. An index that is built
in sublinear time, forcefully, must use $o(n\log n)$ bits, thus our index is
also compressed: again on a poly-logarithmic sized alphabet, it requires
$O(n\sqrt{\log n\log\log n})$ bits. There are indexes using less space, 
reaching the
asymptotically optimal $O(n\log\sigma)$ bits for an alphabet of size $\sigma$,
but those use $\Omega(n)$ construction time or $\Omega(q)$ time to count the 
occurrences of a pattern of length $q$. In contrast, ours is the first index 
using $o(n\log n)$ bits while
counting in almost-optimal time, $O(q/\log_\sigma n + \sqrt{\log_\sigma
n}\log\log n)$, and then delivering each occurrence in constant time. 

Our technique is based on {\em difference covers}, which allow us indexing
only a sublinear number of suffixes while still being able to compute the
longest common prefix between any two suffixes in constant time. Difference
covers have been using for linear-time suffix tree construction, but never for 
achieving sublinear query time.

We know no lower bound that prevents us from aiming at an index
using the least possible space, $O(n\log\sigma)$ bits, the least possible
construction time for this space in the RAM model, $O(n/\log_\sigma n)$, and
the least possible counting time, $O(q/\log_\sigma n)$. Our index is the first
one in breaking the $\Theta(n)$ construction time and $\Theta(q)$ query time
barriers simultaneously, but it is open how close 
we can get from the optimal space and construction time.

\conf{In the extended version of this paper (see 
\texttt{https://arxiv.org/abs/1712.07431})}
\arxiv{In appendix} 
we extend our model to secondary memory, with main memory size $M$ and block
size $B$. In this case, we can build the index in 
$O(\sort(n)/\sqrt{\log_{\sigma}n}+(n/B)\log\sigma)$ I/Os, where
$\sort(n)=(n/B)\log_{M/B} (n/M)$ are the I/Os needed to sort $n$ numbers, and 
report the $\occ$ occurrences of the pattern in 
$O(q/(B\log_\sigma n) + \log_B n + \sqrt{\log_{\sigma} n} \log\log n + \occ/B)$
I/Os. Alternatively, if $\log B = o(\log n /\log\log n)$,
we can build in $o(\sort(n))+O((n/B)\log\sigma)$ I/Os an index supporting 
queries in the optimal $O(q/(B\log_\sigma n) + \log_B n + \occ/B)$ I/Os.
Note that, for $\log\sigma = o(\log_{M/B} (n/M))$, 
this is the first suffix array 
construction taking $o(\sort(n))$ I/Os, which is possible because we actually
sort only the sampled suffixes. This demonstrates that, while $\sort(n)$ is a
lower bound for full suffix array construction~\cite{FFM00,KarkkainenR2003}, we can build in less time an index that emulates its search functionality and achieves  optimal or near-optimal query cost. 

\appendix

\arxiv{
\section{External-Memory Data Structure}
\label{sec:external}

In this section we extend our index to the secondary memory scenario. The fact
that we do not sort all the suffixes allows us to build the index in time
$o(\sort(n))$, which is impossible for a full suffix array. 
The main challenge is to handle, within the desired I/Os, pattern lengths that are larger than $\log^3 n$ but still not larger
than $B \log^3 n$; for longer ones the times of the sampled suffix tree search suffice.

\begin{theorem}
  \label{theor:extmem}
Given a text $T$ of length $n$ over an alphabet of size $\sigma$, in the
external memory scenario with block size $B$ and memory size $M$, there is
an index that reports the $\occ$ occurrences of a pattern of length $q$ in 
$O(q/(B\log_{\sigma}n)+ \log_B n+\sqrt{\log_{\sigma}n}\log\log n
+ \occ/B)$ I/Os and can be built in
$O(\sort(n)/\sqrt{\log_{\sigma}n}+(n/B)\cdot\log\sigma)$  I/Os, where
$\sort(n)=O((n/B)\log_{M/B}n)$ is the cost of sorting $n$ numbers.  
There is another index that, if $\log B = o(\frac{\log n}{\log\log n})$, 
reports the occurrences in the optimal
$O(q/(B\log_{\sigma}n)+ \log_B n + \occ/B)$ I/Os and is built
in $O((\frac{\log\log n}{\log_B n}+\frac{1}{\sqrt{\log_\sigma n}})\sort(n)+ (n/B)\log\sigma)$  I/Os.
\end{theorem}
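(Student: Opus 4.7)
The plan is to port each of the three main components of the internal-memory index (sampled suffix tree, suffix-jump machinery, and 2D range structure) to the external-memory model, keeping the sampling and block alignment intact, and then sum the I/Os per stage. I would replace the sampled suffix tree $\cT'$ by a string B-tree over the $O(n/r)$ sampled suffixes, which locates a pattern in $O(q/(B\log_\sigma n) + \log_B n)$ I/Os and is built in $O(\sort(n/r)) = O(\sort(n)/\sqrt{\log_\sigma n})$ I/Os after radix-sorting the meta-symbols of $\oT$. The wavelet tree becomes an external 2D range structure of $O(n\log\sigma)$ bits that reports in $O(\log_B n + \occ/B)$ I/Os and is constructible in $O((n/B)\log\sigma + \sort(n/r))$ I/Os; the pointers from sampled text positions to leaves of the string B-tree, the LCA/weighted-level-ancestor auxiliary, and the small-pattern table of Section~\ref{sec:small} all fit within $O(\sort(n/r))$ construction I/Os.

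Next I would realize the suffix jumps in three regimes according to the pattern length $q$. For long patterns ($q > B\log^3 n$) the string B-tree descent already costs $\Omega(q/(B\log_\sigma n))$ I/Os, which absorbs the cost of $r$ applications of the externalized version of Lemma~\ref{lemma:longjump}, where the $O(\log n)$ binary-search comparisons use weighted-level-ancestor queries costing $O(\log_B n)$ I/Os each. For small patterns ($q \le 4r+3$) the table-based index of Section~\ref{sec:small} is directly I/O-efficient and answers in $O(1 + \occ/B)$ I/Os. The genuine obstacle is the middle regime $4r+3 < q \le B\log^3 n$: here I would port the dictionaries $D$, $D_0$, $D_p$ of Section~\ref{sec:shortjump} to external perfect hash tables built in $O(\sort(n/r))$ I/Os, so that each $D_p$ probe costs $O(1)$ I/O, $LCP(Q[i..],\cX_0)$ is computed in $O(q/(B\log_\sigma n) + r\log\log n)$ I/Os, and the binary search on the $O(\log^{10} n)$ leaves inside an unsampled subtree costs $O(\log\log n)$ I/Os using the externalized Lemma~\ref{lemma:lcp}. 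Summing the three stages yields the claimed $O(q/(B\log_\sigma n)+\log_B n+\sqrt{\log_\sigma n}\log\log n + \occ/B)$ query bound; the construction bound follows by adding $O((n/B)\log\sigma)$ I/Os to scan $T$ and generate the substrings of $\cX$ and the small-pattern tables, plus $O(\sort(n)/\sqrt{\log_\sigma n})$ I/Os for all the sorting.

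For the second index I would replace the external range reporting structure by its optimal-reporting counterpart, built along the lines of Appendix~\ref{sec:rangerep}, that reports each occurrence in $O(1/B)$ amortized I/Os after an $O(\log_B n)$ preamble; this absorbs the $\sqrt{\log_\sigma n}\log\log n$ penalty whenever $\log_B n = \omega(\sqrt{\log_\sigma n}\log\log n)$, which the hypothesis $\log B = o(\log n / \log\log n)$ guarantees. Its construction inflates the total by $O(\frac{\log\log n}{\log_B n}\sort(n))$ I/Os, giving exactly the stated bound. The main obstacle throughout is the middle pattern-length regime: keeping the short-pattern dictionaries both $o(n)$-block in size and constructible in $O(\sort(n)/\sqrt{\log_\sigma n})$ I/Os while supporting $O(1)$-I/O probes, since a naive externalization would either lose the $\log_B$ improvement in the $\log\log n$ subtree search or blow up the construction above $\sort(n)$.
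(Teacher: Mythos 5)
Your high-level architecture matches the paper's (string B-tree over the sampled suffixes built in $O(\sort(n)/\sqrt{\log_\sigma n})$ I/Os, an external 2D reporting structure built in $O((n/B)\log\sigma)$ I/Os, three regimes by pattern length), and you correctly identify the middle regime $\log^3 n < q \le B\log^3 n$ as the crux. But your treatment of that regime has a genuine gap. If you keep $\cX_0$ as prefixes of length up to $\log^3 n$ of every $\log^{10}n$-th suffix, then for a pattern with $q>\log^3 n$ the locus of $LCP(Q[i..],\cX_0)$ has string depth at most $\log^3 n$, and the subtree below it is \emph{not} bounded by $O(\log^{10}n)$ leaves: the argument that the subtree contains no $\cS''$-suffix fails once the string depth reaches the truncation length of $\cX_0$. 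The residual search space is then as large as $\Theta(B^3\,\mathrm{polylog}\,n)$ suffixes, and a binary search over it at $O(1)$ I/Os per comparison costs $\Theta(\log B)$ I/Os per jump, i.e.\ $\Theta(r\log B)$ total, which is not $O(r\log\log n)$ when $B$ is large. Scaling the strings of $\cX_0$ up to length $B\log^3 n$ without sparsifying the sampling blows up the dictionaries' size and construction cost. The paper's actual solution introduces the sets $\cY_0,\cY$ (strings of length up to $B\log^3 n$ over a sparser sample, every $B^3\log^{10}n$-th suffix), locates $LCP(Q[i..],\cY_0)$ via a string B-tree plus marked-ancestor queries (Lemma~\ref{lemma:jumpy0}), and then finishes the search among the $O(B^3\log^{10}n)$ remaining suffixes in $O(\log\log n)$ I/Os using a conceptual tree of arity $B/(s\log^3 n)$ whose per-node data (shifted ranks $Z_{k,j}$ via the difference cover and the $SAI$, plus $O(\log^3 n)$ ``marginal'' LCP thresholds per suffix) is packed into a single block so each descent step costs $O(1)$ I/Os (Lemmas~\ref{lemma:lcpB1} and~\ref{lemma:lcpB2}). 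None of this machinery, which is the technical heart of the external-memory result, appears in your proposal.

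Your derivation of the second index is also off. The term $\sqrt{\log_\sigma n}\log\log n$ is the cost of the $r=\Theta(\sqrt{\log_\sigma n})$ suffix jumps at $O(\log\log n)$ I/Os each; it does not come from the range-reporting structure, so swapping in a better reporting structure cannot remove it. The paper instead re-tunes the difference-cover parameter to $r=\min(\log_B n/\log\log n,\sqrt{\log_\sigma n})$, so that $r\log\log n=O(\log_B n)$ is absorbed by the string B-tree search; the hypothesis $\log B=o(\log n/\log\log n)$ guarantees $r\ge 1$, and the denser sampling is exactly what produces the extra $\frac{\log\log n}{\log_B n}\sort(n)$ construction term.
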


Our data structure consists of the following components:
\begin{enumerate}
\item 
 We keep the suffix array and suffix tree $\cT'$ for selected suffixes.  As before, positions of selected suffixes in the text are chosen according to Lemma~\ref{lemma:dc1}; the value of parameter $r$ will be determined later. 
We also construct an inverse suffix array $SAI[0..(n/r)-1]$ for the selected suffixes:  $SAI[f]$ is the rank of the suffix starting at $T[\lfloor f/s\rfloor s +a_j]$ where $j=f\!\!\mod s$. 
\item
  We store  a data structure for range reporting queries. This is a straightforward modification  of the data structure used in our internal-memory result; see Section~\ref{sec:rangerep}. Our external range reporting data structure can be constructed in $O((n/B) \log \sigma)$ I/Os  and supports queries in $O(\log\log n + \pocc/B)$ I/Os, where $\pocc$ is the number of reported points.
\item
  We also keep another data structure that supports suffix jumps in $O(\log\log n)$ I/Os per  jump when the size of a query string does not exceed $B\log^3 n$.  This structure is the novel part of our external-memory construction and is described in Section~\ref{subsec:short}. 
\end{enumerate}

The above data structure supports pattern matching for strings that cross at
least one selected position, i.e., $|Q|\ge 4r+3$. The index for very short patterns, $|Q|< 4r+3$, is described in Appendix~\ref{sec:smallext}.

Our basic approach is the same as in the internal-memory data structure. In order to find occurrences of $Q$, we locate $Q$ in the tree $\cT'$ that contains all sampled suffixes. 
Using an external memory variant of the suffix tree~\cite{FerraginaG99}, this step can be done in $O(|Q|/(B\log_{\sigma} n) + \log_B n)$ I/Os. 
Then we execute $4r+3$ suffix jump queries and find the loci of all strings
$Q_i=Q[i..|Q|-1]$ for $1\le i < 4r+3$. For every $Q_i$ that occurs in $\cT'$
we answer a two-dimensional range reporting query in $O(\log \log n)$ I/Os,
plus $O(1/B)$ amortized I/Os per reported occurrence.  We can execute each suffix jump in $O(\log n)$ I/Os using the method of Section~\ref{sec:jumps}. This slow method  incurs an additional cost of $O(r\log n)$. However, the total query cost is not affected by suffix jumps if 
$Q$ is sufficiently large, $|Q|\ge B\cdot r\cdot \log n\log_{\sigma} n$.   We will show in Section~\ref{subsec:short} how suffix jumps can be computed in $O(r\log\log n)$ I/Os when the length of $Q$ is smaller than $B\log^3 n$.  Thus a query is answered in 
$O(q/(B\log_{\sigma} n)+ \log_B n + r\log\log n +\occ/B)$ I/Os. 
All parts of our data structure for suffix jumps can be constructed in $O((\frac{1}{r}+\frac{r}{\log_{\sigma}n})\sort(n))$ I/Os; see Section~\ref{sec:construction}. 
The data structure for range reporting can be constructed in $O((n/B)\log\sigma)$ I/Os; see Section~\ref{sec:rangerep}. The data structure for very short patterns can be constructed in $O((n/B)\log\sigma+(1/r)\sort(n))$ I/Os; see Section~\ref{sec:smallext}. Thus the total construction cost is $O((n/B)\log\sigma + (\frac{1}{r}+\frac{r}{\log_{\sigma}n})\sort(n))$. 

We can set  $r=\sqrt{\log_{\sigma}n}$ and  obtain the first result  in
Theorem~\ref{theor:extmem}. 
Alternatively, if $\log\log n = o(\log_B n)$, we can set 
$r=\min(\log_B n/\log\log n,\sqrt{\log_{\sigma}n})$, 
so that the query time is the optimal $O(q/(B \log_{\sigma} n)+ \log_B n+\occ)$.
The construction cost is as follows: $\frac{r}{\log_\sigma n}$ is at most
$\frac{1}{\sqrt{\log_\sigma n}}$, whereas $\frac{1}{r}$ is 
$\max(\frac{\log\log n}{\log_B n},\frac{1}{\sqrt{\log_\sigma n}})$, so the
cost is $O((\frac{\log\log n}{\log_B n}+\frac{1}{\sqrt{\log_\sigma n}})\sort(n))$.
This is the second result in Theorem~\ref{theor:extmem}.

\section{Suffix Jumps for Middle-Length Patterns in External Memory}
\label{subsec:short}
In this section we show how to compute suffix jumps in $O(\log\log n)$ I/Os when the query string is not too long, $|Q|\le B \log^3 n$. 

\paragraph{Overview.}
Our data structure consists of two parts.
The first part follows the method previously employed for short patterns (see  Section~\ref{sec:shortjump}) with small modifications and a different choice of parameters. The set $\cS_2$ is obtained by selecting every $(B^3\log^{10} n)$-th suffix from $\cS$.  We define $\cY_0$ and $\cY$ analogously to $\cX_0$ and $\cX$ in Section~\ref{sec:shortjump}: for each suffix $T[i..]$ in $\cS_2$, $\cY_0$ contains all substrings $T[i..i+f]$ for $0\le f\le B\log^3n$ and $\cY$ contains all substrings $T[i+f_1..i+f_2]$ for $0\le f_1\le 4r+3$ and $0\le f_2\le B\log^3 n$.  Thus $\cY_0$ contains all prefixes of $T[i..i+B\log^3 n]$ and $\cY$ contains all substrings obtained by removing up to $4r+3$ leading symbols from a string in $\cY_0$.  Sets $\cY$ and $\cY_0$ are similar to sets 
$\cX$ and $\cX_0$.  

 \begin{lemma}
\label{lemma:jumpy0}
   For any string $Q$, such that $4r+3<|Q|\le B\log^3 n$, and for all $i$, $1\le i\le 4r+3$,  we can find $LCP(Q[i..],\cY_0)$, the locus of $LCP(Q[i..],\cY_0)$ in $\cT'$, and $|LCP(Q[i..],\cY_0)|$ in $O(\log_B n + |Q|/(B\log_{\sigma}n)+r\log\log n)$ I/Os. 
 \end{lemma}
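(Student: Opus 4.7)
The plan is to port the internal-memory proof of Lemma~\ref{lemma:jumpx0} to external memory. I would replace each dictionary used there by an external counterpart: external cuckoo-style hash tables for $D$ and $D_0$ (defined on $\cY$ and $\cY_0$ in the obvious analogy with $\cX$ and $\cX_0$) that answer look-ups in $O(\log\log n)$ I/Os, and the flat dictionary $D_p$ replaced by a string B-tree $\cB_Y$ indexing $\cY$ as a set of meta-symbol sequences over chunks of $\log_\sigma n$ text symbols. With each B-tree node I would store the name, the length, and (when the represented string is in $\cY_0$) its $\cT'$-locus; and with each $S \in \cY$ in its $D$-entry I would keep a direct pointer to the corresponding node of $\cB_Y$, so that a descent may be resumed from any $S$ in $O(1)$ I/Os.

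The search then follows the internal-memory template closely. First I would match $Q$ against $\cY_0$ by a standard prefix search in $\cB_Y$, producing $P_0 = LCP(Q, \cY_0)$ together with its length, name, and $\cT'$-locus in $O(|Q|/(B\log_\sigma n) + \log_B n)$ I/Os. Next, for each $i = 1, \ldots, 4r+3$ in turn, I would pick the index $t$ that maximizes $p_t$ among indices already processed and execute the three hash chains used in Lemma~\ref{lemma:jumpx0}: retrieve $v = \num(P_t[(i-t)..])$ via item~(3) of $D$, retrieve $\num(S')$ for the longest prefix $S' \in \cY_0$ of the string named $v$ via item~(2) of $D$, and retrieve the $\cT'$-locus of $S'$ from $D_0$. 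Each of these three hash accesses costs $O(\log\log n)$ I/Os. Finally, using the stored $\cB_Y$-pointer of $S'$, I would resume the B-tree descent from $S'$ along $Q[i..]$ to obtain $P_i$, its length, and its $\cT'$-locus.

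Summing costs, the B-tree pays its initial $O(\log_B n)$ root-to-leaf navigation only once and then reads the pattern in blocks of $B \log_\sigma n$ symbols, with at most $r$ restarts each re-reading less than one meta-symbol; so the aggregate B-tree cost is $O(|Q|/(B\log_\sigma n) + \log_B n + r)$. Combined with $O(r \log\log n)$ for the $r$ hash chains, this matches the claimed bound. The main obstacle, and the only genuinely new point compared to the internal-memory proof, is keeping the restarts free of the $\log_B n$ navigation overhead: this is why I pre-store a direct $\cB_Y$-pointer with every string in $\cY$, so the B-tree is never re-entered from its root after the initial search. With those pointers in place, the amortization argument of Section~\ref{sec:datastruc}---that each suffix jump wastes at most one meta-symbol of already-matched text---carries over verbatim and keeps the aggregate pattern-scan cost linear in $|Q|/(B\log_\sigma n) + r$.
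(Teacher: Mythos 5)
There is a genuine gap at the heart of your cost accounting: the claim that, with a stored pointer per string of $\cY$, a string B-tree descent can be ``resumed'' after each suffix jump at no cost beyond reading fresh pattern symbols. A string B-tree is not a trie: to extend the match past $S'$ along $Q[i..]$ you must locate the position of $Q[i..]$ within the (possibly very large) range of strings of $\cY$ prefixed by $S'$, and that forces a fresh root-to-leaf navigation within that range, i.e.\ an additive $\Theta(\log_B |\cY|)=\Theta(\log_B n)$ term per restart that a pointer to $S'$ does not eliminate. Over the $r$ jumps this gives $O(r\log_B n)$, which is not bounded by the $O(\log_B n + r\log\log n)$ budget of the lemma (e.g.\ for constant or polylogarithmic $B$), and it would destroy the optimal $O(q/(B\log_\sigma n)+\log_B n+\occ/B)$ variant of Theorem~\ref{theor:extmem}. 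The alternative of replacing $\cB_Y$ by a meta-symbol trie (the literal port of $D_p$) restores $O(1)$-I/O resumption but then each descent step advances only $\log_\sigma n$ symbols, losing the factor $B$ and yielding $O(|Q|/\log_\sigma n)$ I/Os overall. So neither reading of your construction achieves the stated bound; the amortization of Section~\ref{sec:datastruc} controls the number of wasted pattern symbols, not the per-restart navigation overhead, and it is exactly this overhead that is the new obstacle in external memory.

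The paper's proof avoids restarting altogether. It searches the string B-tree $\cT_Y$ exactly once, for the single string $Q[g..]$ with $g=4r+3$, obtaining one locus $v$; it then recovers each $w_i$ (the parent of the locus of $LCP(Q[i..],\cY_0)$ in the trie $\cT_0$) by a \emph{marked-ancestor} query from $v$: a node of $\cT_Y$ representing $S_v$ is marked with a short string $s$ (of length at most $g$, so $O(1)$ words) whenever $s\circ S_v$ corresponds to a node of $\cT_0$, and one asks for the lowest ancestor of $v$ marked with $Q[i..g]$. Each such query costs $O(\log\log n)$ I/Os (Appendix~\ref{sec:markanc}), giving $O(r\log\log n)$ for all $i$ on top of the single $O(|Q|/(B\log_\sigma n)+\log_B n)$ search; LCPs shorter than $g$ are handled by a small lookup table over the $\sigma^{g+1}=o(\sqrt n)$ strings of length at most $g$. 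If you want to salvage your approach you would need to supply a mechanism with exactly this effect --- answering all $r$ jump queries from one traversal of the pattern --- rather than $r$ independent descents.
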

 \begin{proof}
We keep all strings from $\cY$ in the string B-tree $\cT_Y$. We also keep all strings from $\cY_0$ in a trie $\cT_0$. 
Let $Q'_i=LCP(Q[i..|Q|-1],\cY_0)$, let $u_i$ be the locus of $Q'_i$ in $\cT_0$, and let $w_i$ be the parent node of $u_i$. Let $g=4r+3$. 

We keep a look-up table for  prefixes of  strings in $\cY_0$. This table contains the strings $P$ of length up to $g$ and their loci in $\cT$ for all strings $P$ that are prefixes of strings in $\cY_0$. Since there are $\sigma^{g+1}=o(\sqrt{n})$ different strings of length at most $g$, we can initialize the table in $o(\sqrt{n}/B)$ I/Os. First we   determine whether $|Q'_i|> g$; if $|Q'_i|\le g$, we  can   compute $|LCP(Q,\cY_0)|$ and $u_i$. Using the above described look-up table, this can be done in $O(g)$ I/Os. 
Then we find the $LCP(Q_g,\cY)$ and its locus $v$ in $\cT_Y$. We can find $v$   in $O(|Q|/(B\log_{\sigma}n)+\log_B n)$ I/Os by searching in the string B-tree of $\cY$~\cite{FerraginaG99}. For each $i$ satisfying $|Q'_i|\ge g$,  we find the node $w_i$ in $\cT_0$ using auxiliary data structures that will be described below. Since $u_i$ is a child of $w_i$, we can find each $u_i$ in $O(1)$ I/Os when $w_i$ is known. When we found $u_i$, $|Q'_i|$ can be computed in $O(1)$ I/Os.

Now we describe  how $w_i$ can be found if $|Q'_i|\ge g$ and the node $v$ is known. We will say that a node $v$ of $T_S$ is marked with a string $s$ if $v$ corresponds to some string $S_v$ and there is a  node $w\in \cT_0$  that corresponds to $s\circ S_v$, the concatenation of $s$ and $S_v$. In this case we also keep a pointer from $v$ to $w$. We keep the data structure that answers  marked ancestor queries: for any node $v$ and for any  string $s$ of length at most $g$, find the lowest ancestor of $v$ that is marked with $s$. In Appendix~\ref{sec:markanc} we describe a data structure that supports  marked ancestor queries in $O(\log\log n)$ I/Os provided that  the string depth of all nodes does not exceed $B\log^3 n$.  While a data structure with the same query cost was described previously~\cite{MuthukrishnanM96}, our data structure can be efficiently constructed in the external memory model.

In order to find the node  $w_i$, we look for the lowest ancestor $w'_i$ of $v$ that is marked with $Q[i..g]$. Then we follow the pointer from $w_i'$ to the corresponding node in $\cT_0$. 
Correctness of this procedure can be shown as follows. 
If a string $P[i..|P|-i-1]$, for $0\le i \le g$ and $|P|>g$, is a prefix of some string in $\cY_0$, then $P[g..|P|-g-1]$ is a prefix of some string in $\cY$. This statement can be also extended to tree nodes: if a node $w$ in $\cT_0$ corresponds to a string $P_i=P[i..|P|-i-1]$, then there are two strings in $\cY_0$ that start with $P_i$ and differ in $(|P_i|+1)$-st symbol. Hence for every node $w$ in $\cT_0$ that corresponds to $P_i$ there is a node in $\cT_Y$ that corresponds to  $P[g..|P|-g+i-1]$ and is marked with $P[i..g-1]$. Hence, if $w'_i$ is the lowest ancestor of $v$ marked with $Q[i..g]$, then the node $w_i$ is the lowest node that corresponds to a prefix of $Q[i..]$. 
 \end{proof}

The second part of our construction  is a data structure for searching in a set of $B^3 \log^{O(1)}n$ suffixes from $\cS'$. This structure, described in Lemmas~\ref{lemma:lcpB1} and \ref{lemma:lcpB2}, enables us to finish the search for the LCP by looking among   leaf descendants of $LCP_B(Q[i..],\cY_0)$. First, we show in Lemma~\ref{lemma:lcpBsmall} how a suffix jump  on $B^3\log^{O(1)} n$ suffixes can be implemented when $B$ is bounded by $s\log^3 n$, where $s$ is the block size. This case is easy because we have to search among a poly-logarithmic number of suffixes; hence we can use the same method as in Lemmas~\ref{lemma:longjump} and \ref{lemma:jumpshort}. The case when $B\ge s\log^3 n$ is more challenging: we cannot afford to spend $O(\log B)$ I/Os on a suffix jump when $B$ is sufficiently large. Our structure for this case will be a conceptual B-tree with arity $B/\log^3n$ and with leaves spanning $\log^{O(1)} n$ suffixes. In Lemma~\ref{lemma:lcpB1} we show how to compute a suffix jump on $B/\log^3n$ suffixes. Since the number of suffixes is small relative to the block size, we can keep all relevant information in one block. We can then search among $B\log^{O(1)}n$ suffixes by traversing $O(\log\log n)$ nodes in a root-to-leaf path in this conceptual tree, and then finishing the search on the leaves, in another $O(\log\log n)$ I/Os; see Lemma~\ref{lemma:lcpB2}. In Lemmas~\ref{lemma:lcpB1} and~\ref{lemma:lcpB2} we make an additional assumption that the LCPs between the suffix and the set are $>s$.  We get rid of this assumption and obtain the final result in Lemma~\ref{lemma:extmemjump}.

\paragraph{LCP Queries on a Set of $B^3\log^{10}n$ Suffixes.}
We start with the case when $B$ is small. 
\begin{lemma}
  \label{lemma:lcpBsmall}
Suppose that a subtree $\cT_v$ has $O(B^3\log^{10} n)$ leaves and let $\cV$ denote a set of   suffixes in $\cT_v$. If $B< s\log^3n$, then we can find $LCP(\Suf[i..],\cV)$ for any suffix $Suf$ from $\cS$ for any $i\le 4r+3$ in $O(\log\log n)$ I/Os with a data structure that uses $O(|\cV|)$ words of space. 
\end{lemma}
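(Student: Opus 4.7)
The plan is to reduce the problem to the internal-memory binary-search technique of Section~\ref{sec:jumps} and argue that each step costs $O(1)$ I/Os, so that the total cost matches $O(\log \log n)$. The starting observation is that under the hypothesis $B<s\log^3 n$ and $s=O(\log_\sigma n)$, the set $\cV$ contains only $|\cV|=O(B^3 \log^{10} n) = \log^{O(1)} n$ suffixes. In particular, $\log |\cV| = O(\log\log n)$, so any logarithmic-depth binary search among the suffixes of $\cV$ costs the desired number of I/Os.

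The data structure stores, for each suffix in $\cV$, its text position $f$ and a pointer to its leaf in $\cT'$, together with its rank in the lexicographic ordering of $\cV$. Since $|\cV|$ words fit in this space, we can afford to load all of them at query time, or else to binary search on them. I would then reuse Lemma~\ref{lemma:lcp} as the core primitive: given a sampled suffix $\Suf=T[f_1..]\in \cS'$, another sampled suffix $S=T[f_2..]$ from $\cV$, and an offset $i\le 4r+3$, the length $|LCP(T[f_1+i..],S)|$ is computed in $O(1)$ time by (i) taking $\delta = h(0,i) \le s$, (ii) performing an LCA query on $\cT'$ between the leaves of $T[f_1+i+\delta..]$ and $T[f_2+\delta..]$, and (iii) comparing the $O(1)$ words representing $T[f_1+i..f_1+i+\delta-1]$ and $T[f_2..f_2+\delta-1]$. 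Since $\delta \le s \le B$, these short substrings are contained in $O(1)$ blocks; the leaf pointers, the LCA structure over $\cT'$, and the difference-cover function $h$ are all available in external memory at $O(1)$ I/Os per access. Therefore the LCP primitive costs $O(1)$ I/Os per call.

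With this primitive in hand, the algorithm of Section~\ref{sec:jumps} is applied verbatim to $\cV$. A first binary search over the $|\cV|$ suffixes, sorted lexicographically, identifies the suffix $S_m\in \cV$ maximizing $|LCP(\Suf[i..],S_m)|$; at each of its $O(\log |\cV|) = O(\log\log n)$ steps we invoke the LCP primitive and then compare the symbol following the common prefix to decide the search direction. Two further binary searches of the same cost locate the extreme suffixes $S_1,S_2\in\cV$ satisfying $|LCP(S_1,S_m)|=|LCP(S_2,S_m)|=|LCP(\Suf[i..],S_m)|$. Finally, an LCA query between the leaves holding $S_1$ and $S_2$ returns the locus of $LCP(\Suf[i..],\cV)$ in $\cT'$, and comparing this LCP length against $|\Suf[i..]|$ tells us whether the locus of $\Suf[i..]$ exists in $\cV$.

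The main obstacle is ensuring that each ingredient used by the internal-memory routine is realizable at $O(1)$ I/Os in the external setting: the leaf pointers from sampled text positions, the LCA structure on $\cT'$, and the condition $\delta \le s \le B$ that keeps the short-substring comparison within a single block. All of these are either standard or guaranteed by the assumption $B<s\log^3n$ (which in particular implies $s \le B$ is available); once they are in place, the correctness and complexity of the internal-memory analysis carry over unchanged, yielding $O(\log\log n)$ I/Os and $O(|\cV|)$ words of space.
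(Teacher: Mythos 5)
Your proof is correct and follows essentially the same route as the paper: since $B<s\log^3 n$, the set $\cV$ has only polylogarithmically many suffixes, so the binary search of Section~\ref{sec:jumps}, powered by the $O(1)$-cost LCP primitive of Lemma~\ref{lemma:lcp}, terminates in $O(\log\log n)$ steps of $O(1)$ I/Os each. One small remark: $B<s\log^3 n$ does \emph{not} imply $s\le B$, but you do not need that inequality---the substrings of length $\delta\le s=O(\log_\sigma n)$ fit in $O(1)$ machine words and can therefore be read in $O(1)$ I/Os regardless of the block size.
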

\begin{proof}
  If $B<s\log^3n$, then $\cV$ contains $O(s^3\log^{19} n)=O(\log^{22} n)$ consecutive suffixes. We can find $LCP(\Suf[i..],\cV)$ in $O(\log\log n)$ I/Os by applying Lemma~\ref{lemma:lcp} $O(\log\log n)$ times as described in the second part of Lemma~\ref{lemma:jumpshort}. 
\end{proof}

Now we consider the situation when $B$ is large. 
\begin{lemma}
  \label{lemma:lcpB1}
Let $\cV\subset \cS'$ be a set of $O(B/(s\log^3 n))$  suffixes. Suppose that
$|LCP(\Suf,S)|> s$ for some selected suffix $\Suf \in \cS'$ and for all
suffixes $S$ from $\cV$. Then for all  $i$, $0\le i<4r+3$, we can find
$LCP(\Suf[i..],\cV)$ in $O(1)$ I/Os with a data structure that uses
$O(|\cV|\cdot s\cdot \log^3n)$ space. 
\end{lemma}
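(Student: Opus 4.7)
The arithmetic drives everything: $|\cV|\cdot s\cdot \log^3 n = O(B)$, so the data structure occupies $O(1)$ disk blocks and can be pulled into internal memory with one I/O. It therefore suffices to organise that block so that every $|LCP(\Suf[i..],S_k)|$ with $S_k\in\cV$ can be resolved internally. For each $S_k=T[f_k..]\in\cV$ the block stores: (a) the $O(s)$ characters $T[f_k..f_k+s-1]$, packed in $O(r)$ machine words, for the first-$\delta$-character check in Lemma~\ref{lemma:lcp}; (b) for each of the $O(r)$ shifts $\delta\in[0,s]$ with $(f_k+\delta)\bmod s\in D$, the leaf identifier and sampled-suffix-array rank of $T[f_k+\delta..]$ in $\cT'$; and (c) a compact internal LCP/LCA catalogue spanning these $|\cV|\cdot O(r)$ shifted $\cV$-leaves together with the $\cT'$-neighbourhood in which any query-side shifted leaf $T[f+i+\delta..]$ may lie under the hypothesis. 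The $\log^3 n$ slack in the space bound is precisely the room for (c)---for instance, a Cartesian tree on a $\log^{O(1)} n$-size bracket of $\cT'$'s leaf order, plus precomputed LCPs against $\log^{O(1)} n$ anchor leaves per $(S_k,\delta)$, so that any LCA between a $\cV$-side leaf and a query-side leaf reduces to an internal range-minimum query.

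Given a query $(\Suf=T[f..],i)$, the outer search routine already holds in internal memory the sampled-leaf identifiers of $T[f+j..]\in\cS'$ for all $j\le s$, collected alongside $\Suf$'s own leaf when locating $\Suf$ in $\cT'$. Invoking Lemma~\ref{lemma:lcp} with $\delta=h(0,i)$: read the $\Suf$-side leaf from internal memory, the $\cV$-side leaf from the block, obtain their $\cT'$-LCP via (c), and resolve the $\delta$-character head via (a); sum to get $|LCP(\Suf[i..],S_k)|$; take the max over $k\in[1..|\cV|]$ to obtain $|LCP(\Suf[i..],\cV)|$, with locus in $\cT'$ given by the LCA of the attaining leaves. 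Over all $i\in[0,4r+3)$ this uses $O(1)$ I/Os in total (the single block load) and $O(|\cV|\cdot r)$ internal operations.

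The principal obstacle is component (c): ensuring that every query-reachable shifted leaf admits a useful local representation inside the block without exceeding $O(|\cV|\cdot s\cdot \log^3 n)$ bits. The hypothesis $|LCP(\Suf,S)|>s$ is what makes this tractable---it forces $\Suf$ and every $S_k$ to agree on their first $s$ symbols, so they descend from a common node of $\cT'$ and the query-reachable leaves are confined to a limited bracket of $\cT'$'s leaf order. This reduces the needed LCA information from a global structure on all of $\cS'$ to a polylogarithmic local catalogue per $\cV$-shifted leaf, which the $\log^3 n$ factor in the space bound accommodates.
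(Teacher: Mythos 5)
Your opening arithmetic is right and matches the paper: $|\cV|\cdot s\cdot\log^3 n=O(B)$, so the whole structure fits in $O(1)$ blocks and the task reduces to making every LCP computation block-local. Components (a) and (b) also track the paper (packed first-$s$ characters, plus shifted ranks obtained via $SAI$ after shifting by $h(\cdot,\cdot)$). The gap is in component (c), which is where all the difficulty lives. Your justification for confining the LCA information to a ``limited bracket of $\cT'$'s leaf order'' is that $|LCP(\Suf,S)|>s$ forces $\Suf$ and each $S_k$ to agree on their first $s$ symbols. But the leaves you need to compare are the \emph{shifted} suffixes $T[f+i+\delta..]$ and $T[f_k+\delta..]$ with $\delta=h(\cdot,\cdot)\le s$; these agree only on their first $s-\delta$ symbols (possibly none), and even when they agree on many symbols their positions in the global leaf order of $\cT'$ can be separated by arbitrarily many unrelated suffixes of $\cS'$. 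So the query-side shifted leaf is \emph{not} confined to a polylogarithmic neighbourhood of the $\cV$-side shifted leaves, and no Cartesian tree on a polylogarithmic bracket, nor LCPs against polylogarithmically many anchor leaves, can recover $|LCP(T[f+i+\delta..],T[f_k+\delta..])|$ for an arbitrary query. A genuine LCA in $\cT'$ needs global structure and hence extra I/Os, which is exactly what the lemma forbids.

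The paper's proof supplies the missing mechanism. It never computes an exact LCA between a query-side and a $\cV$-side shifted leaf from block-local data. Instead, for each shift class $(k,j)$ it stores the set $Z_{k,j}$ of shifted ranks of the $(a_j)$-suffixes of $\cV$ and, crucially, for each such suffix a list of $O(\log^3 n)$ precomputed \emph{marginal values}: the extremal ranks of suffixes whose LCP with it falls in each window $[(B/2)i,(B/2)(i+1))$. A predecessor/successor search of the query's shifted rank in $Z_{k,j}$, followed by a lookup among these marginal values, pins down $|LCP(\Suf',\cV)|$ only up to an additive error of $B$; the exact answer is then obtained by a weighted-level-ancestor jump and a $\le B$-symbol descent in a \emph{local} string B-tree on $\cV$ (all of which fits in the block). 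The $\log^3 n$ factor in the space bound is consumed by the marginal values, not by a local LCA catalogue. Without this approximate-then-refine step, or some equivalent way to turn a rank comparison into an LCP estimate using only $O(B)$ precomputed words, your plan does not go through.
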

\begin{proof}
The first step is to find the lexicographic position of $\Suf'=\Suf[i..]$ in 
$\cV$. We define $\rank(\Suf')$ as its position in the suffix array of $T$, 
i.e., its position in the lexicographic order of all suffixes from $\cS$.
We would just need then to find the position of $\rank(\Suf')$ among the ranks
of the suffixes in $\cV$.
To compute $\rank(\Suf')$ we may use our sampled inverse suffix array, $SAI$.
However, $SAI$ is defined for selected suffixes only. Our plan is to {\em shift}
the suffix $\Suf'$ forward until finding a selected suffix. The resulting rank
is very different from that of $\Suf'$, but it can still be compared with the
ranks of other similarly shifted suffixes that share the first characters with
$\Suf'$.

Let $\Suf = T[f-i..]$, so that $\Suf' = \Suf[i..] = T[f..]$. Let $f = gs+k$
for some $0 \le k < s$. Say that an $(a_j)$-suffix of $\cV$ is one of the form
$T[ls + a_j..]$. Since $\Suf$ and suffixes in $\cV$ share their first $s$ symbols, the rank
of $\Suf'$ among the $(a_j)$-suffixes of $\cV$ stays the same if we shift all
the suffixes by $h(k,a_j)$. Now $\Suf'[h(k,a_j)..] = 
T[f+h(k,a_j)..]=T[gs+k+h(k,a_j)..]$ is a selected suffix and we can find 
$x_j = \rank(\Suf'[h(k,a_j)..])$ using $SAI$. We denote this {\em shifted rank}
as $\shrank(\Suf',k,j) = \shrank(T[f..],k,j)=\rank(T[f+h(k,a_j)..])$. Note, on
the other hand, that an $(a_j)$-suffix shifted by $h(k,a_j)$ is still a selected
suffix, and thus its rank is known in our index.

For each $0\le k < s$ and $0 \le j \le 4r+3$, we create a data structure
$Z_{k,j}$ containing
all the shifted ranks of $(a_j)$-suffixes of $\cV$ so that they can be compared 
with a suffix of the form $T[gs+k..]$. That is, $Z_{k,j} = \{ \shrank(S,k,j),~ S \in \cV \textrm{~is an~}(a_j)\textrm{-suffix} \}$. Shifted ranks stored in 
$Z_{k,j}$ can then be used to find the LCP of a suffix of the form $T[gs+k ..]$
among $(a_j)$-suffixes: the closest suffixes are 
$prev_j= \pred(x_j, Z_{k,j})$ and $next_j=\newsucc(x_j,Z_{k,j})$.

However, we cannot compare ranks of suffixes shifted by different values
$h(k,a_j)$. Moreover, $prev_j$ and $next_j$ for different values of $j$ can be stored in different parts of the sampled suffix array. In order to find the LCP between all $prev_j$ ($next_j$) and the shifted suffix $Suf'$, we need to answer $4r+3$ queries on different parts of the suffix array. This would increase the cost significantly. 
We will use a different mechanism to obtain an approximation of 
$LCP(\Suf',\cV)$ with additive error $B$, and then complete the calculation. 
With this goal in mind, for every suffix $S \in \cV$ whose shifted rank 
$\shrank(S,k,j)$ is stored in some $Z_{k,j}$ we will store $2+4\log^3 n$ 
{\em marginal values} associated with $S'=S[h(k,a_j)..]$: for 
$i=0,1,\ldots, 2\log^3 n-1$, the marginal value $\pmargin(S',i)$ is the rank 
of the leftmost suffix $S_i< S'$ such that
$(B/2)\cdot i \le |LCP(S_i,S')|< (B/2)(i+1)$; for $i=2\log^3n$, $\pmargin(S',i)$
is the rank of the leftmost suffix $S_i< S'$ such that $|LCP(S_i,S')|\ge B\log^3
n$. Similarly, the marginal value $\nmargin(S',i)$ is the rank of the rightmost
suffix $S_i> S'$ such that $(B/2)\cdot i \le |LCP(S_i,S')|< (B/2)(i+1)$ if
$i<2\log^3 n$, and the rank of the rightmost suffix $S_i> S'$ such that 
$|LCP(S_i,S')|\ge B\log^3 n$ if $i=2\log^3 n$. To complete the search, we will 
also use a string B-tree $\cT_{\cV}$ containing the suffixes in $\cV$.

Consider suffix $T[f..]$ and the suffix $S = T[f'..]$ with shifted rank 
$next_j \in Z_{k,j}$, thus $\rank(T[f..]) \le \rank(T[f'..])$ and
$\shrank(T[f..],k,j)\le \shrank(T[f'..],k,j)$. Let $S' = S[h(k,a_j)..] =
T[f'+h(k,a_j)..]$, and let $i$ be such that $\pmargin(S',i) \le 
\shrank(T[f..],k,j) < \pmargin(S',i+1)$. Then 
$\ell=LCP(T[f+h(k,a_j)..],T[f'+h(k,a_j)..])$ satisfies $(B/2)i\le \ell <
(B/2)(i+1)$. Since $h(k,a_j)\le s$ and the first $s$ symbols in $T[f..]$ and
$T[f'..]$ are equal, we have $(B/2)i \le |LCP(T[f..],T[f'..])|\le
(B/2)(i+1)+s\le (B/2)i + B$. Hence data structures $Z_{k,j}$ with  marginal
values provide an estimate for  the $LCP$ of any suffix and the set of
$(a_j)$-suffixes of $\cV$. The desired value $LCP(\Suf[i..],\cV)$ is the 
maximum LCP value over all the sets of $(a_j)$-suffixes in $\cV$. Note that 
the index $i$ is smaller than $2\log^3n$:
if $i=2\log^3 n$, then $\ell\ge B \log^3 n$ and $Q$ is a long pattern.
Estimating the LCP with respect to $prev_j$ is analogous. Since all the data 
structures we have described fit into one block, this part of the query takes 
only $O(1)$ I/Os. 

Now we summarize our method and describe the complete procedure to answer an
LCP query for $\Suf'=\Suf[i..]$ for some selected suffix $\Suf=T[f-i..]=
T[gs+a_t]$. Let $pos(f-i)=g+t$ denote the position corresponding to $f-i$ in 
$SAI$. Since we assume $s<B$, we can read $SAI[pos(f-i)]$, $SAI[pos(f-i)+1]$, 
$\ldots$, $SAI[pos(f-i)+s]$ into main memory in $O(1)$ I/Os. We also read the 
data structures for $\cV$ into main memory. For each $0\le j\le 4r+3$, we 
compute $x_j=SAI[pos((f-i)+h(a_t+i,a_j))]$ and find $prev_j=\pred(v_j, Z_{k,j})$
and $next_j=\newsucc(v_j,Z_{k,j})$. We then estimate $LCP(\Suf',prev_j)$ and 
$LCP(\Suf',next_j)$ using marginal values as described above. Let $S$ denote 
the suffix that provides the longest estimated LCP. Suffix $S$
provides an approximation for $LCP(\Suf',\cV)$ with an additive error of 
at most $B$. Let $u$ denote the locus of $LCP(S,\cV)$ in the string B-tree 
$\cT_{\cV}$. Node $u$ can be found in $O(\log \log n)$ I/Os using a weighted 
level ancestor query~\cite{AmirLLS07} from the leaf that contains the suffix 
$S$. We can then move 
down from $u$ by at most $B$ symbols and identify $LCP(\Suf',\cV)$ in $O(1)$ 
I/Os using the standard search procedure in a string B-tree~\cite{FerraginaG99}:
For a node $u$ of size $|u| = O(B^3 \log^{10} n)$, the search cost is 
$O(B/B + \log_B |u|)=O(1)$, since $B \ge s\log^3 n$. 

Since $\sum_{j} |Z_{k,j}| = |\cV|$, the total space is dominated by the 
$O(\log^3 n)$ marginal values stored for every element of every $Z_{k,j}$,
adding over the $s$ values of $k$. This gives a total of $O(|\cV|\cdot s \cdot
\log^3 n)$.
\end{proof}

Now we will show how the result of Lemma~\ref{lemma:lcpB1} can be extended to a set with $O(B^3\mathrm{polylog}(n))$ suffixes and finish the description of suffix jumps on middle-length patterns. 

\begin{lemma}
  \label{lemma:lcpB2}
Let $\cV$ denote a set of $O(B^3\log^{10} n)$ consecutive  suffixes in $\cT'$.
Suppose that, for some selected suffix $\Suf \in \cS$, some $0 \le i\le 4r+3$,
and for all suffixes $S \in \cV$, it holds $|LCP(\Suf[i..],S)|> s$. Then  we can find $LCP(\Suf[i..],\cV)$ in $O(\log\log n)$ I/Os with a structure that uses $O(|\cV|)$ space. 
\end{lemma}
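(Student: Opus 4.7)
The plan is to build a conceptual B-tree $\cB$ over the lexicographically sorted suffixes of $\cV$, reducing the query to a root-to-leaf walk in which each internal node is processed by Lemma~\ref{lemma:lcpB1} and the reached leaf is processed by Lemma~\ref{lemma:lcpBsmall}. I set the arity of internal nodes to $\alpha=\Theta(B/(s\log^3 n))$, so that a Lemma~\ref{lemma:lcpB1} structure built on the $\alpha-1$ pivot suffixes separating an internal node's children occupies $O(\alpha\cdot s\log^3 n)=O(B)$ space and fits into a single block. Each leaf of $\cB$ covers $\Theta(\log^{O(1)} n)$ consecutive suffixes of $\cV$, so that leaves can be handled by Lemma~\ref{lemma:lcpBsmall} in $O(\log\log n)$ I/Os. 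Since $|\cV|=O(B^3\log^{10} n)$ and the hypothesis $B\ge s\log^3 n$ forces $\alpha\ge 2$, with appropriate choice of constants the height of $\cB$ is $O(\log\log n)$.

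To answer a query for $\Suf[i..]$, I descend from the root. At an internal node $u$ I load its block, invoke Lemma~\ref{lemma:lcpB1} on its pivot set in $O(1)$ I/Os, and obtain the pivot $S^*$ that approximately maximizes $LCP(\Suf[i..],\cdot)$ within additive error $B$. Because this slack may misrank $S^*$ by a constant number of positions among the pivots, I descend into a constant-width neighborhood of children around $S^*$ and recurse in each; upon reaching a leaf I invoke Lemma~\ref{lemma:lcpBsmall} once. Summing a constant branching factor and $O(1)$ I/Os per level over $O(\log\log n)$ levels, plus a single $O(\log\log n)$-cost leaf invocation, the total query cost is $O(\log\log n)$ I/Os. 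Once the best-matching suffix $S\in\cV$ is identified along the walk, the exact locus of $LCP(\Suf[i..],\cV)$ in the string B-tree of $\cV$ is recovered by the weighted-level-ancestor mechanism already used inside Lemma~\ref{lemma:lcpB1}.

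For space, there are $O(|\cV|/(\alpha\cdot\log^{O(1)} n))$ internal nodes, each using $O(B)$ words, which sums to $O(|\cV|)$ words once the leaf size is chosen at least $\Omega(s\log^3 n)$; the leaf instances of Lemma~\ref{lemma:lcpBsmall} contribute another $O(|\cV|)$ words. The main obstacle will be justifying correctness of the constant-width descent: I must argue that the $B$-additive error in Lemma~\ref{lemma:lcpB1} can only shift the nominal best pivot by a bounded number of positions in the pivot ordering, so that the true optimum remains within the constant-width fan that I explore. This follows because pivots inside the same subtree share a long common prefix determined implicitly by the path from the root to $u$, so beyond that prefix the residual comparisons between $\Suf[i..]$ and the various pivots are governed by symbols where the LCP either drops by far more than $B$ or is already captured by the approximate answer, limiting the effect of the $B$-slack to a bounded neighborhood of the true winner.
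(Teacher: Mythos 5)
Your overall architecture (a conceptual B-tree of arity $\Theta(B/(s\log^3 n))$ with polylogarithmic-size leaves, Lemma~\ref{lemma:lcpB1} at internal nodes, the Lemma~\ref{lemma:lcpBsmall}-style binary search at the reached leaf) is exactly the paper's, and your space accounting is sound. The gap is in the descent. First, your premise that Lemma~\ref{lemma:lcpB1} returns only an answer with additive error $B$ is a misreading: that error is an intermediate state inside its proof, which the lemma itself removes via a weighted-level-ancestor query and a short walk in the string B-tree of its own point set; as stated, the lemma returns the exact $LCP(\Suf[i..],\cdot)$ over its set. Second, your patch for this (nonexistent) error fails on both counts. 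The claim that an additive-$B$ slack can displace the reported pivot by only a bounded number of positions in the pivot ordering is false in general: the pivots whose LCP with $\Suf[i..]$ is within $B$ of the maximum form a contiguous interval around the insertion point that can span arbitrarily many children (for instance when many consecutive pivots have equal LCP with the query), so no constant-width fan is guaranteed to contain the true optimum. And even granting it, recursing into $c\ge 2$ children at each of $\Theta(\log\log n)$ levels visits $c^{\Theta(\log\log n)}=\log^{\Theta(1)}n$ nodes at $O(1)$ I/Os each, so your total would be $\log^{\Theta(1)}n$ I/Os, not the claimed $O(\log\log n)$.

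The paper avoids all of this by letting each internal node's set $\cA(\nu)$ consist of the smallest and the largest suffix of every child. Since $|LCP(\Suf[i..],S)|$ is unimodal in the lexicographic rank of $S$ (non-decreasing toward the insertion point of $\Suf[i..]$ and non-increasing past it), the representative attaining the exact maximum of $LCP(\Suf[i..],\cA(\nu))$ --- which Lemma~\ref{lemma:lcpB1} delivers --- always lies in the unique child whose range contains the true argmax over $\cV$. The descent is therefore a single root-to-leaf path with $O(1)$ I/Os per internal node, finished by one $O(\log\log n)$-I/O leaf search. If you retain your separating-pivot design instead of min/max representatives, you must additionally disambiguate between the two children adjacent to the winning pivot, but that is a constant-factor matter; the essential repair is to use the exact LCP from Lemma~\ref{lemma:lcpB1} and descend along one path only.
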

\begin{proof}
  We assign all suffixes of $\cV$ to nodes of a conceptual tree $\cTA$. Every leaf of $\cTA$ stores $s^3 \log^{9}n$ suffixes and every internal node has $B/(s\log^3 n)$ children.  We associate a set $\cA(\nu)$ with every node $\nu$ of $\cTA$; $\cA(\nu)$ contains two representative suffixes, i.e., the smallest and the largest suffix,  from every child of $\nu$. 
  
  In order to find $LCP(\Suf,\cV)$ we start at the root node of $\cTA$ and move down to the leaf.  In every internal node $\nu$, we can find the child that contains $LCP(\Suf,\cA(\nu))$ in $O(1)$ I/Os using Lemma~\ref{lemma:lcpB1}. When we reach a leaf node $\ell$, we can finish the search in $O(\log\log n)$ I/Os: Let $v$ denote the lowest common ancestor of all suffixes from $\ell$ in the suffix tree $\cT'$ and let $v_1$ denote the child of $v$ that contains $LCP(\Suf[i..],\cV)$. The subtree $\cT_v$ rooted at $v_1$ has at most $O(\log^{12} n)$ leaves. 
Since we know that the locus of $LCP(\Suf[i..],\cV)$ is in $\cT_v$, we can find it in $O(\log\log n)$ I/Os by applying Lemma~\ref{lemma:lcp} $O(\log\log n)$ times; see the second part of Lemma~\ref{lemma:jumpshort}.

The total number of internal nodes in $\cTA$ is $O(|\cV|/(s^3\log^9 n))$. Hence the total number of suffixes in all $\cA(\nu)$ is also bounded by $O(|\cV|/(s^3\log^9 n))$. Since each data structure for $\cA(\nu)$ needs $O(|\cA(\nu)|s\log^3 n)$ space, the total space usage is bounded by $O(|\cV|)$. 
\end{proof}

\begin{lemma}
\label{lemma:extmemjump}
 Suppose that we know $LCP(Q[i..q-1],\cY_0)$ and their loci in $\cT'$ for all
$0 \le i\le 4r+3$. Suppose that we also know $Q'=LCP(Q[j..q-1],\cS')$ and its locus in $\cT'$ for some $0\le j\le 4r+3$. Let $Q_f=Q[f..q'-1]$ where $q'=|Q'|$ and $f\ge j$.  If $|LCP(Q[f..q'-1],\cY_0)|\ge \log^3 n$, then we can compute $Q'_f=LCP(Q_f,\cS')$  and its locus in $\cT'$ in $O(\log \log n)$ I/Os for any $f\le 4r+3$. 
\end{lemma}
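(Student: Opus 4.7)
The approach mirrors Lemma~\ref{lemma:jumpshort}: we use the given $\cY_0$-locus to restrict the search to a small subtree of $\cT'$ that must contain the answer, and then invoke Lemma~\ref{lemma:lcpB2} (or Lemma~\ref{lemma:lcpBsmall} when $B<s\log^3 n$) inside that subtree in $O(\log\log n)$ I/Os.

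Let $v[l..r]$ and $\ell_A$ denote the locus in $\cT'$ and the length of $LCP(Q[f..q-1],\cY_0)$, both given by hypothesis. Since $Q_f$ is a prefix of $Q[f..q-1]$, we have $|LCP(Q_f,\cY_0)|=\min(\ell_A,q'-f)$, and this quantity is $\ge\log^3 n$ by hypothesis. If it equals $q'-f$, then $Q_f$ already prefixes a string of $\cY_0$, hence a suffix of $\cS'$, so $Q'_f=Q_f$ and its locus is recovered from $v[l..r]$ (or an ancestor at string depth $q'-f$) in $O(1)$ I/Os. Otherwise $\ell_A<q'-f$, and we descend to the child $v_1$ of $v$ whose edge begins with the meta-character $Q[f+\ell_A..f+\ell_A+\log_\sigma n-1]$; if no such child exists, $v[l..r]$ itself is the locus of $Q'_f$.

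If $v_1$ exists, let $\cV$ be the set of suffixes in $\cT_{v_1}$. I claim no suffix of $\cV$ belongs to $\cS''$: any such suffix would share more than $\ell_A$ symbols with $Q_f$, and since $q'-f\le|Q|\le B\log^3 n$, $\cY_0$ stores a prefix of that suffix long enough to witness this, which would force $|LCP(Q_f,\cY_0)|>\ell_A$, a contradiction. Because $\cS''$ selects every $(B^3\log^{10}n)$-th suffix of $\cS'$, it follows that $|\cV|=O(B^3\log^{10}n)$, qualifying $\cV$ for Lemma~\ref{lemma:lcpB2}. To meet its precondition, we take $\Suf$ to be any leaf in the subtree of $Q'$'s known locus; then $\Suf$ starts with $Q'=Q[j..q'-1]$, so $\Suf[f-j..]$ starts with $Q_f$, and setting $i=f-j\in[0,4r+3]$ yields $|LCP(\Suf[i..],S)|\ge\ell_A\ge\log^3 n>s$ for every $S\in\cV$. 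The lemma then returns $LCP(\Suf[i..],\cV)=LCP(Q_f,\cS')=Q'_f$ together with its locus in $\cT'$ in $O(\log\log n)$ I/Os.

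The main obstacle is the subtree-size argument: one must verify that $\cY_0$ retains prefixes of $\cS''$-suffixes long enough to preclude an unseen $\cS''$-leaf inside $\cT_{v_1}$, which is exactly where the ambient assumption $|Q|\le B\log^3 n$ enters. The remaining ingredients---routing Lemma~\ref{lemma:lcpB2}'s precondition through the known locus of $Q'$ and checking that the shift $f-j$ stays in $[0,4r+3]$---are straightforward once the correct representative $\Suf$ is identified.
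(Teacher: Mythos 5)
Your proposal is correct and follows essentially the same route as the paper, which itself just says the lemma ``is proved in the same way as Lemma~\ref{lemma:jumpshort}'': use the $\cY_0$-locus to isolate a subtree with $O(B^3\log^{10}n)$ leaves (since $\cY_0$ is built from the set $\cS_2$ sampling every $(B^3\log^{10}n)$-th suffix --- what you call $\cS''$ is the paper's $\cS_2$ here) and finish with Lemma~\ref{lemma:lcpB2} or Lemma~\ref{lemma:lcpBsmall}. Your write-up is in fact more careful than the paper's, since you explicitly verify the $|LCP|>s$ precondition of Lemma~\ref{lemma:lcpB2} via a representative leaf under the known locus of $Q'$ with shift $i=f-j$.
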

Lemma~\ref{lemma:extmemjump} is proved in the same way as Lemma~\ref{lemma:jumpshort}. Either $LCP(Q_f,\cS')=LCP(Q_f,\cY_0)$ or we can identify a subtree $\cT_v$ of $\cT'$, such that 
$\cT_v$ has $O(B^3\log^{10}n)$ leaves and $LCP(Q_f,\cS')$ is in $\cT_v$. Using
Lemma~\ref{lemma:lcpB2}, we can then find $LCP(Q_f,\cS')$ in $\cT_v$ with
$O(\log\log n)$ I/Os. 

\paragraph{Putting All Parts Together.}
Now we are ready to describe suffix jumps on middle-length patterns. One additional component is needed to compute suffix jumps when the query pattern (respectively its LCP) is short. We keep the same data structure as described above that supports  suffix jumps for short query strings, $|Q|\le \log^3 n$. 
That is, we keep the set $\cX_0$ and a data structure on $\cX_0$ equivalent to the structure of Lemma~\ref{lemma:jumpy0}. 

For a query pattern $Q[0..|q|]$, we proceed as follows. Let $l=\min(q,\log^3 n)$. We start by identifying $LCP(Q[i..l],\cX_0)$  and their locus nodes $\nu_i$ for all $0\le i\le 4r+3$ as described in Lemma~\ref{lemma:jumpy0}.  If $q\le \log^3 n$, we complete the search as explained in Section~\ref{sec:shortjump}.  If $q> \log^3 n$, we also find $LCP(Q[i..q],\cY_0)$ and their loci $\mu_i$ using Lemma~\ref{lemma:jumpy0}. 

Suppose that $Q'=LCP(Q[j..q-1],\cS')$ and its locus in $\cT'$ are known and we want to compute $Q_f=LCP(Q'[f..q'-1],\cS')$ for $q'=|Q'|$ and for some $f$ satisfying $j< f\le 4r+3$.
If $q\le \log^3 n$, we finish the search as explained in Section~\ref{sec:shortjump}. 
Suppose that $q> \log^3 n$.  If the depth of $\nu_f$ is smaller than $\log^3 n$, then consider   the child $\nu'_f$ of $\nu_f$ that contains the locus of $Q_f$. The node  $\nu'_f$  has $O(\log^{10} n)$ leaf descendants: recall that  $\cX_0$ contains substrings corresponding to suffixes from $\cS''$ and $\cS''$ contains every $\log^{10}n$-th suffix from $\cS'$. Hence we can also complete the search as described in Section~\ref{sec:shortjump}.  Otherwise let $\rho_f$ denote the lowest of the two nodes, $\mu_f$ and $\nu_f$. If $\rho_f$ is not the locus node, let $\rho'_f$ be the child of $\rho_f$ that contains $Q_f$. The depth of $\rho'_f$ is at least $\log^3 n$ and it has $O(B^3\log^{10}n)$ leaf descendants because $\rho'_f$  is below $\mu_f$. Hence we can find $Q_f$ and its locus node  in $O(\log\log n)$ I/Os using either Lemma~\ref{lemma:lcpBsmall} or Lemmas~\ref{lemma:lcpB1} and~\ref{lemma:lcpB2}.
\begin{lemma}
  \label{lemma:shortlociB}
Suppose that $|Q|\le B\log^3 n$. In $O(r\log\log n +|Q|/(B\log_{\sigma} n))$ I/Os we  can find all existing loci of $Q[i..|Q|]$, $0\le i < 4r+3$, in $\cT'$.
\end{lemma}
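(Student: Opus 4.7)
The plan is to combine the two coarse-search structures ($\cX_0$ and $\cY_0$) with the external-memory jump primitives developed in Lemmas~\ref{lemma:extmemjump}, \ref{lemma:lcpBsmall}, \ref{lemma:lcpB1} and~\ref{lemma:lcpB2}, following exactly the outline given in the paragraphs preceding the lemma. First I would run the preprocessing that produces the coarse LCPs: invoke Lemma~\ref{lemma:jumpy0} once against $\cX_0$ with the truncated query of length $l=\min(q,\log^3 n)$ to obtain, for every $0\le i\le 4r+3$, the string $LCP(Q[i..l-1],\cX_0)$ and its locus $\nu_i$ in $\cT'$; if $q>\log^3 n$, invoke Lemma~\ref{lemma:jumpy0} a second time against $\cY_0$ on the full query to obtain $LCP(Q[i..q-1],\cY_0)$ and its locus $\mu_i$. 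Each invocation costs $O(\log_B n+|Q|/(B\log_{\sigma}n)+r\log\log n)$ I/Os, and since $|Q|\le B\log^3 n$ we have $\log_B n = O(\log\log n)$, so the preprocessing fits in the claimed bound.

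Next I would compute the actual loci $Q_i=LCP(Q[i..q-1],\cS')$ and their loci in $\cT'$ for $i=0,1,\ldots,4r+3$, following the same iterative pattern-matching skeleton used in Section~\ref{sec:datastruc}: at step $f$ we restart from the previously computed $u_t$ of maximum string depth $q_t$, perform one suffix jump to land in $\cT'$ near $Q_f$, and then traverse downward until the pattern is exhausted or a mismatch is found. For the jump I would case-split exactly as in the discussion before the lemma. If $q\le \log^3 n$, or if $\nu_f$ lies at depth $<\log^3 n$, the child of $\nu_f$ containing the locus of $Q_f$ has at most $O(\log^{10} n)$ leaf descendants (by the $\log^{10}n$-sampling used for $\cS''$), so the jump completes in $O(\log\log n)$ I/Os using the short-pattern finisher of Section~\ref{sec:shortjump}. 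Otherwise I take $\rho_f$ to be the deeper of $\mu_f$ and $\nu_f$; the hypothesis $|LCP(Q[f..],\cY_0)|\ge \log^3 n$ of Lemma~\ref{lemma:extmemjump} then holds, and a single application of that lemma completes the jump in $O(\log\log n)$ I/Os (internally invoking Lemma~\ref{lemma:lcpBsmall} or Lemmas~\ref{lemma:lcpB1} and~\ref{lemma:lcpB2} on the $O(B^3\log^{10}n)$-sized subtree of $\cT'$ rooted at the appropriate child of $\rho_f$). Summed over $i=0,\ldots,4r+3$, the jumps contribute $O(r\log\log n)$ I/Os.

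Finally, the downward traversals along $\cT'$ that follow each jump are handled through the external suffix tree / string B-tree of $\cT'$ and amortize in the same way as the internal-memory analysis: each meta-symbol descent advances the tracked depth by $\log_\sigma n$ while each jump can backtrack by at most $\log_\sigma n$, so the total amount of pattern consumed is $O(|Q|+r\log_\sigma n)$, contributing $O(|Q|/(B\log_\sigma n)+r)$ I/Os. Adding the three components yields the claimed $O(r\log\log n+|Q|/(B\log_\sigma n))$ bound.

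The main obstacle is the correctness of the hybrid choice between $\nu_f$ and $\mu_f$ and the verification that each branch of the case-split actually satisfies the preconditions of the auxiliary lemma it invokes. Concretely, one must check that whenever $\nu_f$ is shallow the single child of $\nu_f$ containing the locus of $Q_f$ has at most $O(\log^{10}n)$ descendants (so the short-pattern method applies), and that whenever $\mu_f$ is used the child of $\rho_f$ containing the locus of $Q_f$ has at most $O(B^3\log^{10}n)$ descendants and $|LCP(Q[f..],\cY_0)|\ge \log^3 n$ (so Lemma~\ref{lemma:extmemjump} applies). Both follow from the sampling densities of $\cS''$ and $\cS_2$, together with the fact that $\rho_f$ is defined to be the deeper of the two approximating ancestors.
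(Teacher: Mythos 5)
Your proposal follows the paper's own argument for this lemma essentially verbatim: the paper establishes Lemma~\ref{lemma:shortlociB} precisely by the preceding ``Putting All Parts Together'' discussion, i.e., coarse LCPs against $\cX_0$ and $\cY_0$ obtained via (the analogue of) Lemma~\ref{lemma:jumpy0}, the depth-based case split between the short-pattern finisher of Section~\ref{sec:shortjump} and Lemma~\ref{lemma:extmemjump} with $\rho_f$ the deeper of $\nu_f$ and $\mu_f$, and the amortized downward traversals. One minor caveat: your claim that $\log_B n=O(\log\log n)$ does not follow from $|Q|\le B\log^3 n$ (it would require an assumption on $B$ itself); the paper's lemma statement likewise silently drops the $\log_B n$ term appearing in Lemma~\ref{lemma:jumpy0}, which is harmless only because the query bound of Theorem~\ref{theor:extmem} carries an explicit $\log_B n$ term that absorbs it.
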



\section{Construction Algorithms: External Memory}
\label{sec:consextmem}
We consider the set of strings $P_{ij}=T[si+a_j..s(i+1)+a_j-1]$, i.e., all strings of length $s$ that start at selected positions. We view $P_{ij}$ as strings of length $O(s/\log_{\sigma}n)$ over an alphabet of size $\sigma^{\rho}$ for some $\rho$, $(1/2)\log_{\sigma}n\le \rho< \log_{\sigma}n$. In other words, we view $P_{ij}$ as sequences of meta-symbols, such that each meta-symbol represents $\rho=\Theta(\log_{\sigma}n)$ symbols and each meta-symbol fits into one word. Since $s=\Theta(r^2)$, we can sort all strings $P_{ij}$ in $O(\frac{nr}{B\log_{\sigma}n}\log_{M/B}n)=O((r/\log_{\sigma}n)\sort(n))$ I/Os~\cite{ArgeFGV97}.

We assign lexicographic names to every string and construct texts $T_k= [t_{a_k}...t_{a_k+s-1}][t_{a_k+s}...t_{a_k+2s-1}]...$ for $k=0$, $\ldots$, $4r+3$. That is, the $i$-th symbol in $T_k$ is the lexicographic name of the string $T[a_k+is..a_k+(i+1)s-1]$. 
Let $\oT$ denote the concatenation of texts $T_k$, $\oT=T_1\$T_2\$\ldots T_{4r+3}\$ $. Since $\oT$ consists of $O(n/r)$ symbols, we construct a suffix array, a suffix tree, and an inverse suffix array $SAI$  for sampled suffixes in $O(\sort(n/r))$ I/Os~\cite{KSB06,FFM00}.  We can also construct the string B-tree on suffixes within the same time bounds~\cite{FFM00}.  To support suffix jumps, we also  need a data structure that answers weighted level ancestor queries.  A construction algorithm  for this data structure is described in Section~\ref{sec:wlacons}. This algorithm runs in $O(\sort(n/r))$ I/Os. 

Now we show how external data structures for suffix jumps are generated. Recall that  $\cY$ is the set of strings   $T[i+f_1..i+f_2]$, $0\le f_1\le 4r+3$ and $0\le f_2\le B\log^3 n$, such that $T[i..]\in \cS_1$. $\cY$ contains $O(n/\log^4n)$ strings of total length $O(n/\log n)$. We can generate all strings from $\cY$ in $O(n/(B\log n))$ I/Os: first, we identify the starting positions of strings in $\cS_1$; since the suffix array is already constructed we can find the indexes $i$, such that $T[i..]\in \cS'$ in $O(n/(Br))$ I/Os. Then we traverse the text $T$ and generate strings  $T[i+f_1..i+f_2]$ in $O(n/(B\log n))$ I/Os.  When strings are generated we insert all strings from $\cY_0$ into $\cT_0$ and all strings from $\cY$ into $\cT_Y$. Next, we mark strings in $\cT_0$. For every  node $u$ of $\cT_Y$ we generate a string corresponding to $u$. We store these strings in a set $L_Y$; for each string a pointer to the corresponding node $u\in \cT_Y$ is also stored. We also collect strings corresponding to nodes of $\cY_0$, sort them, and store them in the list $L'$. For each string in $L'$  we also store the corresponding node of $\cT_0$. For every $j$, $0\le j\le 4r+3$, we proceed as follows. We remove $j$ leading symbols from every string in $L_Y$ and sort the resulting strings. Let $L_j$ be the list of lexicographically sorted strings from $L_Y$ with $j$ first symbols removed. We simultaneously traverse $L'$ with $L_j$: if some string $s$ occurs in both lists, we add $s$ to the list $M_j$. For each string in $M_j$, we also keep the index of the corresponding node in $\cT_0$ and the string $s^t$ of removed symbols. Then we sort the list $M_j$ according to the in-order of nodes in $\cT_0$. Now we know with what strings nodes of $\cT_0$ must be marked. Finally we traverse the tree $\cT_0$ and the list $M_j$. When we visit a node $u$ of  $\cT_0$ that corresponds to a string $s$, we find $s$ in the list $M_j$ and mark $u$ with $s^t$. When we performed this procedure for all $j$, $0\le j\le 4r+3$, all nodes of $\cT_0$ are marked correctly. Finally we construct the marked ancestor data structure in $O(\sort(n/r))$ I/Os; see Sections~\ref{sec:markanc} and~\ref{sec:wlacons}.  Data structures for sets $\cX$ and $\cX_0$ can be created in the same way.  The total construction cost for data structures supporting suffix jumps is bounded by the cost of sorting strings in $\cY$ (resp. the cost of sorting strings in $\cX$). Hence data structures for suffix jumps can be constructed in $O(\sort(n/r))$ I/Os.

In order to construct data structures of Lemma~\ref{lemma:lcpB1} and Lemma~\ref{lemma:lcpB2} we need to know shifted ranks of some suffixes, i.e., we must know the rank of $T[f+h(i,j)..]$ for every $T[f..]$ in $\cV$. This information can be also obtained by careful  sorting and extracting data from the inverse suffix array. We traverse the suffix array of sampled suffixes and mark every $(s\log^3 n)$-th suffix. Next we sort marked suffixes by their starting positions in the text. Let $L_m$ denote the list of marked suffixes.  We traverse the inverse suffix array $SAI$; for every position $SAI[x]$ that corresponds to a marked suffix $T[f..]$, we add the values of $SAI[x+1]$, $\ldots$, $SAI[x+s]$ to the entry for $T[f..]$. Next we again sort the entries of $L_m$ (with added information) by their ranks.  Since we now know the shifted ranks of marked suffixes, we can construct data structures for sets $\cV$. The construction cost is dominated by the cost of sorting strings $P_{ij}$ and the cost of constructing the suffix array for $\oT$. Hence our data structure can be constructed in $O((1/r)\sort(n)+ (r/\log_{\sigma}n)\sort(n))$ I/Os. 

}

\section{Range Reporting}
\label{sec:rangerep}

In this section we prove the following result on two-dimensional orthogonal range reporting queries.

\begin{theorem}
\label{thm:ramrep}
  For a set of $t=O(n/r)$ points on a $t\times \sigma^{O(r)}$  grid, where $r=O(\sqrt{\log_{\sigma} n})$, and for any constant $0<\eps<1/2$, there is an $O(n\log\sigma \log^\eps n)$-bit data structure that can be built in $O(n\log^{3/2}\sigma/\log^{1/2-\eps}n)$ time and supports orthogonal  range reporting queries in time $O(\log\log t +\pocc)$ where $\pocc$ is the number of reported points.
\end{theorem}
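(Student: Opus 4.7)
The plan is to build a wavelet tree on the $y$-coordinates, tuned to both the restricted $y$-range and the sublinear number of points, and to augment it with a Chan--Larsen--Patrascu-style layer for constant-time per-point reporting. Since every $y$-value fits in $\log(\sigma^{O(r)})=O(r\log\sigma)=O(\sqrt{\log n\log\sigma})$ bits, a binary wavelet tree on the $t=O(n/r)$ points has height $h=O(\sqrt{\log n\log\sigma})$ and base space $O(th)=O(n\log\sigma)$ bits; it already supports range counting in $O(h)$ time and reporting in $O(h)$ time per point.

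To obtain the additive $O(\log\log t)$ in the query, I would build a deterministic $y$-fast trie on the distinct $y$-coordinates, which converts any query $y$-range into a rank interval in $O(\log\log t)$ time and uses $O(t\log n)=O(n\sqrt{\log n\log\sigma})$ bits, safely inside the target space. To replace the $O(h)$ per-point factor by $O(1)$, I would switch to a $b$-ary wavelet tree with $b=\Theta(\log^\eps n)$, shrinking its height to $h/\log b=O(\sqrt{\log n\log\sigma}/\log\log n)$, and attach at each node a compact secondary structure on the $x$-coordinates together with fractional-cascading links between consecutive levels. Once the descent has identified the $O(b)$ nodes spanning the rank interval, each qualifying point is reported in $O(1)$ time. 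Per point, the charge is $O(\log b)$ bits per level from the wavelet sequence and an $O(\log^\eps n)$-bit cascading overhead, totalling $O(th\log^\eps n)=O(n\log\sigma\log^\eps n)$ bits, matching the claim.

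For construction I would build the $b$-ary wavelet tree with the sublinear-time algorithm cited for the basic wavelet tree, costing $O(th/\sqrt{\log t})=O(n\log\sigma/\sqrt{\log n})$ time, and build the $y$-fast trie and cascading structures in time linear in their word-size, i.e.\ $O(n\log\sigma\log^\eps n/\log n)=O(n\log\sigma/\log^{1-\eps}n)$; the sum fits well within the claimed $O(n\log^{3/2}\sigma/\log^{1/2-\eps}n)$ bound. The main obstacle I anticipate is engineering the fractional-cascading layer so that the $O(\log\log t+\pocc)$ query time and the $O(n\log\sigma\log^\eps n)$ space target coexist deterministically within the tight construction budget: the $\Theta(\log\log n)$-factor compression of the wavelet-tree height coming from the $b$-ary branching is exactly what makes room for the $O(\log^\eps n)$-bit cascading overhead per point without overshooting the space bound, so any slack in the secondary structures has to be accounted for very carefully.
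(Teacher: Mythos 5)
Your proposal has the right general shape (a wavelet/range tree over the $y$-coordinates plus a Chan--Larsen--P\u{a}tra\c{s}cu-style decoding layer for $O(1)$-time point extraction), but the central difficulty of the theorem --- getting the additive query cost down to $O(\log\log t)$ --- is not actually solved by what you describe. A descent through your $b$-ary wavelet tree with $b=\Theta(\log^\eps n)$ still visits $h/\log b=O(\sqrt{\log n\log\sigma}/\log\log n)$ levels, which is $\omega(\log\log t)$ already for constant $\sigma$, so even with fractional cascading charging $O(1)$ per level the descent alone blows the budget. Moreover, the canonical decomposition of a rank interval in a $b$-ary tree of that height consists of $O(b\cdot h/\log b)$ nodes, not $O(b)$, and each such node needs at least a constant amount of work to localize the $x$-range, which again yields a polylogarithmic additive term. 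The paper avoids any root-to-answer descent: it jumps directly to the lowest common ancestor $v$ of the leaves of $y_1$ and $y_2$ in a \emph{binary} range tree and answers exactly two three-sided queries, one in each child of $v$; each three-sided query is handled by a range-minimum structure over group minima (groups of $g=\Theta(\sqrt{\log n\log\sigma})$ consecutive elements) combined with universal lookup tables inside groups, giving $O(\log\log n+k)$ time, and the $x$-range is localized at that single node via a sampled predecessor structure. That reduction to two three-sided queries at one level is the missing idea.

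Two further points would need repair even if the descent issue were fixed. First, fractional cascading by itself does not give $O(1)$ per reported point: walking the cascading pointers from the node where a point is found back to the root costs time proportional to the depth. The paper instead stores, for each $d_k$-node, a jump to its $d_{k+1}$-ancestor with $d_k=h^{k\eps}$, so each point is decoded in $O(1/\eps)=O(1)$ hops; it is exactly this geometric-jump table that accounts for the $O(t\,h^{1+\eps})=O(n\log\sigma\log^\eps n)$ bits of space and for the dominant construction cost. Second, building those jump sequences is not ``linear in their word-size'': it requires sorting elements inside chunks by selected bit fields using a packed variant of Albers--Hagerup merging with universal lookup tables, and this is precisely where the $O(n\log^{3/2}\sigma/\log^{1/2-\eps}n)$ construction bound comes from. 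Your construction-time accounting assumes away the step that is actually the bottleneck.
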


Our method builds upon  the recent previous work on wavelet tree construction~\cite{MNV16,BabenkoGKS15}, and applications of wavelet trees to range predecessor queries~\cite{BelazzouguiP16}, as well as on results for compact range reporting~\cite{Chazelle88,ChanLP11}. 

\subsection{Base data structure}

We are given a set $\cQ$ of $t=O(n/r)$ points in $[0..t-1] \times [0..\sigma^{O(r)}]$. First we sort the points by $x$-coordinates (this is easily done by scanning the leaves of $\cT'$, which are already sorted lexicographically by the selected suffixes), and keep the $y$-coordinates of every point in a sequence $Y$.  Each element of $Y$ can be regarded as a string of length $O(r)=O(\sqrt{\log_{\sigma} n})$, or equivalently, a number of $h=O(r\log\sigma)=O(\sqrt{\log n\log\sigma})$ bits.  Next we construct the range  tree for $Y$ using a method similar to the wavelet tree \cite{GGV03} construction algorithm. 
Let $Y(u_o)=Y$ for the root node $u_o$.  We classify the elements of $Y(u_o)$ according to their highest bit and generate the corresponding subsequences of $Y(u_o)$, $Y(u_l)$ (highest bit zero) and $Y(u_r)$ (highest bit one), that must be stored in the left and right children of $u$, $u_l$ and $u_r$, respectively. Then  nodes $u_l$ and $u_r$ are recursively processed in the same manner.  When we generate the sequence for a node $u$ of depth $d$, we assign elements to $Y(u_l)$ and $Y(u_r)$ according to their $d$-th highest bit.  The total time needed to assign elements to nodes and generate $Y(u)$, using a recent method \cite{MNV16,BabenkoGKS15}, is 
$O(t \cdot h/\sqrt{\log t}) = O(n\log \sigma/\sqrt{\log n})$, and the space is 
$O(t \cdot h) = O(n\log\sigma)$ bits.  

For every sequence $Y(u)$ we also construct an auxiliary data structure that supports three-sided queries. If $u$ is a right child, we create a data structure that returns all elements  in a range $[x_1,x_2]\times [0,h]$ stored in $Y(u)$.  To this end, we divide $Y(u)$ into groups $G_i(u)$ of $g=(1/2)\sqrt{\log n\log\sigma}$ consecutive elements (the last group may contain up to $2g$ elements). 
Let $\min_i(u)$ denote the smallest element in every group and let $Y'(u)$ denote 
the sequence of all $\min_i(u)$. We construct a data structure that supports three-sided queries on $Y'(u)$; it uses  $O(|Y'(u)|\log n) = O((|Y(u)|/g)\log n)=O(|Y(u)|\sqrt{\log n/\log\sigma})$ bits and reports the $k$ output points in $O(\log\log n + k)$ time; we can use any  range minimum data structure for this purpose \cite{BenderF00}. 
We can traverse $Y(u)$ and identify the smallest element in each group in $O(|Y(u)| h / \log n) = O(|Y(u)|\sqrt{\log\sigma/\log n})$ time, by using small precomputed tables that process $(\log n)/2$ bits in constant time. This adds up to
$O(t \cdot h \sqrt{\log\sigma/\log n}) = O(n\log^{3/2}\sigma/\sqrt{\log n})$ 
time.

Since the number of points in $Y'(u)$ is $O(|Y(u)|/g)$, the data structure for $Y'(u)$ can be created in $O(|Y(u)|/g)$ time and uses $O((|Y(u)|/g)h)=O(|Y(u)|)$ bits, which adds up to $O(n\sqrt{\log\sigma/\log n})$ construction time and
$O(n\log\sigma)$ bits of space.

In order to save space, we do not store the $y$-coordinates of points in a group. The $y$-coordinate of each point in $G=G_i(u)$ is replaced with its rank, that is, with the number of points in $G$ that have smaller $y$-coordinates. Each group $G$ is divided into $(\log \sigma)/(2\log\log n)$ subgroups, so that each subgroup contains $2r\log\log n$ consecutive points from $G$. We keep the rank of the smallest point from each subgroup of $G$ in a sequence $G^t$. Since the ranks of points in a group are bounded by $g$ and thus can be encoded with $\log g \le \log\log n$ bits, each subgroup can be encoded with less than $2r(\log\log n)^2$ bits.  Hence we can store precomputed answers to all possible range minimum queries on all possible subgroups in a universal table of size $O(2^{2r(\log\log n)^2}\log^2 g)=o(n)$ bits.  We can also store pre-computed answers for range minima queries on $G^t$ using another small universal table: $G^t$ is of length $(\log\sigma)/(2\log\log n)$ and the rank of each minimum is at most $g$, so 
$G^t$ can be encoded in at most $(\log\sigma)/2$ bits. This second universal table is then of size $O(2^{(\log\sigma)/2}\log^2 g)=o(n)$ bits.

A three-sided query $[x_1,x_2]\times [0,y]$  on a group $G$ can then be answered as follows. We identify the point of smallest rank in $[x_1,x_2]$. This can be achieved with  $O(1)$ table look-ups because a query on $G$ can be reduced to one query on $G^t$ plus a constant number of queries on sub-groups. Let $x'$ denote the position of this smallest-rank point in $Y(u)$. We obtain the real $y$-coordinate of $Y(u)[x']$ using the translation method that will be described below. If the real $y$-coordinate of $Y(u)[x']$  does not exceed $y$, we report it and recursively  answer three-sided queries $[x_1,x'-1]\times [0,y]$ and $[x'+1,x_2]\times [0,y]$. The procedure continues until all points in $[x_1,x_2]\times [0,y]$ are reported.  

If $u$ is a left child, we use the same method to construct the data structure  that  returns all elements  in a range $[x_1,x_2]\times [y,+\infty)$ from $Y(u)$. 

An orthogonal range reporting query $[x_1,x_2]\times [y_1,y_2]$ is then answered by finding the lowest common ancestor $v$ of the leaves that hold $y_1$ and $y_2$. Then we visit the right child $v_r$ of $v$, identify the range $[x'_1,x_2']$ and report all points in $Y(v_r)[x_1'..x_2']$  with $y$-coordinates that do not exceed $y_2$; here $x'_1$ is the index of the smallest $x$-coordinate in $Y(v_r)$ that is $\ge x_1$ and  $x_2'$ is the index of the largest $x$-coordinate of $Y(v_r)$ that is $\le x_2$. We also visit the left child $v_l$ of $v$, and answer the symmetric three-sided query. Finding $x_1'$ and $x_2'$ requires predecessor and successor queries on $x$-coordinates of any $Y(v_r)$; the needed data structures are described in Section~\ref{sec:predsuc}.

In total, the basic part of the data structure requires $O(n\log\sigma)$ bits of space and is built in time $O(n\log^{3/2}\sigma/\sqrt{\log n})$.

\subsection{Translating the answers}

An answer to our  three-sided query returns positions in $Y(v_l)$ (resp.\ in $Y(v_r)$).   We need an additional data structure to translate such local positions  into the points to be reported. While our range  tree can be used for this purpose, the cost of decoding every point would be $O(\sqrt{\log\sigma\log n})$. A faster decoding method~\cite{Chazelle88,Nekrich09,ChanLP11} enables us to decode each point in $O(1)$ time. Below we describe how this decoding structure can be constructed within the desired time bounds.

Let us choose a constant $0 < \eps < 1/2$ and,
to simplify the description, assume that $\log_{\sigma}^{\eps}n$ and $\log \sigma$ are integers. We will say that a node $u$ is an $x$-node if the height of $u$ is divisible by $x$. For an integer $x$ the  $x$-ancestor of a node $v$ is the lowest ancestor $w$ of $v$, such that  $w$ is an $x$-node. 
Let $d_k=h^{k\eps}$ for $k=0,1,\ldots, \ceil{1/\eps}$. We construct sequences $\up(u)$ in all nodes $u$.   $\up(u)$ enables us to move from a $d_k$-node to its $d_{k+1}$-ancestor: Let $k$ be the largest integer such that $u$ is a $d_k$-node and let $v$ be the $d_{k+1}$-ancestor of $u$. We say that $Y(u)[i]$ corresponds to $Y(v)[j]$ if  $Y(u)[i]$ and $Y(v)[j]$
represent the $y$-coordinates of the same point. Suppose that a three-sided query has returned position $i$ in $Y(u)$. Using auxiliary structures, we find the corresponding position $i_1$ in the $d_1$-ancestor $u_1$ of $u$. Then we find $i_2$ that corresponds to $i_1$ in the $d_2$-ancestor $u_2$ of $u_1$. We continue in the same manner, at the $k$-th step moving from a $d_k$-node to its $d_{k+1}$-ancestor. After $O(1/\eps)$ steps we reach the root node of the range tree.

It remains to describe the auxiliary data structures. To navigate from a node $v$ to its ancestor $u$, $v$ stores for every $i$ in $Y(v)$ the corresponding position $i'$ in $Y(u)$ (i.e., $Y(v)[i]$ and $Y(u)[i']$ are $y$-coordinates of the same point).  In order to speed up the construction time, we store this information in two sequences. The sequence $Y(u)$ is divided into chunks; if $u$ is a $d_k$-node, then the size of the chunk is $\Theta(2^{d_k})$.  For every element in $Y(v)$ we store  information about the chunk of its corresponding position in $Y(u)$  using  the binary sequence $C(v)$: $C(v)$ contains a $1$ for every element $Y(v)[i]$ and a $0$ for every chunk in $Y(u)$ ($0$ indicates the end of a chunk). We store in  $\up(v)[i]$ the relative value of its corresponding position in $Y(u)$. That is, if the element of $Y(u)$ that corresponds to $Y(v)[i]$ is in the $j$th chunk of $Y(u)$, then it is at $Y(u)[j\cdot 2^{d_k}+\up(v)[i]]$. In order to move from $Y(v)[i]$ in a node $v$ to the corresponding position $Y(u)[i_k]$ in its $d_k$-ancestor $u$, we compute the target chunk in $Y(u)$,
 $j=\mathrm{select}_1(C(v),i)-i$, and set $i_k=j\cdot 2^{d_k}+\up(v)[i]$. Here $\mathrm{select}_1$ finds the $i$th $1$ in $C(v)$, and can be computed in constant time using $o(|C(v)|)$ bits on top of $C(v)$ \cite{Cla96,Mun96}. 

Since the tree contains $h/d_{k-1}$ levels of $t$ $d_{k-1}$-nodes, and the 
$UP(v)$ sequences of $d_{k-1}$-nodes $v$ store numbers up to $2^{d_k}$, the 
total space used by all $UP(v)$ sequences for all $d_{k-1}$-nodes $v$ is 
$O(t \cdot (h/d_{k-1})\cdot d_k)=O(t\cdot h^{1+\eps})$ bits, because 
$d_k/d_{k-1}=h^{\eps}$. 
For any such node $v$, with $d_k$-ancestor $u$, the total number of bits 
in $C(v)$ is $|Y(v)|+|Y(u)|/2^{d_k}$. There are at most $2^{d_k}$ 
nodes $v$ with the same $d_k$-ancestor $u$. Hence, summing over all 
$d_{k-1}$-nodes $v$, all $C(v)$s use $t(h/d_{k-1})+ t(h/d_k)=O(t(h/d_{k-1}))$ 
bits. These structures are stored for all values 
$k-1 \in \{0,\ldots,\lceil 1/\eps\rceil-1\}$.
Summing up, all sequences $C(v)$ use $O(t \cdot h)$ bits.
The total space needed by auxiliary structures is then 
$O(t\cdot h^{1+\eps}) = O(n\log^{1+\eps/2}\sigma \log^{\eps/2} n)$ bits, 
dominated by the sequences $UP(v)$. This can be written as
$O(n\log\sigma \log^\eps n)$ bits.

To produce the auxiliary structures, we need essentially that each $d_k$-node
$u$ distributes its positions in the corresponding $C(v)$ and $UP(v)$ structures
in each of the next $h^\eps-1$ levels of $d_{k-1}$-nodes below $u$. Precisely, 
there are
$2^{l \cdot d_{k-1}}$ $d_{k-1}$-nodes $v$ at distance $l \cdot d_{k-1}$ from 
$u$, and we use $l \cdot d_{k-1}$ bits from the coordinates in $Y(u)[i]$ to 
choose the appropriate node $v$ where $Y(u)[i]$ belongs. Doing this in sublinear
time, however, requires some care.

Let us first consider the root $u$, the only $d_k$-node for $k=\ceil{1/\eps}$. We consider all the $d_{k-1}$-nodes $v$ (thus, $u$ is their only $d_k$-ancestor). These are nodes of height $l\cdot d_{k-1}$ for $l=1, 2, \ldots, h^{\eps}-1$. In order to construct sequences $\up(v)$ in all nodes $v$ on level $l\cdot d_{k-1}$ for a fixed $l$, we proceed as follows. The sequence $Y[u]$ is divided into chunks, so that each chunk contains $2^h$ consecutive elements. The elements $Y(u)[i]$ within each chunk are sorted with key pairs $(\bits((h^{\eps}-l)\cdot d_{k-1}, Y(u)[i]),\pos(i,u))$ where $\pos(i,u) = i \!\!\mod 2^h$ is the relative position of $Y(u)[i]$ in its chunk and $\bits(\ell,x)$ is the  number that consists of the highest $\ell$ bits of $x$. We sort integer pairs in the chunk using a modification of the algorithm of Albers and Hagerup~\cite[Thm.~1]{AlbersH97} that runs in $O(2^h\frac{h^2}{\log n})$ time. Our modified algorithm  works in the same way as the second phase of their algorithm, but we merge words in $O(1)$ time. Merging can be implemented using a universal look-up table that uses $O(\sqrt{n})$ words of space and can be initialized in $O(\sqrt{n}\log^3 n)$ time. 

We then traverse the chunks and generate the sequences $\up(v)$ and $C(v)$ for all the nodes $v$ on level $l\cdot d_{k-1}$. For each bit string of length $l\cdot d_{k-1}$, we say that $v$ is the $q$-descendant of $u$ if the path from $u$ to $v$ is labeled with $q$. The sorted list of pairs for each chunk of $u$ is processed as follows. All the pairs $(q,\pos(i,u))$ (i.e., $q=\bits((h^{\eps}-l)d_{k-1},Y(u)[i])$) are consecutive after sorting, so we scan the list identifying the group for each value of $q$; let $n(q)$ be its number of pairs. Precisely, the points with value $q$ must be stored at the $q$-descendant $v$ of $u$ (the consecutive values of $q$ correspond, left-to-right, to the nodes $v$ on level $l\cdot d_{k-1}$). For each group $q$, then, we identify the $q$-descendant $v$ of $u$ and append 
$n(q)$ $1$-bits and one $0$-bit to $C(v)$. We also append $n(q)$ entries to $\up(v)$ with the contents $\pos(i,u)$, in the same order as they appear in the chunk of $u$. 

We need time $O(2^h \cdot h/\log n)$ to generate the pairs $(\bits(\cdot),\pos(\cdot))$ for the $2^h$ coordinates of each chunk, and to store the pairs in compact form, that is, $O(\log(n)/h)$ pairs per word. We can then sort the chunks in time
$O(2^h \cdot h^2/\log n)=O(2^h\log\sigma)$.  
We can generate the parts of sequences $C(v)$ and $\up(v)$ that correspond to a chunk for all nodes $v$ on level $l\cdot d_{k-1}$  in $O(2^h+ 2^h \cdot h/\log n)=O(2^h)$. 
Thus the total time needed to generate $\up(v)$ and $C(v)$ for all nodes $v$ on level $l\cdot d_{k-1}$ and some fixed $l$ is $O(t\log\sigma)$, where we remind that $t$ is the total number of elements in the root node.  The total time needed to construct $\up(v)$ and $C(v)$ for all $d_{k-1}$-nodes $v$ is then $O(th^\eps\log \sigma)$.


Now let $u$ be an arbitrary $d_k$-node. Using almost the same method as above, we can produce sequences $\up(v)$ and $C(v)$ for all $(d_{k-1})$-nodes $v$, such that $u$ is a $d_k$-ancestor of $v$. There are only two differences with the method above. First, we divide the sequence $Y(u)$ into chunks of size $2^{d_k}$. Second, the sorting of elements in a chunk is not based on the highest bits, but on a less significant chunk of bits: the pairs are now $(\bitval(Y(u)[i]),\pos(i,u))$. If the bit representation of $Y(u)[i]$ is $b_1b_2\ldots b_d$, then  $\bitval(Y(u)[i])$ is the integer with bit representation $b_{f+1}b_{f+2}\ldots b_{f+d_k}$ where $f$ is the depth of the node $u$ in the range tree. The total time needed to produce $C(v)$ and $\up(v)$ is $O(|Y(u)|d_k/\log n+ |Y(u)| d_k^2/\log n)$, the first term to create the pairs and the second to sort the chunks and produce $C(v)$ and $\up(v)$. The number of different elements in all $d_k$-nodes is $O(t\cdot h/d_k)$, and each produces the sequences of $h^\eps$ levels of $d_{k-1}$-nodes. Hence the time needed to produce the sequences for all $d_{k-1}$-nodes is $O((t \cdot h)/d_k \cdot h^\eps \cdot d_k^2/\log n) = O(t \cdot h^{1+\eps} \cdot d_k/\log n)=O(t(h^2/\log n)h^\eps)=O(t h^\eps \log\sigma) = O(t\log^{1+\eps/2}\sigma\log^{\eps/2}n)$. The complexity stays the same after adding up the $1/\eps$ values of $k$: $O(t\log^{1+\eps/2}\sigma\log^{\eps/2}n) =
O(n \log^{3/2+\eps/2} \sigma / \log^{1/2-\eps/2} n)$, which can be written as
$O(n \log^{3/2}\sigma / \log^{1/2-\eps} n)$.

The data structure supporting select queries on $C(v)$ can be built in 
$O(|C(v)|/\log n)$ time \cite[Thm.~5]{MNV16}. This amounts to $O(th/\log n)
= o(t)$ further time.

\subsection{Predecessors and successors of $x$-coordinates}
\label{sec:predsuc}

Now we describe how predecessor and successor queries on $x$-coordinates of points in $Y(u)$ can be answered for any node $u$ in time $O(\log\log n)$. 

We divide the sequence $Y(u)$ into blocks, so that each block contains $\log n$ points. We keep the minimum $x$-coordinate from every block in a predecessor data structure $Y^b(u)$. In order to find the predecessor of $x$ in $Y(u)$, we first find its predecessor $x''$ in $Y^b(u)$; then we search the block of $x''$ for the predecessor of $x$ in $Y(u)$.

The predecessor data structure finds $x''$ in $O(\log\log n)$ time. We can compute the $x$-coordinate of any point in $Y(u)$ in $O(1)$ time as shown above. Hence the predecessor of $x$ in a block can be found in $O(\log\log n)$ time too, using binary search. We can find the successor analogously.

The sampled predecessor/successor data structures store
$O((n/r)(r\log\sigma)/\log n) = O(n/\log_\sigma n)$ elements over all the
levels. An appropriate construction \cite[Thm.~4.1]{FPS16} builds them in 
linear time ($O(n/\log_\sigma n)$) and space ($O(n\log\sigma)$ bits), once 
they are sorted.

\arxiv{
\subsection{Range Reporting in External Memory}
\label{sec:rangerepext}
 Our method of  constructing an external-memory range reporting  structure follows the same basic range tree approach. However the data structure is much simpler because we do not aim for almost-linear space and linear construction time. 
\begin{lemma}
  \label{lemma:extrep}
 For a set of $t=O(n/r)$ points on $t\times \sigma^{O(r)}$  grid, there is an $O(t\cdot r\cdot \log \sigma)$-word data structure that can be constructed in $O((t/B)\cdot r\cdot \log\sigma)$ I/Os and supports orthogonal  range reporting queries in  $O(\log\log t +\pocc/B)$ I/Os  where $\pocc$ is the number of reported points.
\end{lemma}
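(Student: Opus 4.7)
The plan is to reuse the range tree on $y$-coordinates that underlies the internal-memory structure of the previous subsection, but replace its bit-packed components with simpler external-memory primitives. Since the space target is now $O(t\cdot r\cdot\log\sigma)$ words (rather than bits), we can afford to store, at each node $u$ of the range tree, the full list of points routed through $u$ sorted by $x$-coordinate together with their explicit $y$-coordinates, and equip that list with an external 3-sided-range-reporting structure. The tree has height $h=\log(\sigma^{O(r)})=O(r\log\sigma)$, so every point appears in exactly $h$ nodes and the total space is $O(t\cdot h)=O(t\cdot r\cdot\log\sigma)$ words, as required.

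To answer a query $[x_1,x_2]\times[y_1,y_2]$, first locate the lowest common ancestor $v$ of the $y$-tree leaves corresponding to $y_1$ and $y_2$ in $O(\log\log t)$ I/Os using an external predecessor structure of $y$-fast-trie style over the $t$ sampled $y$-coordinates. Then, at each of the two canonical children $v_l,v_r$ of $v$, clip $[x_1,x_2]$ to the node's $x$-range via two analogous predecessor queries on its $x$-array, and issue a 3-sided query of the form $[x'_1,x'_2]\times[y_1,+\infty)$ at $v_l$ and $[x'_1,x'_2]\times(-\infty,y_2]$ at $v_r$. An external priority search tree on $m$ points supports such queries in $O(\log\log t+\pocc/B)$ I/Os using $O(m/B)$ blocks, provided that its topmost B-tree levels are accessed through a $y$-fast-trie index rather than by ordinary B-tree descent; summed over the two canonical nodes this matches the claimed query bound.

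Construction proceeds top-down, level by level. Given the $x$-sorted sequence $Y(u)$ at a node $u$, we scan it once and dispatch each point to $Y(u_l)$ or $Y(u_r)$ according to the leading bit of its $y$-coordinate, preserving the $x$-order. This costs $O(|Y(u)|/B)$ I/Os per node, hence $O((t/B)\cdot h)=O((t/B)\cdot r\cdot\log\sigma)$ I/Os over all $h$ levels. The per-node priority search trees and predecessor indices can be built in a subsequent bottom-up sweep in $O(|Y(u)|/B)$ I/Os per node, staying within the same overall bound.

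The main obstacle is guaranteeing the $O(\log\log t)$ query term rather than the $O(\log_B t)$ term delivered by a vanilla external priority search tree. Here we exploit that all $y$-coordinates fit in $O(r\log\sigma)$ bits and the $x$-ranks at any node fit in $O(\log t)$ bits, so an external $y$-fast-trie (or fusion-tree) layer can be placed on top of each B-tree index in the structure. This absorbs the $O(\log_B t)$ navigation cost into an $O(\log\log t)$ predecessor step while leaving the subsequent $O(\pocc/B)$ reporting phase of the priority search tree intact, and it adds only a linear number of I/Os to the construction without disturbing the $O(t\cdot r\cdot\log\sigma)$ word space bound.
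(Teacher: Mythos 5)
Your proposal follows essentially the same route as the paper's (very terse) proof: the same $y$-range tree with explicit coordinates stored per node to afford word-level space $O(t\cdot h)=O(t\cdot r\log\sigma)$, per-node three-sided structures, the standard reduction of a four-sided query to two three-sided queries at the children of the LCA node, and level-by-level distribution for construction in $O((t/B)\cdot r\log\sigma)$ I/Os. Your elaboration of how the $O(\log\log t)$ term is achieved (predecessor/$y$-fast-trie indexing on top of the per-node structures) fills in a detail the paper leaves implicit, and is consistent with its internal-memory counterpart.
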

\begin{proof}
  We generate the sequences $Y(u)$ in the same way as in the internal-memory data structure.
But we also keep in every node the sequence $X(u)$ that explicitly contains the $x$-coordinates of points. All sequences can be generated in $O((m/B)\log\sigma\log t)$ I/Os. For all points stored in a node, we create a data structure for three-sided queries. 
A data structure for three-sided queries on $m$ points can be constructed in $O(m/B)$ I/Os. 
\end{proof}
}

\no{
\section{Lempel-Ziv Factorization}
\label{sec:lempelziv}
In this section we describe how a modification of our method can be used to speed-up the Lempel-Ziv factorization algorithm.  Our method works in  two stages. First, we create a simplified version of the index from Section~\ref{sec:datastruc}. Then we create a more advanced variant of the range reporting data structure.  Finally we obtain the Lempel-Ziv factorization of the text in $O((n/\sqrt{\log_{\sigma} n}))$ time. 

The text $T$ is divided into blocks of $\sqrt{\log_{\sigma} n}$ symbols. We regard the text $T$ as a  sequence of meta-symbols, where each meta-symbols represents $r_1=\sqrt{\log_{\sigma} n}$ consecutive symbols. In other words we represent $T$ as a sequence $T_1[0..n/r_1-1]$ where $T_1[i]=T[ir_1..(i+1)r_1-1]$ for $0\le i < n/r_1$. We use the index $\cI$ of Barbay et al.~\cite{BCGNNalgor13}. It can be constructed in $O(n/r_1)$ time and supports counting queries in $O((q/\log_{\sigma}n)\log\log n)$ time: for any pattern $Q$ of length $q$ we can identify the range of suffixes in $T_1$ starting with $Q$ in $O((q/\log_{\sigma} n)\log\log n)$ time. Suffixes of $T_1$, i.e., suffixes starting at positions $ir_1$ will be further called $r_1$-suffixes. 

Our second component is the data structure that supports the following range queries on three-dimensional points. For every $r_1$-suffix $T[ir_1..]$ we keep the point $(rev_i,ind_i,pos)$ where $rev_i=T[ir_1-1]T[ir_2-2]\ldots T[(i-1)r_1+1]$ is an integer that corresponds to the reverse length-$r_1$ sequence of symbols  preceding that $i$-th selected suffix in $T_1$, 
$ind_1$ is the index of $T[ir_1..]$ in the lexicographically sorted set or $r_1$-suffixes,
and $pos_i=ir_1$ is the starting position or the suffix in $T$.  Our data structure supports two-dimensional range minima queries: for any $[r_1,r_2]\times [j_1,j_2]$, we can find the 
point $(rev_i,ind_i,pos)$ such that $r_1\le rev_i\le r_2$, $j_1\le ind_i\le j_2$, and $\pos_i$ is minimal.  These queries are supported in $O(\log\log n)$ time. (Need to check the time!).

The LZ77-factorization algorithm works as follows. Suppose that $z_1\ldots z_{p-1}=T[0..i_p-1]$. We need to find  the longest prefix of  $T[i_p..]$ that is also a prefix of some suffix starting at position $0..i_p-1$ in $T$. For $0\le j < r_1$ let $b_j=0$ if $j\le r_1 -(i_p\mod r_1)$ and $b_j=1$ otherwise. A previous $j$-occurrence of $T[i_p..]$ is 
an occurrence of a prefix of $T[i_p+j..]$ that starts at position $T[kr_1..]$ for some $0\le k \le i_p/r_1+b_j$.   For $j=0$ $\ldots$, $r_1-1$, we find the longest previous $j$-occurrence of $T[i_p..]$ that is preceded by $T[i_p..i_p+j-1]$.  Then $z_p=T[i_p..i_p+\ell-1]$ where $\ell=\max_j(\ell_j+j)$ and $\ell_j$ is the length of the longest previous $j$-occurrence.

For any fixed $j$, $\ell_j$ can be found in $O((\ell_j/\log_{\sigma} n)\log\log n)$ time using the index $\cI$ and the range-minima data structure described above. Let $i_m=(i_p/r_1+b_j)r_1$. For $l=1,2,\ldots$ we find the range $[ind_1,ind_2]$ of $r_1$-suffixes starting with the string $T[i_p+j..i_p+j+(\log_{\sigma}n)l-1]$. We test if at least one such suffix starts to the left of $T[i_p..]$ and is preceded by $T[i_p..i_p+j-1]$. This can be done by answering the range minimum query $[rev_1..rev_2]\times [ind_1,ind_2]$ and comparing the answer $i_a$ to $i_m$. If $i_a\le  i_m$, we increment $l$ and repeat the same procedure. If $i_a>i_m$, we decrement $l$ and find $\ell_j=l+f$ for some $0\le f< \log_{\sigma}n$ by binary search on $f$. At every step 
of the binary search we  test for existence of the previous $j$-occurrence of $T[i_p+j..]$ with length $l\cdot \log_{\sigma} n + f$ that is preceded by $T[i_p..i_p+j-1]$. The test is executed as already described: we find the range of suffixes starting with the desired prefix of $T[i_p+j..]$ and answer the range minimum query. The total time needed to find $\ell_j$ is 
$O((\ell_j/\log_{\sigma}n)\log\log n + (\log\log n)^2)$. Hence the time needed to compute the length $\ell$ of $z_p$ is $O(r_1(\log\log n)^2+ (\ell/\sqrt{\log_{\sigma}n})\log\log n)$. 
The number of factors is bounded by $n/\log_{\sigma}n$. Hence we can obtain the factorization of the text $T$ in $O((n/\sqrt{\log_{\sigma}n})(\log\log n)^2)$ time.

\begin{theorem}
  \label{theor:lz77}
We can obtain an LZ77-factorization of a length-$n$ string $T$ over an alphabet of size $\sigma$ in $O((n/sqrt{\log_{\sigma}n})(\log\log n)^2)$ time using $O( )$ bits of workspace.
\end{theorem}

\subsection{Two-dimensional Range-Minima}
Now we show how to construct a data structure for two-dimensional range-minimum queries on a $t\times 2^{r_1}$ grid. We will say that the third coordinate of a point is  its weight.  The $x$-coordinates of points are bounded by $t$ and the $y$-coordinates are bounded by $2^{r_1}$. First we reduce this problem  to queries on $2^{r_1}\times 2^{3r_1}$ grid. Then we show how the latter queries can be answered. 

The grid is divided into boxes of size $\sigma^{r}\times \sigma^{3r}$.  We sort all points by $y$-coordinates, points with the same $y$-coordinate are sorted by $x$-coordinate.  Then we traverse the sorted list and compute $M_i[k]$, the lightest point with $y$-coordinate $i$ in the $k$-th block. Let $M_{ij}[k]= \min_{i\le f\le j}M_f[k]$.  We generate arrays $M_{ij}[]$ and construct a range-minimum data structure on each array. We can sort the points and compute $M_i[k]$ for all $i$ and $k$ in $O(t)$ time. When $M_i$ are known, we can compute $M_{ij}$ in 
$O((t/\sigma^{3r})\sigma^{3r})=O(t)$ time. Range minima structures on $M_{ij}[]$ can be computed in $O(t/\sigma^{3r})$ time for each array $M_{ij}$. Hence the total pre-processing time is 
$O(t)$.

To answer a query $[x_1,x_2]\times [i,j]$ we find the boxes that contain $x_1$ and $x_2$. The query can be divided into three parts, the top part, the middle part, and the bottom part, as shown on Fig.~\ref{fig:topbot}. We can find the lightest point in the middle part by answering the range minimum query 
on $M_{ij}[x_1+1..x_2-1]$. We can find the lightest points in the top and the bottom boxes by querying the data structures for boxes that contain $x_1$ and $x_2$ respectively.  
\begin{figure}[tb]
  \centering
  \includegraphics[height=.2\textheight]{topbottom}
  \caption{Division of the grid into boxes and an example of a query $Q$. Top and bottom parts of the query are shown with gray rectangles. }
  \label{fig:topbot}
\end{figure}

\paragraph{Data Structure for $\sigma^{3r}\times \sigma^r$ Grid.}

}

\arxiv{

\section{Index for Small Patterns in External Memory}

\label{sec:smallext}
In this section we describe the index for small patterns, i.e., for strings of length less than $4r+3$. An occurrence of such a string does not necessarily cross a sampled position in the text, because the sampling step is bigger here. 
Hence our method for internal memory does not work.  

We set $d=9r$ and $d_1=2d$. The text $T$ is regarded as an array $A_0[0..n/d]$; every slot of $A_0$ contains a length-$d_1$ string.  
For every slot $A_0[i]=\alpha$, we record its  \emph{position} $i$ and its value $\alpha$. 
First we sort the elements of $A_0$ by their values. Since $A_0$ consists of $n/d$ words, it can be sorted in $O((n/d)\log_{M/B}n)$ I/Os. Then we construct arrays $A_j[]$ for $1\le j\le d-1$.  $A_j$ is obtained from $A_{j-1}$ 
by removing the first symbols from strings stored in slots of $A_{j-1}$ and sorting the resulting strings. Thus each slot of $A_j[t]$ contains a string of length $d-j$  that starts at position $id+j$ in $T$ for some $0\le i < n/d$; each $A_j[t]$ corresponds to some slot $A_0[t']$ with the first $j$ characters removed. 

We can obtain $A_1$ from $A_0$ in $O(\frac{n}{Bd})$ I/Os. We traverse the array $A_0$ and identify parts of $A_0$ that start with the same symbol. We find indices $k_0$, $k_1$, $\ldots$, $k_{\sigma-1}$, $k_{\sigma}=n$ such that all $A_0$ start with the same character: for all $i$, $k_j\le i<k_{j+1}$, $A_0[i]=b_j\alpha_i'$ 
for the $j$-th alphabet symbol $b_j$ and some length-$(d_1-1)$ string $\alpha'_i$. Next, we merge sub-arrays $A_0[k_i..k_{i+1}-1]$ for $i=0$, $1$, $\ldots$, $\sigma-1$. We can merge two-sub-arrays $A[f_1..l_1]$ and $A[f_2..l_2]$ in $O(\frac{l_1+l_2-f_1-f_2}{B}+1)$ I/Os and obtain the new sub-array sorted by $\alpha'_i$. Hence we can obtain the array $A_1$ with values sorted by $\alpha'_i$ in $O(\frac{n}{B}\log\sigma)$ I/Os. 
Finally we traverse $A_1$ and remove the first symbol from every element.  We can obtain $A_{j+1}$ from $A_j$ in the same way. The cost of constructing $A_{j+1}$ is bounded by $O((n_j/B)\log\sigma)$ where $n_j$ is the size of $A_j$. 
For every array $A_j$,  a dictionary $D_j$ contains all distinct values that occur in $A_j$. $D_j$ is implemented as a van Emde Boas data structure  and can be constructed in $O((m_j/B)+(n/Bd))$ I/Os, where $m_j$ is the number of elements in $D_j$. We observe that $m_j$ is bounded by the number of distinct strings with length $(d-j)$, $m_j\le (1/\sigma)^j\cdot \sigma^d$. Since $d=O(r)$ and $r\le \log_{M/B} n$, all $A_j$ and $D_j$ are constructed in $O((n/Bd)\log_{M/B}n)+ d\cdot ((n/Bd)\log\sigma)+ (\sigma^d)/B)=O((n/B)\log\sigma+ (1/r)\sort(n))$ I/Os. 

Suppose that we want to report all occurrences of a string $Q$. Let $Q_s$ be the lexicographically smallest length-$d_1$ string that starts with $Q$ and let $Q_m$ be the lexicographically largest length-$d_1$ string that starts with $Q$.  Strings $Q_s$ and $Q_m$ correspond  to  integers $q_s$ and $q_m$ respectively. We query data structures $D_0$, $D_1$, $\ldots$, $D_{d-1}$. For every $j$, $0\le j \le d-1$, we find $\newsucc(q_s,D_j)$ where $\newsucc(q_s,D_j)$ is the smallest element $y\in D_j$ such that $y\ge q_s$. We traverse the sorted list of elements in $A_j$ until an element $y'> q_m$ is found. For every element between $y$ and $y'$, we report its position in $T$.  We answer $r$ successor queries in $O(r\log\log n)$ I/Os. A pattern matching  query is answered in $O(r\log\log n+ \occ/B)$ I/Os. 
\begin{lemma}
  \label{lemma:smallextmem}
There exists a data structure that uses $O(n/B)$ blocks of space and reports all occurrences of a query string $Q$ in $O(r\log\log n+\occ/B)$ I/Os if $|Q|\le 4r+3$. This data structure can be constructed in $O((n/B)\log\sigma + (1/r)\sort(n))$ I/Os. 
\end{lemma}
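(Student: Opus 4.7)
The plan is to verify the three claims of the lemma---correctness, space, and I/O bounds---given the construction already described. Correctness rests on the observation that since $d_1 = 2d$ and $|Q| \le 4r+3 \le d$, every occurrence of $Q$ starting at position $p = id+j$ (with $0 \le j < d$ and $0 \le i < n/d$) is fully contained in $A_0[i] = T[id..id+d_1-1]$, and hence, after stripping its first $j$ symbols, in the entry $A_j[t]$ that corresponds to $i$. Since $A_j$ is sorted lexicographically, the entries prefixed by $Q$ form a contiguous range with endpoints given by $q_s$ (the lexicographically smallest length-$(d_1-j)$ string with prefix $Q$) and $q_m$ (the largest). Using the vEB dictionary $D_j$ we reach the range's left endpoint via $\newsucc(q_s,D_j)$ in $O(\log\log n)$ I/Os, then we scan $A_j$ outputting recorded positions until a value exceeds $q_m$. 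Repeating over the $d = 9r$ offsets $j$ gives $O(r\log\log n + \occ/B)$ I/Os.

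For the space bound, summing over all arrays yields $O(\sum_{j=0}^{d-1}(n/d)(d_1-j)\log\sigma) = O(nd\log\sigma)$ bits for the stored strings, plus $O(n)$ words for the recorded positions. Because $r = \Theta(\sqrt{\log_\sigma n})$ and $d = 9r$, we have $d\log\sigma = O(\sqrt{\log n\log\sigma}) = O(\log n)$, so the total is $O(n\log n)$ bits, i.e., $O(n/B)$ blocks. The dictionaries contribute in aggregate $\sum_j m_j = O(\sigma^d) = 2^{O(\sqrt{\log n\log\sigma})} = o(\sqrt{n})$ entries, negligible against $O(n/B)$ blocks.

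The construction cost aggregates the three steps already sketched: sorting $A_0$ in $O(\sort(n/d)) = O(\sort(n)/r)$ I/Os; building each $A_{j+1}$ from $A_j$ via bucket identification and the merge argument in $O((n/(Bd))\log\sigma)$ I/Os, totalling $O((n/B)\log\sigma)$ across the $d$ iterations; and building each $D_j$ in $O(m_j/B + n/(Bd))$ I/Os, totalling $O((\sigma^d + n)/B) = O(n/B)$. Summing yields $O((n/B)\log\sigma + \sort(n)/r)$ as claimed.

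The delicate point, and the step where care is needed, is confirming that the overlap parameter $d_1 = 2d$ really captures every pattern occurrence: without it, an occurrence straddling two consecutive $A_0$ slots could escape detection. With $d_1 = 2d$ the slots overlap by $d$ positions, and since $|Q| \le 4r+3 \le d$ each occurrence fits entirely within a single $A_j[t]$, so exactly one of the $d$ per-offset searches picks it up (and no occurrence is reported twice). The remaining bounds follow from routine accounting using $r = \Theta(\sqrt{\log_\sigma n})$.
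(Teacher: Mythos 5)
Your proposal is correct and follows exactly the construction the paper gives for this lemma (overlapping length-$2d$ slots, the $d$ shifted sorted arrays $A_j$, the vEB successor search plus scan, and the same accounting for space and construction I/Os); your verification of the overlap argument and the ``reported exactly once'' claim matches the intended reasoning. The only difference is cosmetic: you quietly repair the paper's minor type mismatch by taking $q_s,q_m$ as the extreme length-$(d_1-j)$ extensions of $Q$ rather than length-$d_1$ ones.
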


\section{Marked Ancestor Queries}
\label{sec:markanc}
We describe a data structure for the nearest marked ancestor queries used in the proof of Lemma~\ref{lemma:jumpy0}. Nodes in a tree $\cT$ are marked with strings of length at most $g$ for $g=4r+3$. A node can be marked with many strings, but the total number of strings that mark the nodes of $\cT$ does not exceed $m'=m\log_{\sigma} n$ where $m$ is the number of nodes. We assume that every internal node has at least two children.  
\begin{lemma}
  \label{lemma:marked}
Suppose that the height of $\cT$ is bounded by $O(B\log^3 n)$. 
There exists an $O(m')$-space data structure for the marked ancestor queries that answers nearest marked ancestor queries in $O(1)$ I/Os. 
\end{lemma}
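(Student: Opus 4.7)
The plan is to reduce every nearest-marked-ancestor query to a constant number of primitive operations, each of which can be preprocessed to cost $O(1)$ I/Os. Preparation: a DFS of $\cT$ gives, for every node $v$, a pre-order index $\mathrm{pre}(v)$ and a subtree-end index $\mathrm{end}(v)$, so that ``$u$ is an ancestor of $v$'' is tested by $\mathrm{pre}(u)\le \mathrm{pre}(v)\le \mathrm{end}(u)$. I would also attach to $\cT$ a standard external-memory LCA oracle using $O(m)$ space and $O(1)$ I/Os per query. For each distinct mark string $s$, I collect $L_s$, the nodes marked with $s$ sorted by $\mathrm{pre}$, and via one stack-based pass compute the ``$s$-parent'' of each $w\in L_s$ (the nearest strictly higher $s$-marked ancestor in $\cT$). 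The $L_s$'s together with these $s$-parent pointers form the ``$s$-forests'' and are built in $O(m')$ total time and space, since $\sum_s |L_s|=m'$.

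The query algorithm begins by looking up $s$ in a perfect hash $H$ (built over the distinct mark strings) to locate, in $O(1)$ I/Os, the per-$s$ structure; inside it, a constant-I/O predecessor microstructure on $\{\mathrm{pre}(w):w\in L_s\}$ returns the rightmost $w^*\in L_s$ with $\mathrm{pre}(w^*)\le \mathrm{pre}(v)$. If $w^*$ is an ancestor of $v$ (one interval test on $\mathrm{pre},\mathrm{end}$), then it is the answer: any deeper $s$-marked ancestor would have strictly larger pre-order and thus contradict the choice of $w^*$. Otherwise I compute $\ell=\mathrm{LCA}_\cT(v,w^*)$ and use the following observation, which is the conceptual core of the reduction. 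Any $s$-marked ancestor $a$ of $v$ satisfies $\mathrm{pre}(a)\le \mathrm{pre}(v)\le \mathrm{end}(a)$ and $\mathrm{pre}(a)\le \mathrm{pre}(w^*)$, which with $\mathrm{pre}(w^*)\le\mathrm{pre}(v)\le\mathrm{end}(a)$ shows that $a$ is also an $s$-marked ancestor of $w^*$; thus the desired answer is the lowest $s$-marked ancestor of $w^*$ at $\cT$-depth $\le \mathrm{depth}(\ell)$, i.e., a level-predecessor query on the monotone depth sequence of $w^*$'s ancestors in the $s$-forest.

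To make this final lookup $O(1)$-I/O while keeping the total space $O(m')$, I exploit the height bound $h=O(B\log^3 n)$. Depths fit in $O(\log B+\log\log n)$ bits, so many depths pack into one external block, and the $s$-forest depth sequences along any root path are monotone. I plan to heavy-path decompose each $s$-forest so that each $w\in L_s$ stores a single pointer into the heavy-path block holding its ancestors' depths; level-predecessor within a heavy-path block is answered by an $O(1)$ universal table lookup, and a heavy-path switch takes one further pointer hop. Because a node belongs to exactly one heavy path, the per-$s$ blocks sum to $O(|L_s|)$ words, and hashing $s$ locates the right collection of blocks in $O(1)$ I/Os.

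The main obstacle I anticipate is step~3: laying out the per-$s$ heavy-path depth blocks so that the level-predecessor really costs a constant number of I/Os, without inflating the total space past $O(m')$. The polylogarithmic-in-$B$ height bound is crucial here, because it both keeps a single depth small (a few bits) and bounds the number of heavy-path switches on any ancestor chain by $O(\log m')$, which I plan to absorb into precomputed lookup tables of sublinear size. Making the tables and the per-block encodings compatible with the external memory word/block structure, while ensuring that the perfect hash $H$, the per-$s$ predecessor microstructure, and the heavy-path decomposition can all be constructed within $O(m')$ work, is the technical core of the proof.
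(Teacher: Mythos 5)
Your reduction itself is sound and genuinely different from the paper's: taking $w^*$ to be the rightmost $s$-marked node in pre-order with $\mathrm{pre}(w^*)\le\mathrm{pre}(v)$, and observing that every $s$-marked ancestor of $v$ is an ancestor of $w^*$, so the answer is the deepest $s$-marked ancestor of $w^*$ of depth at most $\mathrm{depth}(\mathrm{LCA}(v,w^*))$, is correct. The paper never forms your ``$s$-forest'': it heavy-path decomposes $\cT$ itself, keeps for every heavy path $\mu$ and string $s$ a predecessor structure $E_{\mu,s}$ over the depths of the $s$-marked nodes on $\mu$ (answerable in $O(1)$ I/Os by a B-tree or fusion tree precisely because the height bound keeps both the number of elements and the key universe at $O(B\log^3 n)$), and it locates the heavy path containing the answer with a single query to a structure $D_s$ on triples $(\mathit{left}(u),\mathit{right}(u),\mathit{hdepth}(u))$ that, given $(a,b)$, returns a point with $l\le a$, $r\ge b$ and extremal heavy-path depth; the constant-time claim for $D_s$ rests on the third coordinate being $O(\log n)$-bounded (an extension of a cited result, with details deferred).

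The gap is that the two primitives your query relies on are not available at the claimed cost. First, the ``constant-I/O predecessor microstructure'' on $\{\mathrm{pre}(w):w\in L_s\}$ does not exist in general: the keys live in $[1..m]$, $|L_s|$ can be $\Theta(m)$, and linear-space predecessor with $O(1)$ I/Os is ruled out by predecessor lower bounds whenever $B$ is small (one I/O delivers only $Bw$ bits); the best you can assume is $O(\log\log n)$ I/Os, and avoiding exactly this raw predecessor is what the bounded-third-coordinate structure $D_s$ is for in the paper. Second, your final step is a weighted-level-ancestor-type search in the $s$-forest, and heavy-path decomposing that forest still leaves $\Theta(\log m')$ data-dependent path switches on the ancestor chain of $w^*$; these are pointer hops through $\Theta(\log m')$ distinct blocks and cannot be ``absorbed into precomputed lookup tables,'' while the height bound only shrinks individual depth values, not the chain length ($\Theta(B\log^3 n)$ nodes, so even one heavy path's depth sequence need not fit in $O(1)$ blocks --- the within-path search must itself be a fusion-tree/B-tree search as in the paper). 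What is missing is an analogue of the paper's $D_s$ query that jumps directly to the heavy path containing the answer; that is the technical crux of the lemma. Note also that the paper's own proof, as written, only concludes $O(\log\log n)$ I/Os (which is how Lemma~\ref{lemma:jumpy0} uses it), so even a repaired version of your plan with a y-fast trie for the pre-order predecessor would match that bound rather than the $O(1)$ stated.
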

\begin{proof}
  We consider a heavy-path decomposition of $\cT$ and represent it as a tree $\bT$. Each  node $\nu$ of $\bT$ represents some heavy path $\pi_{\nu}$ in $\cT$. The node $\nu\in \bT$ is a child of a node  $\mu\in \bT$ if the highest node on $\nu$ is a child of some node on $\mu$.  Let $hdepth(u)$ denote the depth of the heavy path that contains  a node $u$ in $\bT$. That is, $hdepth(u)=h_u-1$ where $h_u$ is the number of heavy paths traversed when we ascend from $u$ to the root node.   For a node $u$ let $left(u)$ and $right(u)$ denote  indices of its leftmost and  rightmost leaf descendants respectively.

For every string $s$, we represent nodes marked with $s$ as three-dimensional points and store them in a  data structure $D_s$. If a node $u$ on a heavy path $\mu$ is marked with $s$, then $D_s$ contains a point $(left(u),right(u),hdepth(\mu))$. $D_{s}$ answers the following queries: for any $a<b$ report some point $(l,r,h)\in D_s$ such that $l\le a$, $r\ge b$, and $h$ is maximized.  By extending the result described in~\cite[Lemma 8]{ChanNRT18}, we can implement $D_s$ in linear space, so that queries are answered in $O(1)$ time. 
Details will be given in the full version of the paper. 

For every heavy path $\mu$ that contains at least one node marked with a string $s$, we keep a data structure $E_{\mu,s}$. $E_{\mu,s}$ contains the depth of all nodes in $\mu$ marked with $s$ and answers predecessor queries. Since the depth of $T$ is bounded by $B^3\log^3 n$, each data structure $E_{\mu,s}$ stores $O(B^3 \log^3 n)$ elements. Hence, $E_{\mu,s}$ supports predecessor queries  in $O(1)$ I/Os: if $B> \log^3 n$, we implement $E_{\mu,s}$ as a B-tree and if $B\le\log^3 n$, we implement $E_{\mu,s}$ as a fusion tree.  Each node  $u$ is represented  by a point $(left(u), right(u),hdepth(u))$. 

In order to find the lowest ancestor $u$  of a node $v$ marked with a string $s$, we identify the heavy path $\mu$ that contains  $u$.  This can be done in $O(\log\log n)$ I/Os using $D_s$: if some heavy path $\nu$ contains an ancestor $u$ of $v$ marked with $s$, then $D_s$ contains a point $(left(u),right(u),hdepth(\nu))$ satisfying $left(u)\le left(v)$, $right(u)\ge right(v)$. Suppose that $u$ is the lowest marked ancestor of $v$ and $u$ is on a heavy path $\mu$. Then $D_s$ contains at least one point $(l,r,h)$ satisfying $l\le left(v)$, $r\ge right(v)$, and $h=hdepth(\mu)$. Also there is no point $(l',r',h')$ in $D_s$ such that  $l'\le left(v)$, $r'\ge right(v)$, and $h'>hdepth(\mu)$. 

We find a point $(left(u'),right(u'),hdepth(u'))$ in $D_s$ such that $left(u')\le left(v)$, $right(u')\ge right(v)$ and $hdepth(u')$ is minimized.  Clearly $u'$ is on the same heavy path $\mu$ as the lowest marked ancestor. When $\mu$ is found, we can compute the lowest marked ancestor $u$ using $E_{\mu,s}$; it suffices to find the predecessor of $depth(v)$ in $E_{\mu,s}$ where $depth(v)$ denotes the depth of the node $v$.  Since $\mu$ is found in $O(\log\log n)$ I/Os and $u$ is found in $O(1)$ I/Os, the total query cost is $O(\log\log n)$. 
\end{proof}

\begin{lemma}
  \label{lemma:domin-extend}
Let $V$ denote a set of $n$ three-dimensional points $(x,y,z)$ satisfying $1\le x,y\le n$ and $1\le z\le\log n$. There exists an $O(n)$-space data structure that answers the following queries in $o(1)$ time:  for any $q_x$ and $q_y$, find the   highest value $z$ such that $(x,y,z)\in V$,  $x\le q_x$,  and $y\ge q_y$.
\end{lemma}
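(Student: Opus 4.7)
The plan is to reduce the 3D query to $\log n$ parallel 2D dominance emptiness checks, one per $z$-level, and then extend the linear-space, constant-time 2D structure of~\cite{ChanNRT18} by word-level aggregation over the $O(\log n)$ levels. For each level $z$, let $V_z=\{\,(x,y):(x,y,z)\in V\,\}$. Sort $V_z$ by $x$ and compute the step function $M_z(q_x)=\max\{\,y:(x,y)\in V_z,\ x\le q_x\,\}$. Then the sub-query at level $z$ is positive iff $M_z(q_x)\ge q_y$, and the lemma answer is the largest such $z$. Storing each $M_z$ as a list of breakpoints uses $O(|V_z|)$ words, summing to $O(n)$; a single level can be probed in $O(1)$ time using the predecessor-on-sorted-$x$ subroutines of~\cite{ChanNRT18}.

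The nontrivial step is picking the maximum witnessing $z$ in $O(1)$ rather than scanning all $\log n$ levels. First I would view the answer as a function of the bit-vector $\vec{M}(q_x)=(M_1(q_x),\ldots,M_{\log n}(q_x))$: the answer is the highest index $z$ with $\vec{M}(q_x)[z]\ge q_y$, obtainable from $\vec{M}(q_x)$ by one word-level comparison against $q_y$ followed by a $\mathrm{msb}$ extraction. Since $\vec{M}(q_x)$ occupies $\Theta(\log^2 n)$ bits and changes at most $n$ times as $q_x$ grows, I would sample $\vec{M}$ at every $(\log n)$-th point in $x$-sorted order and store the sampled vectors explicitly, using $O((n/\log n)\cdot \log^2 n)=O(n\log n)$ bits $=O(n)$ words. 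Between two samples, the at most $\log n$ updates $(x_i,y_i,z_i)$ are stored word-packed inside a single $O(\log^2 n)$-bit ``block descriptor'', contributing $O(n)$ further words across all $O(n/\log n)$ blocks.

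At query time I would (i) locate the block containing $q_x$, (ii) load its base vector $\vec{M}(q_x^*)$, (iii) apply pointwise maxes from those updates $(x_i,y_i,z_i)$ with $x_i\le q_x$, and (iv) compare to $q_y$ and take the top bit. The main obstacle will be carrying out steps (ii)--(iv) in a constant number of word-RAM operations despite $\vec{M}$ spanning $\Theta(\log n)$ words. My plan is to maintain per block a precomputed table indexed by a compact signature of $q_y$ (its rank among the $O(\log n)$ $y$-values appearing either in the base samples that any entry of $\vec{M}$ can attain inside the block or in the block's updates) and the position of $q_x$ within the block; this table directly returns the winning level. Since each block has only $O(\log n)$ updates and $O(\log n)$ distinct $y$-strata, the table has $O(\mathrm{polylog}\,n)$ entries of $O(\log\log n)$ bits each, fits in $O(\log n)$ bits, and can be shared across blocks by storing only an index into a global dictionary. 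The ball-inheritance / fractional-cascading machinery of~\cite{ChanNRT18} then routes the rank of $q_y$ in the relevant per-block list in $O(1)$ time, closing the construction and yielding $O(n)$ space with $O(1)$ query time.
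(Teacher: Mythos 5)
First, a point of reference: the paper itself gives no proof of Lemma~\ref{lemma:domin-extend}; it is invoked inside the proof of Lemma~\ref{lemma:marked} with the remark that the details (an extension of \cite[Lemma~8]{ChanNRT18}) ``will be given in the full version.'' So your proposal cannot be matched against an in-paper argument, and it has to stand on its own. Your reduction is sound up to a point: the per-level step functions $M_z$, the observation that the answer is the highest $z$ with $M_z(q_x)\ge q_y$, the fact that the vector $\vec{M}$ changes at most $n$ times over the sweep, and the $O(n)$-word cost of sampling it every $\log n$ changes are all correct. (Also, the lemma's ``$o(1)$ time'' is surely a typo for $O(1)$, which is what you target.)

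The gap is in the step you yourself flag as the obstacle. Your per-block table is indexed by the rank of $q_y$ among $\Theta(\log n)$ values and by the position of $q_x$ among $\Theta(\log n)$ positions, so it has $\Theta(\log^2 n)$ entries of $\Theta(\log\log n)$ bits: that is $\Theta(\log^2 n\log\log n)$ bits per block, not ``$O(\log n)$ bits,'' and over $n/\log n$ blocks it is $\Theta(n\log\log n)$ words, exceeding the $O(n)$ budget. The proposed rescue --- sharing tables through a global dictionary --- does not work: the table's contents are determined by the rank-space configuration of the block (the interleaving of the $\Theta(\log n)$ base-vector entries with the block's update values, their level labels, and the $x$-order of updates), and there are $2^{\Theta(\log n\log\log n)}=n^{\Theta(\log\log n)}$ such configurations, so the dictionary can neither be enumerated in advance nor bounded by deduplication in the worst case. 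Separately, computing the rank of $q_y$ among $\Theta(\log n)$ arbitrary keys in $O(1)$ time is not something ball inheritance or fractional cascading gives you; it needs a constant-depth fusion-tree node ($\log_{w^{1/5}} w=O(1)$ levels), which does work but must be said. A cleaner route to close the hole: for each sampled vector keep only its right-to-left strict maxima (the answer index is always one of these, and their $y$-values are strictly decreasing in $z$), so the base-vector part of the query becomes a single predecessor search over at most $\log n$ keys, answerable in $O(1)$ time with $O(\log n)$ words per block; the residual problem is then the same dominance-max-$z$ query restricted to the $\le\log n$ in-block updates, which still needs its own $O(1)$-time, $O(\log n)$-words-per-block treatment and is where the real remaining work lies.
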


\section{Weighted Level Ancestor Queries}
\label{sec:wlacons}
A weighted level ancestor query $(u,x)$ on a suffix tree $\cT$ asks for the highest ancestor $w$ of the node $u\in \cT$ such that the string depth of $w$ is not smaller than $x$.
\begin{lemma}
  \label{lemma:wlacons}
There exists a data structure that answers weighted level ancestor queries on a suffix tree $\cT$ in $O(\log \log m)$ I/Os, where $m$ is the total number of nodes in $\cT$. This data structure uses space $O(m)$ and can be constructed in $O(\sort(m))$ I/Os.
\end{lemma}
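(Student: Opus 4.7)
The plan is to reduce the weighted level-ancestor query to a binary search over a heavy-path decomposition followed by a single predecessor query inside one heavy path. First I would compute the heavy-path decomposition of $\cT$ from an Euler tour and subtree sizes, both obtainable in $O(\sort(m))$ I/Os by standard external-memory primitives. For every node I record its string depth, the identifier of its heavy path $\pi$, and the top node $t(\pi)$ of $\pi$ together with its string depth. Let $H$ denote the contracted heavy-path tree whose nodes are the heavy paths of $\cT$; the depth of $H$ is $O(\log m)$ because moving from a heavy path to its parent at least doubles the size of the enclosing subtree.

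Consider the chain $\pi_1,\pi_2,\ldots,\pi_k$ of heavy paths encountered on the root path of a query node $u$, listed bottom-up, and let $d_j$ denote the string depth of $t(\pi_j)$; these values are strictly decreasing with $d_k=0$. Let $j_0$ be the largest index with $d_{j_0}\ge x$ and let $p_{j_0}$ be the string depth of $\mathrm{parent}(t(\pi_{j_0}))$, a node that lies on $\pi_{j_0+1}$. A short case analysis establishes the dichotomy that the answer $w$ equals $t(\pi_{j_0})$ when $p_{j_0}<x$, and otherwise $w$ is the node on $\pi_{j_0+1}$ whose string depth is the smallest value at least $x$: for $j>j_0+1$ every ancestor of $u$ on $\pi_j$ has depth below $d_{j_0+1}<x$, while the ancestors of $u$ on $\pi_{j_0+1}$ have depths covering the interval $[d_{j_0+1},p_{j_0}]$, so the unique topmost ancestor of depth $\ge x$ sits where claimed.

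The query then has two phases. I binary-search $j\in[1,k]$ to locate $j_0$; each probe issues a level-ancestor query in $H$ to obtain $\pi_j$ and reads its top depth $d_j$. Since $k=O(\log m)$, the search uses $O(\log\log m)$ probes. I would implement level ancestor on $H$ via a ladder-plus-jump-pointer layout packed in external memory so that each probe costs $O(1)$ I/Os, using $O(m)$ space and $O(\sort(m))$ construction time. Once $j_0$ is known, the case $p_{j_0}<x$ is settled in $O(1)$ I/Os, and otherwise a predecessor query on the string depths along $\pi_{j_0+1}$ delivers $w$. For every heavy path I maintain a $y$-fast trie layered over a small $B$-tree on its depths, giving $O(\log\log m)$ predecessor I/Os, linear space per path and linear build time once depths are sorted; summed over paths this gives $O(m)$ total space and $O(\sort(m))$ construction I/Os.

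The main obstacle is the external-memory level-ancestor data structure on $H$: the classical ladder-plus-jump-pointer construction is RAM-oriented, so realising $O(1)$-I/O queries requires carefully packing ladders and jump pointers into contiguous blocks and batch-assembling them with sorting passes over the Euler tour of $H$. If the $O(1)$-I/O bound proves awkward, a weaker $O(\log\log m)$-I/O level-ancestor structure---for instance a van~Emde~Boas-style table keyed by depth in $H$, built in $O(\sort(m))$ I/Os---already suffices, because the total query time is $O(\log\log m)$ in any case. The heavy-path decomposition and the per-path predecessor trees are then routine external-memory sorting-based computations, so the overall construction cost remains $O(\sort(m))$ I/Os and the total space is $O(m)$.
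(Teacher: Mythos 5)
Your reduction of the query to (i) locating the correct heavy path $\pi_{j_0+1}$ on the root path and (ii) a single predecessor query over string depths inside that path is sound, and the dichotomy you prove ($w=t(\pi_{j_0})$ when $p_{j_0}<x$, otherwise $w$ lies on $\pi_{j_0+1}$) is correct. However, the load-bearing component of your query algorithm --- an external-memory level-ancestor structure on the contracted heavy-path tree $H$ with $O(1)$ I/Os per probe, $O(m)$ space, and $O(\sort(m))$ construction --- is exactly the part you leave as an assertion, and your proposed escape hatch does not work: if each level-ancestor probe costs $O(\log\log m)$ I/Os, your binary search over $k=O(\log m)$ heavy paths performs $O(\log\log m)$ probes and the total becomes $O((\log\log m)^2)$ I/Os, not $O(\log\log m)$. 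So as written the argument either rests on an unproved external-memory primitive (batch-building ladders and jump pointers in $O(\sort(m))$ I/Os is plausible but nontrivial; naive pointer doubling already costs $O(\sort(m)\log m)$) or, with the fallback, fails to meet the claimed query bound.

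The paper avoids needing level ancestors on $H$ altogether. It first splits $\cT$ into \emph{big} nodes (at least $\max(B/2,\log n)$ leaf descendants) and small subtrees hanging off them; the tree $\cT^b$ of big nodes has only $O(m/B)$ (resp.\ $O(m/\log n)$) leaves. For each leaf-to-root path of $\cT^b$ it stores \emph{explicitly} the string depths of the tops of all centroid paths met along that path, in a predecessor structure; because there are so few such leaves, these $O(\log m)$-length lists cost $O(m)$ space in total and are generated in $O(\sort(m))$ I/Os by an Euler-tour sweep that maintains the current list of path tops. Phase (i) of the query then becomes a \emph{single} predecessor query on that list rather than a binary search requiring random access to the $j$-th heavy path. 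The small subtrees are handled separately: for $B>2\log n$ each fits in $O(1)$ blocks, and for $B\le 2\log n$ each has $O(\log n)$ nodes, so walking its $O(\log\log n)$ centroid paths is within budget. If you want to salvage your route, you must either supply the $O(\sort(m))$-I/O construction of a constant-I/O level-ancestor structure on $H$, or adopt a sparsification of the kind the paper uses so that the chain $(d_1,\dots,d_k)$ is stored explicitly only where it is affordable.
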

\begin{proof}
Let $\cT$ denote a  suffix tree with $m$ nodes. We will say that a node $u$ of $\cT$ is \emph{big} if it has at least $\max(B/2,\log n)$ leaf descendants, otherwise a node is  small.  We visit all nodes of a tree and identify all big nodes $u$. We also identify all subtrees $\cT_v$ of $\cT$, such that the root $v$  of $\cT_v$ is a small node, but the parent of $v$ is a big node. If $B> 2\log n$, we store all subtrees $\cT_v$ in such way  that  each $\cT_v$ fits into at most two blocks of memory. Traversing the tree in depth-first order~\cite{ChiangGGTVV95}, we can classify all nodes into big and small and identify all subtrees $\cT_v$ in $O(\sort(m))$ I/Os, where $m$ is the total number of nodes.

 Let $\cT^b$ denote the subtree of $\cT$ induced by all big nodes. The total number of leaves in $\cT^b$ is $O(n/B)$. We construct the weighted level ancestor data structure on $\cT^b$. We compute the centroid path decomposition of $\cT^b$.  Weighted level ancestor queries on $\cT^b$ can be answered as follows. For every centroid path $\pi$, we construct a predecessor data structure. We also store the string depth of the highest node on every heavy path.  Furthermore for every leaf-to-root path  $\pi_{v}$ in $\cT^b$ we construct a separate data structure; this data structure  stores the string depth of the highest node on every centroid path that is intersected by $\pi_{v}$ and answers predecessor queries.   We can generate the centroid path decomposition of $\cT^b$ in $O(m/B)$ I/Os. All predecessor data structures on centroid paths can be constructed in $O(m/B)$ I/Os because the total number of nodes in $\cT^b$ is $O(m)$ and the number of paths is $O(m/B)$.  We can generate predecessor data structures on all paths $\pi_v$ in $O(\sort(m))$ I/Os using Euler tour of $\cT^b$~\cite{ChiangGGTVV95}. Finally we create data structures for leaf-to-root paths $\pi_v$. We execute the Euler tour of $\cT^b$ and mark all nodes $u$, such that $u$ is the  highest node in its centroid path. Then we traverse the list of nodes (in the Euler-tour order) one more time. Every time when a marked node $u$ is visited for the first time, we add the node $u$ and its string depth  to the list $C$. When 
a leaf node $v$ is visited, we copy all nodes from $C$ to the data structure for the root-to-leaf path $\pi_v$. When a marked node $u$ is visited for the second time, we remove $u$  from the list $C$. If $B> \log n$, the  cost of traversing the tree $\cT^b$ and updating $C$ is $O(m/B)$. The cost of generating data structures $\pi_v$ is $O(\frac{m\log n}{B^2})=O(m/B)$ because $\cT^b$ has $O(m/B)$ leaves. If $B< \log n$, the cost of traversing $\cT^b$ and updating $C$ is also $O(m/B)$ and the cost of generating $\pi_v$ is $O(\frac{(m/B)\log n}{\log n})=O(m/B)$ because $\cT^b$ has $O(n/\log n)$ leaves.

Now we describe weighted ancestor structures for subtrees $\cT_v$. If $B> 2\log n$,  each$\cT_v$ is stored in at most two blocks of memory; hence a weighted level  ancestor  query on $\cT_v$ is answered in $O(1)$ I/Os. If $B< 2\log n$, each $\cT_v$ has $O(\log n)$ nodes. In this case our approach is very similar to the data structure used for $\cT^b$. We can obtain the centroid path decomposition of $\cT_v$ in $O(\sort(m_v))$ I/Os where $m_v$ is the number of nodes in $\cT_v$. With  each node of $\cT_v$ we store its centroid path $\mu$ and  the highest node on the centroid path $\mu$. We also keep a pointer to the parent node and the string depth of each node. For every centroid path we store a predecessor data structure described in  the previous paragraph.

A weighted ancestor query for a leaf $\ell$ and a value $x$ can be answered as follows. We identify the subtree $\cT_v$ that contains $\ell$. Let $v$ be the root node of the subtree $\cT_v$. If the string depth of $v$ is larger  than $x$, we answer a query on $\cT^b$. For the path $\pi_v$ from $v$ to the root node, we find the highest centroid path $\pi_c$ intersected by $\pi_v$ such that the string depth of the highest node on $\pi_c$ is larger than $x$. 
Let $\pi'_c$ denote the centroid path directly above $\pi_c$ (i.e., the path $\pi'_c$  contains the parent of the highest node on $\pi_c$). Using the predecessor data structure, we find the highest node on $\pi'_c$ with string depth larger than $x$. This node is the weighted level ancestor. 

If the string depth of $v$ is smaller  than $x$, we answer the ancestor query on $\cT_v$. If $B> 2\log n$, $\cT_v$ fits into $O(1)$ blocks and the query is answered in $O(1)$ I/Os. 
If $B< \log n$, $\cT_v$ has $O(\log n)$ nodes. Let $\pi_{\ell}$ denote the path from $\ell$ to $v$. We visit all centroid paths that intersect $\pi_{\ell}$ (starting with the lowermost path) until we find the centroid path $\pi_c$, such that the string depth of the highest node on $\pi_c$ is smaller than $x$. Since $\cT_v$ has $O(\log n)$ nodes, $\pi_{\ell}$ intersects $O(\log\log n)$ centroid paths. Hence we can find $\pi_c$ in $O(\log \log n)$ I/Os. Using the predecessor data structure on $\pi_c$, we can find the weighted level ancestor in $O(\log \log n)$ I/Os. 
 
\end{proof}

Using the same approach, we can also construct a data structure for the marked ancestor queries, described in Lemma~\ref{lemma:marked}. 

}

\bibliographystyle{plain}
\bibliography{paper}

\end{document}